\newtheorem{proposition}{Proposition}
\newtheorem{lemma}{Lemma}
\theoremstyle{remark}
\newtheorem{remark}{Remark}
\newcommand{\argmax}{\mathop{\mathrm{argmax}}}
\newcommand{\argmin}{\mathop{\mathrm{argmin}}}
\begin{document}

\begin{frontmatter}

  \title{Inferring   sparse Gaussian graphical models with latent structure} 
  
  \runtitle{Inferring sparse GGM with latent structure}

 \begin{aug}
   \author{\fnms{Christophe} \snm{Ambroise}},
   \author{\fnms{Julien} \snm{Chiquet}} and 
   \author{\fnms{Catherine} \snm{Matias}}
   \ead[label=e1]{christophe.ambroise}
   \ead[label=e2]{julien.chiquet}
   \ead[label=e3]{catherine.matias@genopole.cnrs.fr}
   \ead[label=u]{http://stat.genopole.cnrs.fr}

   \address{Laboratoire Statistique et G\'enome\\
     523, place des Terrasses de l'Agora\\
     91000 \'Evry, FRANCE\\
     \printead{e1,e2,e3}\\
     \printead{u}} 

   \affiliation{CNRS  UMR 8071 \& Universit\'e d'\'Evry}
   \runauthor{C. Ambroise, J. Chiquet and C. Matias}
 \end{aug}

 \begin{abstract}
   Our  concern  is   selecting  the  concentration  matrix's  nonzero
   coefficients   for  a   sparse  Gaussian   graphical  model   in  a
   high-dimensional setting.  This corresponds to estimating the graph
   of conditional  dependencies between the variables.   We describe a
   novel  framework taking  into  account a  latent  structure on  the
   concentration  matrix. This  latent structure  is used  to  drive a
   penalty  matrix  and thus  to  recover  a  graphical model  with  a
   constrained  topology.  Our   method  uses  an  $\ell_1$  penalized
   likelihood  criterion.   Inference  of  the  graph  of  conditional
   dependencies between  the variates and  of the hidden  variables is
   performed   simultaneously   in   an   iterative   \textsc{em}-like
   algorithm.   The  performances  of  our method  is  illustrated  on
   synthetic  as  well as  real  data,  the  latter concerning  breast
   cancer.
 \end{abstract}

 \begin{keyword}[class=AMS]
   \kwd[Primary ]{62H20}
   \kwd{62J07}
   \kwd[; secondary ]{62H30}
 \end{keyword}



 \begin{keyword}
   \kwd{Gaussian graphical model}
   \kwd{Mixture model}
   \kwd{$\ell_1$-penalization}
   \kwd{Model selection}
   \kwd{Variational inference}
   \kwd{EM algorithm}
 \end{keyword}

 \tableofcontents

\end{frontmatter}

\section{Introduction}

Estimating the  concentration matrix  (namely the  inverse  of the
covariance matrix) of a  Gaussian vector in a sparse, high-dimensional
setting  has  received  much  attention recently.   Graphical  models
provide  a convenient  setting for  modelling  multivariate dependence
patterns.  In this framework, an undirected graph is matched to the
Gaussian  random   vector,  where  each  vertex   corresponds  to  one
coordinate  of the  vector, and  an edge  is not  present  between two
vertices  if  the  corresponding  random  variables  are  independent,
conditional on the remaining variables.  Now, conditional independence
between  two coordinates  of  the Gaussian  random vector  corresponds
exactly to a  zero entry in the concentration  matrix. Thus, detecting
nonzero   elements  in  the   concentration  matrix   is equivalent  to
reconstructing  the   Gaussian  graphical  model  \citep[GGM,  see
e.g.][]{1996_Book_Lauritzen}.

We focus  here on the  crucial problem of selecting  the concentration
matrix's nonzero  coefficients. In other words, we  focus on variable
selection   rather  than   estimation.    Application  areas   include
gene-regulation  graph  inference in  Biology  (using gene  expression
microarray data),  as well as spectroscopy,  climate studies, functional
magnetic  resonance imaging,  etc. We  provide a  very  novel approach
driving  the  graph  selection  according  to  an  unobserved  modular
structure on the vertices.
\\

The  idea  of covariance  selection  first  appeared  in the  work  of
\citet{1972_Biometrics_Dempster}.  In the  so-called ``large $p$, small
$n$'' setting (namely when the  number of observations is  smaller
than the dimension of the  observed response), the need for covariance
selection is huge, as the empirical covariance matrix is no longer
regular. 

In \citet{2007_SS_drton}, a classification of the different methods for
model  selection/estimation   in  GGMs  into  three   group  types  is
suggested:  constr\-aint-based  methods,  performing statistical  tests;
Bayesian approaches; and score-based methods, maximizing a model-based
criterion.  The  multiple testing problem has been  taken into account
in  \citet{2007_SS_drton,2008_JSPI_drton}.   The  authors  perform
GGM covariance 
selection by  multiple testing  of hypotheses about  vanishing partial
correlation  coefficients.  Such  procedures may  also  be implemented
using  the  PC-algorithm  \citep{2007_JMLR_Kalisch}. Starting  from  a
complete, undirected graph, the PC-algorithm deletes edges recursively,
according to conditional independence decisions.  However, the statistical
procedure in \citet{2007_SS_drton} relies on asymptotic considerations,
a regime never attained in real situations.

Another  attempt in  this vein  is to  consider limited-order partial
correlations      \citep{2006_SAGMB_Wille,2006_JMLR_Castelo}.       In
\citet{2006_SAGMB_Wille}   the   authors   consider  only   zero   and
first-order  conditional  dependencies.  They  argue  that for  sparse
graphical   models,  these  low-order  dependencies   still  reflect
reasonably  well  the  full-order  conditional  dependency  structure.
Moreover, these dependencies  may be well estimated even  with a small
number  of observations.   In  \citet{2006_JMLR_Castelo}, the  authors
introduce a \textit{non-rejection rate} to reduce the multiple testing
and computational problems to which these approaches give rise.
\\

A        Bayesian        framework        is        proposed        in
\citet{2004_JMVA_Dobra,2005_SS_Dobra} and their method was applied for
evaluating  patterns  of association  in  large-scale gene  expression
data. The  approach is based  on \textit{dependency networks},  namely a
collection            of           conditional           distributions
$\{\mathbb{P}(X_i|X_{\backslash   i})\}$  (where   $X_{\backslash  i}$
stands  for  the set  of  all  variables  but $X_i$).   However,  such
conditional  distributions will not  in general  result in  a coherent
joint distribution.  This point  is further discussed below concerning
a similar problem appearing in \citet{2006_AS_Meinshausen}.  Moreover,
constructing  priors  on the set of concentration  matrices is not a
trivial task  (it mainly  relies on Wishart  priors). The use  of MCMC
procedures limits the range of applications to moderate-sized networks.
\\

Before  focusing  on  score-based  methods,  let  us  first  introduce
regularization procedures.   In the context of  linear regression, the
\textsc{Lasso}  (least  absolute  shrinkage  and  selection  operator)
technique   was  introduced  by   \citet{1996_JRSS_Tibshirani}.   This
procedure  performs model  selection and  parameter estimation  at the
same time.  The  idea is that ordinary least-squares criterion may be
improved  in a  sparse context,  using an  $\ell_1$-norm  penalty. The
$\ell_1$-norm penalty  shrinks the estimates to  zero while preserving
the convexity of  the optimization  problem.  Note that  the $\ell_1$-norm
penalization  is also known  as 'basis  pursuit' in  signal processing
\citep{2001_Siam_Chen}.

It is well known that if the ultimate goal is parameter estimation, model
selection  and estimation  should be  done in  a single  step. Performing the
model selection prior to parameter estimation in the
selected  model will, in fact, result in  a  non-robust  procedure. 
However,  our
primary focus here  is on model selection, as we  want to infer sparse
networks. We therefore concentrate on model selection
rather than on estimation performances.

The \textsc{Lars}  algorithm \citep{2004_AS_Efron} is one  of the most
popular techniques  for solving  the \textsc{Lasso}  problem.  It
gives  the  path  of  solutions  obtained  when  varying  the  penalty
parameter (the  penalty parameter is used  as a scaling  factor of the
$\ell_1$-norm penalty). The larger  the penalty parameter, the sparser
the \textsc{Lasso} solution.

Using     convex     optimization     techniques    \citep[see     for
instance][]{1986_book_Minoux},  the   \textsc{Lasso}  problem  may  be
stated as a primal problem,  whose dual formulation may be solved more
easily.   This approach  is taken  in  \citet{2000_JCGS_Osborne}.  The
authors obtain an iterative  algorithm, the ``homotopy method''
\citep{2000_JNA_Osborne}. Other very efficient approaches are based on
focusing on each coordinate iteratively.  Indeed, for each coordinate,
the \textsc{Lasso}  problem is solved very simply  (assuming the other
coordinates are  fixed) by soft-thresholding \citep{1995_JASA_Donoho}.
Thus,  different  'coordinate  optimization'  procedures  have  been
proposed    in    the    literature.     Following   the    work    of
\citet{1998_JCGS_Fu},    a   cyclic    procedure   is    proposed   in
\citet{2007_AAS_Friedman},  where optimization  with  respect to  each
coordinate is done  iteratively; whereas \citet{2008_AAS_Wu} propose a
greedy  approach,  computing  the  solution for  each  coordinate  and
choosing that which provides the  largest decrease in a surrogate objective
function. Note that these approaches rely on the underlying assumption
that the predictors for the regression problem are uncorrelated.
\\

Let us now come back to covariance (or concentration) matrix inference
in GGMs, using maximization of a model-based criterion.

\citet{2006_AS_Meinshausen}   were   the   first  authors   to   apply
\textsc{Lasso} techniques  for inferring a covariance  matrix in a GGM.
Their  approach is  to solve  $p$ different  \textsc{Lasso} regression
problems, where $p$ is the  dimension of the observed vector. The main
drawback of such a procedure is that a symmetrization step is required
to obtain the final network. It might, for instance, be the case that
the estimator of the regression  coefficient for  $X_i$ on $X_j$  is
zero, whereas  the  estimator for $X_j$ on  $X_i$ is not zero. \citeauthor{2006_AS_Meinshausen} propose to use either an ``AND''
or an ``OR'' final step procedure to recover an undirected correlation
graph.   However,  these  two  procedures might  result  in  different
estimates and there is no  way of choosing between them.  Moreover, as
previously  stated,   a  set  of  conditional   distributions  does  not
necessarily cohere into a joint distribution.  Using a set of possibly
non-coherent   conditional    distributions   corresponds    to   a
pseudo-likelihood  approach.   This   aspect  was  not  underlined  in
\citet{2006_AS_Meinshausen},   and    we   clarify   this    point   in
Section~\ref{sec:pseudo-likelihood}.

Subsequently,    two    other    articles,   \citet{2008_JMLR_Banerjee}    and
\citet{2007_Biometrika_Yuan} independently provided  an improvement of
the initial  work of \citet{2006_AS_Meinshausen}.  In  both works, the
problem is seen  as a penalized maximum-likelihood  (PML) problem. Instead
of considering  $p$ different regression problems,  these two articles
focus on the likelihood of the Gaussian vector, penalizing the entries
of  the  concentration matrix  with  an  $\ell_1$-norm penalty.   They
explain   how    the   PML   estimation    may   be   solved    as   a
``\textsc{Lasso}-like'' problem.  The  major issue with PML strategies
in the context of the concentration matrix estimation of GGMs is to obtain
a positive  definite estimate.  However, the approach  for solving the
problem   in   \citet{2007_Biometrika_Yuan}    is   not   suited   to
high-dimensional    settings,  in contrast to the approach proposed in
\citet{2008_JMLR_Banerjee}.     In   \citet{2007_Biometrika_Yuan},   a
non-negative garrote-type estimator  is used, and asymptotic properties
(as $n$ tends to infinity while  $p$ is held fixed) are given.  In \citet{2008_JMLR_Banerjee}, two  different algorithms
are proposed for solving problems  in a high-dimensional setting.  The
first  approach relies  on a  block-coordinate descent  algorithm. The
second   is  a   semi-definite  programming  algorithm,  based  on
Nesterov's method, which is computationally intensive.

The   next    improvement   in   this   vein   comes with
\citet{2007_BS_Friedman}.   Relying on coordinate  descent techniques,
previously described  in \citet{2007_AAS_Friedman}, the  authors revisit
\citeauthor{2008_JMLR_Banerjee}'s   first  approach  and   propose  an
efficient  algorithm to solve  the PML  estimation problem,  under the
positive definite  constraint. In fact, they use  the block coordinate
descent  approach proposed  by \citet{2008_JMLR_Banerjee} and combine it
with a second coordinate descent  method.  Our method will make use of
this approach.

To  conclude this  part, we remark  that a  completely  different shrinkage
estimate  was  proposed  by  \citet{2005_SAGMB_Schafer}  in  the  same
context  of  large-scale covariance  matrix  estimation. The  approach
consists in using a weighted  average of two different estimators, the
first being unconstrained (thus  having small bias but large variance),
the second being low-dimensional (and thus exhibiting small variance
but large bias).
\\

Now  let  us  motivate  the  use of  hidden  structures  in  networks.
Modularity  is  a  property  observed in  real  (biological)  networks
\citep[see  for  instance][]{2002_NatureGen_Ihmels}. Heterogeneity  in
the  node behaviors  is  an  important property  of  these data.   For
example,  so-called 'hubs'  are highly-connected nodes,  showing a
different behavior from  the rest of the graph  nodes.  An interesting
model capturing these network features is a mixture model for random
graphs \citep[see for instance][]{2006_SC_Daudin}. This model has been
rediscovered  many  times in  the  literature,  and  a non-exhaustive
bibliography   should   include   \citet{FH82,SN97,   NS01,Tallberg05,
  2006_SC_Daudin,  2007_report_Mariadassou,  2008_report_Zanghi}.   To
state it simply, this model assumes that each
node belongs to some unobserved group. Conditional on the node groups,
the  (weighted)  edges  are  independent and  identically-distributed
(i.i.d.)  random  variables, whose distribution depends  on the groups
of the nodes  to be connected. As we are  interested in GGMs, weighted
edges correspond to entries of the concentration matrix.
\\

In this  work, we  aim at estimating  a hidden structure,  namely node
groups, while  discovering the  network. This hidden  structure should
help us  in choosing \textit{adaptive} penalty  parameters. Indeed, we
wish to  penalize the elements of the  concentration matrix, according
to the unobserved clusters to which the nodes belong. For instance, if two nodes
belong  to the same  unobserved group,  we wish  to lower  the penalty
parameter  acting  on the  corresponding  entry  in the  concentration
matrix. Conversely,  if we increase the  penalty parameters  on the
entries  corresponding  to nodes  belonging  to  different groups,  we
shrink the  estimated coefficient to zero. Our  approach is completely
new and improves inference of sparse modular networks.
\\

Another    adaptive    \textsc{Lasso}    procedure   is    given    in
\citet{2006_JASA_Zou}, whose idea is to lower the bias of the
large  coefficients  by  adapting   the  penalty  parameter  of  each
coefficient  so  that  it  automatically scales  with  the  inferred
value. It is known that  the non-adaptive \textsc{Lasso} procedure may
result  in  inconsistent  parameter  estimation. An illustration  of  the
conflict between optimal  prediction and consistent variable selection
for     the      \textsc{Lasso}     procedure     is      given     in
\citet{2006_AS_Meinshausen}.   They proved  that  the optimal  penalty
parameter  for   prediction  gives  inconsistent   variable  selection
results, motivating the  use of another  penalty parameter
to ensure the control  of the probability of falsely  connecting two or
more  distinct connectivity components  of the  graph.  Like them, we also focus
on  optimal selection rather than on optimal  prediction. The adaptivity of
our procedure is not used for lowering the bias of large coefficients,
but  instead for constraining  the  prediction  to  fit  the  underlying
structure of the graph.

\paragraph{Model.}  Let us now briefly describe the general approach of
our   work.   The   model   will   be  presented in detail   in
Section~\ref{sec:framework}.  Let $X = (X_1,\dots,X_p)^\intercal$ be a
Gaussian random vector in  $\mathbb{R}^p$, with zero mean and positive
definite  covariance matrix  $\mathbf{\Sigma}$, namely  $  X\sim {\cal
 N}(\mathbf{0}_p,\mathbf{\Sigma}).   $  We  observe  independent  and
identically-distributed (i.i.d)  vectors  $(X^1,\dots,X^n)$ with  the
same    distribution    as   $X$.     The    matrix   $\mathbf{K}    =
\mathbf{\Sigma}^{-1}$ is  the concentration matrix of  the model.  Let
$\mathbf{S}$ be  the empirical covariance  matrix.  The log-likelihood
of the observations is given by
\begin{equation*}
  \mathcal{L}(\mathbf{K} ) 
  = \frac{n}{2} \log
  \det(\mathbf{K}) - \frac{1}{2} \sum_{k=1}^n (X^k)^\intercal \mathbf{K} X^k +c 
  = \frac{n}{2} \log  \det(\mathbf{K}) -  \frac{n}{2} \text{Tr}(\mathbf{S}
  \mathbf{K}) +c
\end{equation*}
where $c$ is a constant term.

The $\ell_1$-penalized estimator proposed by \citet{2008_JMLR_Banerjee}
is given by
\begin{equation}
  \label{eq:penal_baner}
  \widehat{\mathbf{K}} = \arg \max_{\mathbf{K} \succ 0} \; \log
  \det(\mathbf{K}) - \text{Tr} (\mathbf{S}\mathbf{K}) - \rho \|\mathbf{K}\|_{\ell_1} ,
\end{equation}
where $\mathbf{K} \succ 0$  stands for positive definiteness, $\rho>0$
is  a  penalty  parameter  and  $\|\mathbf{K}\|_{\ell_1}  =  \sum_{ij}
|K_{ij}|$.

A natural generalization of this approach is to have different penalty
parameters for different entries $K_{ij}$. Namely,
\begin{equation*}
  \log     \det(\mathbf{K})     -     \text{Tr}    (\mathbf{S}\mathbf{K})     -
  \|\boldsymbol{\rho}(\mathbf{K})\|_{\ell_1},
\end{equation*}
where    $\boldsymbol{\rho}(\mathbf{K})=(\rho_{ij}(K_{ij}))_{   i,j\in
 \mathcal{P}}$  is  a matrix  of  penalty  functions  acting on  each
entry.  As a general rule, using  as many penalty functions as there are entries in the
concentration matrix to be estimated is not meaningful.

Here,  we propose  to  take into  account  a hidden  structure on  the
correlations between the coordinates  random variables $X_k$. Thus, we
consider latent  i.i.d.  random variables  $\mathbf Z_1,\ldots,\mathbf
Z_p$  with values  in a  finite set  $\{1,\ldots,Q\}$.   Each variable
$\mathbf Z_i$  describes the  \textit{state} of $X_i$,  and we  wish to
adapt the penalty  function $\rho_{ij}$ with respect to  the states of
$X_i, X_j$. More precisely, we wish to use a criterion of the form
\begin{equation*}
  \log  \det(\mathbf{K}) - \text{Tr} (\mathbf{S}\mathbf{K}) -  \|\boldsymbol{\rho}_{\mathbf Z}(\mathbf{K})\|_{\ell_1} ,
\end{equation*}
where   $\boldsymbol{\rho}_{\mathbf{Z}}(\mathbf{K})=(\rho_{\mathbf{Z}_i
  \mathbf{Z}_j}(K_{ij}))_{ i,j\in \mathcal{P}}$  is a matrix of random
penalty  functions  whose  entries  depend  on  the  latent  structure
$\mathbf  Z=\mathbf  Z_1,\ldots, \mathbf  Z_p$.   However, the  hidden
structure is not supposed to be  known, thus we cannot rely on the
previous  criteria.   Intuitively,  following  the principle  of
Expectation-Maximization        (\textsc{em})       algorithm       of
\citet{1977_JRSSB_Dempster}, the idea will be to  replace  the   unobserved  value
$\boldsymbol{\rho}_{\mathbf{Z}}(\mathbf{K})$   with   its  conditional
expectation                  $\mathbb{E}(\boldsymbol{\rho}_{\mathbf{Z}}
(\mathbf{K})|X^1, \ldots,X^n; \mathbf{K}^{(m)})$ under some model with
parameter $\mathbf{K}^{(m)}$, and iterate the following steps
\begin{itemize}
\item[(\textsc{e})] Compute $\mathbb{E}(\boldsymbol{\rho}_{\mathbf{Z}}
  (\mathbf{K})|X^1, \ldots,X^n; \mathbf{K}^{(m)})$
\item[(\textsc{m})]  Update $\mathbf{K}^{(m+1)}  = \argmax_{\mathbf{K}
    \succ      0}      \mathbb{E}      (\boldsymbol{\rho}_{\mathbf{Z}}
  (\mathbf{K})|X^1, \ldots,X^n; \mathbf{K}^{(m)})$.
\end{itemize}

One of our  aims is to provide  a very simple framework for such an
analysis. 

Note that the $\ell_1$-norm used here acts on diagonal elements of the
matrix  $\mathbf  K$. It  is  counter-intuitive  to penalize  diagonal
elements of the concentration matrix, as these do not reflect sparsity
in the correlation structure. However, from a technical point of view,
this strategy  ensures  that  the  procedure  will  select  a  positive  definite
estimator  (see   Remark~\ref{rem:pen_diag}).   This  point   was  not
emphasized  in  the   previous  procedures  using  $\ell_1$  penalized
likelihood  of  GGMs.

\paragraph{Road-map.}  
In Section~\ref{sec:framework} we present the model and the penalized
maximum-likelihood criterion on which we base our inference procedure,
described in Section~\ref{sec:inference}.  This procedure relies on
a variational \textsc{em} algorithm, combined with a \textsc{Lasso}-like
procedure.  Section~\ref{sec:pseudo-likelihood} explains how
\citeauthor{2006_AS_Meinshausen}'s approach may be interpreted as a
penalized pseudo-likelihood method.  Finally, Section~\ref{sec:numexp}
illustrates the performance of the method on synthetic data, for which
an \verb:R:--package, \texttt{SIMoNe} (Statistical Inference for
Modular Network), can be downloaded from the first author's website.  We
also test our algorithm on a real data set provided by
\citet{2006_JCO_hess} and concerning $n=133$ patients with breast
cancer treated using chemotherapy.  According to \citet{2006_JCO_hess}
and \citet{2008_BMC_Natowicz}, the patient response to the treatment can
be classified either as a pathologic complete response (pCR), or
as a residual disease (not-pCR). The prediction of the patient response is
achieved accurately by studying the expression levels of a limited
number of genes ($p=26$).  Our algorithm is applied on each class of
patients (pcR and not-pCR).  Two distinct gene-regulatory networks are
thus inferred, showing a very different structure according to the
selected class of patients.


\section{A latent structure model for network inference}\label{sec:framework}

In this  section we present  a framework for  modelling heterogeneity
among dependencies  between the variables.   To this end,  let us
first  recall  classical  notations  from  Gaussian  Graphical  Models
\citep[see][for elementary results about GGMs]{1996_Book_Lauritzen}.

\subsection{Gaussian graphical models: general settings}

Let   $\mathcal{P}=\{1,\dots,p\}$  be   a  set   of   fixed  vertices,
$X=(X_1,\dots,X_p)^\intercal$ a random vector describing a signal over
this  set and a  sample $(X^1,\dots,X^n)$  of size  $n$ with  the same
distribution as  $X$. 

The  vector $X$  is  assumed  to be  Gaussian  with positive  definite
covariance           matrix           ${\boldsymbol\Sigma}           =
(\Sigma_{ij})_{(i,j)\in\mathcal{P}^2}$.   No  loss  of  generality  is
involved    when   centering    $X$,    so   we    may   assume    that
$X\sim\mathcal{N}(\mathbf{0}_p,{\boldsymbol\Sigma})$.   GGMs are based
on     a    classical     result,     originally    emphasized     by
\citet{1972_Biometrics_Dempster}, claiming that  any couple of entries
$(X_i,X_j)$ with  $i\neq j$ are  independent conditional on  all other
variables indexed by $\mathcal{P} \backslash \{ i,j \}$, if and only if the
entry $({\boldsymbol\Sigma}^{-1})_{ij}$  is zero.  The  inverse of the
covariance  matrix   $\mathbf{K}  =  (K_{ij})_{(i,j)\in\mathcal{P}^2}=
{\boldsymbol\Sigma}^{-1}$,  known as  the  concentration matrix,  thus
describes  the conditional  independence structure  of  $X$. Moreover,
each  entry  $K_{ij}, i\neq  j$  is  directly  linked to  the  partial
correlation  coefficient $r_{ij|\mathcal{P}\backslash\{i,j\}}$ between
variables     $X_i$     and    $X_j$.      In     fact,    we     have
$r_{ij|\mathcal{P}\backslash\{i,j\}} = -K_{ij} / \sqrt{K_{ii} K_{jj}}$,
and also $K_{ii}= \textrm{Var}(X_i|X_{\mathcal{P}\backslash i})^{-1}$.
Hence,  after  a simple  rescaling,  the  matrix  $\mathbf{K}$ can  be
interpreted as  the adjacency matrix  of an undirected  weighted graph
$\mathcal{G}$ representing  the partial correlation  structure between
variables  $X_1,\ldots,X_p$.   This graph  has  no  self-loop, with  a
random set  of edges composed by  all pairs $(i,j)$  such that $K_{ij}
\neq 0$.   Note that we are seeking only pairs of  vertices $(i,j)$ such
that  $i  < j$,  since  there is  no  self-loop,  and since  $K_{ij}=K_{ji}$.
Inferring  nonzero entries  of  $\mathbf{K}$ is  equivalent to  inferring
$\mathcal{G}$,  and is therefore a highly relevant issue  in this
framework.

\subsection{Providing the network with a latent structure}

Let  us  now extend  the  modeling by  providing  the  network with  an 
internal latent structure.  

The model  proposed in \cite{2006_SC_Daudin} attempts a  better fit of
data, as it places the network $\mathcal{G}$ in the mixture framework,
in order  to take account of the  heterogeneity among vertices. The
same general mixture model  is adopted here: vertices of $\mathcal{P}$
are  distributed among  a set  $\mathcal{Q}=\{1,\dots,Q \}$  of hidden
clusters  that model  the latent  structure of  the network.   For any
vertex  $i$, the  indicator  variable $Z_{iq}$
is equal to  $1$ if  $i\in q$ and  $0$ otherwise, hence  describing which
cluster the vertex  $i$ belongs to.  A vertex is  assumed to belong to
one  cluster only,  thus the  random vector  $\mathbf{Z}_i  = (Z_{i1},
\dots, Z_{iQ})$ obviously follows a multinomial distribution. Namely, 
\begin{equation}
  \label{eq:distribution_Zi}
  \mathbf{Z}_i \sim \mathcal{M}(1,\boldsymbol \alpha),
\end{equation}
where $\boldsymbol \alpha =  (\alpha_1,\dots,\alpha_Q)$ is a vector of
cluster proportions, so that $\sum_q \alpha_q=1$.

\paragraph{The concentration  matrix structure.}  We  shall now extend
the clustering of vertices from  $\mathcal{P}$  to the  concentration  matrix
$\mathbf{K}$.  Accordingly, both the existence and  the weight  of edges,
described by the off-diagonal elements of $\mathbf{K}$, will depend on
the cluster each vertex belongs to. Conditional on the events $i\in q$
and $j\in \ell$ where $q,\ell$ are clusters chosen from $\mathcal{Q}$,
each  $K_{ij}$ ($i\neq  j$)  is a  random  variable whose  probability
density function is denoted by $f_{q\ell}$, that is,
\begin{equation}
  \label{eq:conditionnal_distribution_Wij}
  K_{ij} | \left\{Z_{iq} Z_{j\ell} = 1\right\} \sim f_{q\ell}(\cdot), \quad i\neq j.
\end{equation}
It will be remarked that in this formulation  the variables $K_{ij}$ are assumed to
be  independent, conditional on  the clusters  the vertices  belong to.
Moreover,  we  are considering  only  undirected  graphs,  so  we  may assume that
$f_{q\ell}=f_{\ell     q}$.      For     technical    reasons     (see
Remark~\ref{rem:pen_diag}),  we also  assume a  prior  distribution on
diagonal elements of $\mathbf K$, namely
\begin{equation*}
  K_{ii}\sim f_0(\cdot).
\end{equation*}

Our suggestion is to adopt Laplace distributions; hence
\begin{equation}
  \label{eq:fql_Laplace}
  \forall x \in \mathbb{R} , \quad 
  f_{q\ell}(x)               =              \frac{1}{2\lambda_{q\ell}}
  \exp\left\{-\frac{|x|}{\lambda_{q\ell}}\right\}, \quad \text{and} \quad f_0(x)               =              \frac{1}{2\lambda_0}
  \exp\left\{-\frac{|x|}{\lambda_0}\right\}, \
\end{equation}
where  $\lambda_{q\ell}, \lambda_0  >0$ are  scaling  parameters and $\lambda_{q\ell}=\lambda_{\ell q}$.  Below, the
parameter  $\lambda_0$  will  be   fixed  and  not  estimated.

The reason for choosing a  Laplace  distribution is
that it is reminiscent of the $\ell_1$-norm,  itself linked to
\textsc{Lasso}-techniques  for  which appropriate  tools  are available.   In
fact, when considering the general penalized least-square problem, the
penalty  term can  be seen  as a  log-prior density  on the  vector of
parameters. In the case  of \textsc{Lasso}, the prior distribution
corresponding   to   the  $\ell_1$-norm   is   actually  the   Laplace
distribution \citep[see, e.g.][]{2001_book_Hastie}.

\paragraph*{The affiliation model.} The affiliation model is a special
case  of network  structure (to be investigated below), where there are many
different  clusters, but where the focus is restricted to  two types of edges:
edges between nodes  of the same cluster, and  edges between nodes from
different  clusters.   In  the affiliation model the  densities  $f_{q\ell}$  in
\eqref{eq:fql_Laplace} are  of only two  kinds;  that is, for
all $q,\ell\in\mathcal{Q}$, let
\begin{equation}\label{eq:affiliation}
  f_{q\ell} = \left\{ 
  \begin{array}{lll}
    f_{qq} =  f_{\text{in}}(\cdot;\lambda_{\text{in}}) & \text{if }
    q=\ell, & \text{the \emph{intra-cluster} density of edges}, \\
    f_{q\ell}    =    f_{\text{out}}(\cdot;\lambda_{\text{out}})   &
    \text{if } q\neq \ell, & \text{the \emph{inter-cluster} density of edges}.
  \end{array} \right.
\end{equation}

\subsection{The complete likelihood}

Having described the modeling of the network, we now focus on the
inference issue.

We denote  as  $\mathbf{X}$ the  $n\times  p$  matrix  that contains  the
data-set    $\{X^1,X^2,\dots,X^n\}$   row-wisely    organized,   i.e.,
$(X^k)^\intercal$  is  the  $kth$  row of  $\mathbf{X}$.   
Furthermore, we denote   as    $\mathbf{Z}    =   \{Z_{iq}\}_{i\in\mathcal{P},
 q\in\mathcal{Q}}$  the set  of  all latent  indicator variables  for
vertices. For the  sake of simplicity, the number  of clusters $Q$ and
the parameters  $\boldsymbol \alpha=(\alpha_q)_{q\in \mathcal{Q}}$ and
$ \boldsymbol \lambda  = \{\lambda_{q\ell}\}_{q,\ell \in \mathcal{Q}}$
are assumed to be known for the moment.

The data experiments $\mathbf{X}$ are the only observations available,
and from these we should like to be able to infer  the graph $\mathcal{G}$
of  conditional  dependencies or,  equivalently,  nonzero entries  of
$\mathbf{K}$.   As the  matrix  $\mathbf{K}$ has  been  given a  prior
distribution,  our  aim  is to maximize the  posterior  probability  of
$\mathbf{K}$,  given  the   data  $\mathbf{X}$,  or  equivalently,  the
logarithm of the joint distribution.   The estimate is thus defined as
follows:
\begin{equation*}
  \label{eq:posterior_to_joint}
  \widehat{\mathbf{K}} = \arg \max_{\mathbf{K}\succ 0} \mathbb{P}(\mathbf{K}
  | \mathbf{X}) = \arg \max_{\mathbf{K}\succ 0} \ \log
  \mathbb{P}(\mathbf{X},\mathbf{K}), 
\end{equation*}
where $\mathbf{K} \succ 0$ stands for positive-definiteness.

To  solve   this  problem,  we   place  ourselves  in   the  classical
complete-data  framework. The  distribution of
$\mathbf{K}$  is  only  known  conditionally  on  the  latent  structure
described  by $\mathbf{Z}$.  We denote  as $\mathcal{Z}$  the set  of all
possible clusterings over nodes from $\mathcal{P}$. The marginalization
over the latent clusters $\mathbf{Z}$ leads to
\begin{equation*}
  \label{eq:joint_lattent}
  \widehat{\mathbf{K}} = \arg\max_{\mathbf{K}\succ 0} \ \log \sum_{\mathbf{Z}\in\mathcal{Z}}
  \mathcal{L}_c(\mathbf{X},\mathbf{K},\mathbf{Z}),
\end{equation*}
where        the       so-called        complete-data       likelihood
$\mathcal{L}_c(\mathbf{X},       \mathbf{K}       ,\mathbf{Z})       =
\mathbb{P}(\mathbf{X},  \mathbf{K}, \mathbf{Z})$  is  the function  we
shall develop using an \textsc{em}-like strategy  hereafter.  For this
purpose, a closed form of $\mathcal{L}_c$ is required.

\begin{proposition}\label{prop:complete_likelihood}The        following
  relation holds for the complete-data likelihood $\mathcal{L}_c$.
  \begin{multline}
    \label{eq:complete_likelihood}
    \log   \mathcal{L}_c(\mathbf{X},\mathbf{K},\mathbf{Z})  = 
    \frac{n}{2}\left(\log          \det         (\mathbf{K})         -
      \mathrm{Tr}(\mathbf{S}\mathbf{K})\right) -\left\|\boldsymbol
      \rho_{\mathbf{Z}}(\mathbf{K}) \right\|_{\ell_1} \\
    -     \sum_{\substack{i,j\in    \mathcal{P} , i\neq j     \\
        q,\ell\in\mathcal{Q}}} Z_{iq}  Z_{j\ell} \log (2 \lambda_{q\ell})
    + \sum_{i\in \mathcal{P}, q\in\mathcal{Q}} Z_{iq} \log \alpha_q + c,
  \end{multline}
  where  $\mathbf{S} =  n^{-1} (\mathbf{X}-\bar{\mathbf{X}})^\intercal
  (\mathbf{X}-\bar{\mathbf{X}})$  is the empirical  covariance matrix,
  $c$ is a  constant term and $\boldsymbol \rho_\mathbf{Z}(\mathbf{K})
  =           \left(\rho_{\mathbf{Z}_i           \mathbf{Z}_j}(K_{ij})
  \right)_{i,j\in\mathcal{P}}$ is defined by
  \begin{equation}
    \label{eq:rho_func}
    \rho_{\mathbf{Z}_i \mathbf{Z}_j}(K_{ij}) = 
    \left\{\begin{array}{lr}
      \displaystyle \sum_{q,\ell\in\mathcal{Q}} Z_{iq}Z_{j\ell}
      \frac{|K_{ij}|}{\lambda_{q\ell}} & \text{if } i\neq j, \\[5ex]
      \displaystyle \frac{|K_{ii}|}{\lambda_{0}} & \text{otherwise}.
    \end{array}\right.
  \end{equation}
\end{proposition}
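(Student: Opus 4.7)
The plan is to decompose the joint distribution according to the hierarchy of the model, compute each factor in closed form, and then regroup the terms to match the stated expression.

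First I would use the factorization
\begin{equation*}
\mathbb{P}(\mathbf{X},\mathbf{K},\mathbf{Z}) = \mathbb{P}(\mathbf{X}\mid\mathbf{K},\mathbf{Z})\,\mathbb{P}(\mathbf{K}\mid\mathbf{Z})\,\mathbb{P}(\mathbf{Z}),
\end{equation*}
noting that $\mathbf{X}\mid\mathbf{K},\mathbf{Z}$ reduces to $\mathbf{X}\mid\mathbf{K}$ since, conditional on $\mathbf{K}$, the law of the sample does not involve the latent labels. Taking logarithms, I will then treat the three pieces separately.

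For $\log\mathbb{P}(\mathbf{X}\mid\mathbf{K})$, the $i.i.d.$ Gaussian assumption gives the standard expression $\tfrac{n}{2}\log\det(\mathbf{K}) - \tfrac{n}{2}\mathrm{Tr}(\mathbf{S}\mathbf{K})$ up to an additive constant, exactly as recalled in the introduction of the paper. For $\log\mathbb{P}(\mathbf{Z})$, the assumption \eqref{eq:distribution_Zi} that each $\mathbf{Z}_i\sim\mathcal{M}(1,\boldsymbol\alpha)$ independently yields $\sum_{i,q} Z_{iq}\log\alpha_q$ directly.

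The main bookkeeping step concerns $\log\mathbb{P}(\mathbf{K}\mid\mathbf{Z})$. Conditional on $\mathbf{Z}$, the entries of $\mathbf{K}$ are assumed independent (up to the symmetry $K_{ij}=K_{ji}$), with $K_{ij}$ Laplace with scale $\lambda_{\mathbf{Z}_i\mathbf{Z}_j}$ for $i\neq j$ by \eqref{eq:conditionnal_distribution_Wij}--\eqref{eq:fql_Laplace}, and $K_{ii}$ Laplace with scale $\lambda_0$. The key rewriting is the identity
\begin{equation*}
\log f_{\mathbf{Z}_i\mathbf{Z}_j}(K_{ij}) = -\sum_{q,\ell\in\mathcal{Q}} Z_{iq}Z_{j\ell}\left[\log(2\lambda_{q\ell}) + \frac{|K_{ij}|}{\lambda_{q\ell}}\right],
\end{equation*}
which holds because exactly one product $Z_{iq}Z_{j\ell}$ equals one. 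Summing these log-densities over all pairs and diagonal entries, I will then recognize the $|K_{ij}|/\lambda_{q\ell}$ and $|K_{ii}|/\lambda_0$ contributions as precisely $-\|\boldsymbol\rho_{\mathbf{Z}}(\mathbf{K})\|_{\ell_1}$ via the definition \eqref{eq:rho_func}, while the $\log(2\lambda_{q\ell})$ and $\log(2\lambda_0)$ contributions yield the normalizing sum in \eqref{eq:complete_likelihood} (with the diagonal piece absorbed into $c$, since $\lambda_0$ is fixed).

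The only real point to be careful about is the convention for counting pairs $(i,j)$: because $\mathbf{K}$ is symmetric, one must decide whether to sum over ordered or unordered pairs, and the factor of $2$ must be handled consistently between the Laplace normalization and the $\ell_1$ penalty so that both are written with the $i\neq j$ convention used in the statement. Once this is carried out, adding the three logarithms produces exactly \eqref{eq:complete_likelihood}.
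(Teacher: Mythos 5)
Your proposal is correct and follows essentially the same route as the paper's own proof: the same factorization $\mathbb{P}(\mathbf{X},\mathbf{K},\mathbf{Z})=\mathbb{P}(\mathbf{X}\mid\mathbf{K})\,\mathbb{P}(\mathbf{K}\mid\mathbf{Z})\,\mathbb{P}(\mathbf{Z})$, the same Gaussian and multinomial computations, and the same rewriting of the Laplace log-densities via the indicator identity. Your remark about handling the ordered-pair convention consistently between the normalization terms and the $\ell_1$ penalty is a sensible point of care that the paper passes over silently.
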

\begin{proof} Using the Bayes  rule, $\mathcal{L}_c$ divides into three
 terms:
 \begin{equation*}
   \log\mathcal{L}_c(\mathbf{X},\mathbf{K},\mathbf{Z})  = \log
   \mathbb{P}(\mathbf{X},\mathbf{K},\mathbf{Z}) \\ 
   = \log\mathbb{P}(\mathbf{X}|\mathbf{K})       +      \log
   \mathbb{P}(\mathbf{K}|\mathbf{Z}) + \log \mathbb{P}(\mathbf{Z}),
 \end{equation*}
 where    we   make use    of    the    fact   that    $\log
 \mathbb{P}(\mathbf{X}|\mathbf{K},\mathbf{Z})          =         \log
 \mathbb{P}(\mathbf{X}|\mathbf{K})$.\\

 The first term is the likelihood associated with a size-$n$ sample of
 a      multivariate     Gaussian     distribution,      since     $X
 \sim\mathcal{N}(\mathbf{0}_p,{\boldsymbol\Sigma})$.           Routine
 computations lead to
 \begin{equation*}
   \log\mathbb{P}(\mathbf{X}|\mathbf{K}) = \frac{n}{2}\log
   \det   (\mathbf{K}) - \frac{n}{2}\mathrm{Tr}(\mathbf{S} \mathbf{K}) -
   \frac{np}{2}\log (2\pi).
 \end{equation*}
 As regards  the second  term, using the expression  \eqref{eq:fql_Laplace}, we have
 \begin{multline*}
   \log\mathbb{P}(\mathbf{K}|\mathbf{Z}) =\sum_{\substack{i,j\in\mathcal{P} , i\neq j \\     
  q,\ell\in\mathcal{Q}}} Z_{iq} Z_{j\ell} \log f_{q\ell}(K_{ij})
   + \sum_{i \in \mathcal{P}}  \log    f_0(K_{ii}) \\
   =  - \sum_{\substack{i,j\in\mathcal{P} , i\neq j \\     
  q,\ell\in\mathcal{Q}}} Z_{iq} Z_{j\ell}
   \left(   \frac{{   |    K_{ij}|}}{\lambda_{q\ell}}   +   \log   (2
     \lambda_{q\ell})    \right)   -    \sum_{i    \in   \mathcal{P}}
   \frac{|K_{ii}|}{\lambda_0} - p\log (2 \lambda_{0}) .
 \end{multline*}
 From       \eqref{eq:distribution_Zi},      we       have
 $\log\mathbb{P}(\mathbf{Z}) = \sum_{i,q}  Z_{iq} \log \alpha_q$, and
 the result follows.
\end{proof}


\section{Inference strategy by alternate optimization}\label{sec:inference}

In    the    classical     \textsc{em}    framework    developed    by
\cite{1977_JRSSB_Dempster}, where  $\mathbf{X}$ is the  available data,
inferring the unknown  parameters $\mathbf{K}$ spread over a latent
structure  $\mathbf{Z}$ would  make use  of the  following conditional
expectation:
\begin{multline}
  \label{eq:conditional_expectation}
  Q\left(\mathbf{K}|\mathbf{K}^{(m)}\right)                           =
  \mathbb{E}        \left\{       \log
    \mathcal{L}_c(\mathbf{X},\mathbf{K},\mathbf{Z}) \big| \mathbf{X};
    \mathbf{K}^{(m)}\right\} \\
  =                                    \sum_{\mathbf{Z}\in\mathcal{Z}}
  \mathbb{P}\left(\mathbf{Z} \big|\mathbf{X},\mathbf{K}^{(m)}\right) \log
  \mathcal{L}_c(\mathbf{X},\mathbf{K},\mathbf{Z}) 
  =                                     \sum_{\mathbf{Z}\in\mathcal{Z}}
  \mathbb{P}\left(\mathbf{Z} \big|\mathbf{K}^{(m)}\right) \log 
  \mathcal{L}_c(\mathbf{X},\mathbf{K},\mathbf{Z}),
\end{multline}
where $\mathbf{K}^{(m)}$  is the  estimation of $\mathbf{K}$  from the
previous step of the algorithm.
\\

The  usual  \textsc{em} strategy  would  be to alternate an  \textsc{E}-step
computing            the            conditional            expectation
\eqref{eq:conditional_expectation}  with  an \textsc{M}-step  maximizing
this   quantity   over  the   parameter   of  interest   $\mathbf{K}$.
Unfortunately,              no              closed form             of
$Q\left(\mathbf{K}|\mathbf{K}^{(m)}\right)$  can be formulated  in the
present case. The technical  difficulty lies in the complex dependency
structure    contained    in    the    model.     Indeed,    $\mathbb{P}
(\mathbf{Z}|\mathbf{K})$   cannot   be   factorized,  as   argued   in
\cite{2006_SC_Daudin}.    This  makes   the   direct  calculation   of
$Q\left(\mathbf{K}|\mathbf{K}^{(m)}\right)$  impossible.   To tackle
this  problem we  use a  variational approach  \citep[see, e.g.,][for
elementary  results on  variational  methods]{2000_Book_Jaakkola}.  In
this framework,  the conditional distribution of  the latent variables
$\mathbb{P} (\mathbf{Z}|\mathbf{K}^{(m)})$  is approximated by  a more
convenient  distribution  denoted  by  $R_{m}(\mathbf{Z})$,  which  is
chosen   carefully   in   order   to   be   tractable.    Hence,   our
\textsc{em}-like algorithm  deals with the  following approximation of
the conditional expectation \eqref{eq:conditional_expectation}
\begin{equation}
  \label{eq:conditional_expectation_approx}
  \mathbb{E}_{R_m}\left\{\log\mathcal{L}_c(\mathbf{X},\mathbf{K},\mathbf{Z}) 
  \right\}  =  \sum_{\mathbf{Z}\in\mathcal{Z}}  R_m(\mathbf{Z} )  \log
  \mathcal{L}_c(\mathbf{X},\mathbf{K},\mathbf{Z}).
\end{equation}

In the following  section we develop a variational  argument in order
to     choose    an     approximation     $R_m(\mathbf{Z})    $     of
$\mathbb{P}(\mathbf{Z}|\mathbf{K}^{(m)})$. This enables us to compute the
conditional  expectation \eqref{eq:conditional_expectation_approx} and
proceed to the maximization step.

\subsection[Variational       estimation      of       the      latent
structure]{Variational    estimation   of    the    latent   structure
  (\textsc{E}-step)}\label{sec:Estep}

In this part, $\mathbf{K}$ is assumed  to be known, and we are looking
for an  approximate distribution  $R(\cdot)$ of the  latent variables.
The  variational  approach  consists   in  maximizing  a  lower  bound
$\mathcal{J}$  of the  log-likelihood $\log \mathbb{P}(\mathbf{X},\mathbf{K})$,
defined as follows:
\begin{equation}
  \label{eq:definition_J}
  \mathcal{J}\left(\mathbf{X},\mathbf{K},R(\mathbf{Z})    \right)    =
  \log\mathbb{P}(\mathbf{X},\mathbf{K}) 
  -   \mathrm{D}_{KL}\left\{R(\mathbf{Z})  \|   \mathbb{P}(\mathbf{Z}  |
    \mathbf{K})\right\} 
\end{equation}
where  $\mathrm{D}_{KL}$  is  the  K\"ullback-Leibler  divergence.  This
measures   the  difference   between   the  probability   distribution
$\mathbb{P}(\cdot|\mathbf{K})$  in   the  underlying  model   and  its
approximation $R(\cdot)$.  An intuitively straightforward choice for $R(\cdot)$
is         a         completely        factorized         distribution
\citep[see][]{2007_report_Mariadassou,2008_report_Zanghi}
\begin{equation}
  \label{eq:R_factorized}
  R_{\boldsymbol\tau}(\mathbf{Z}) = \prod_{i\in \mathcal{P}}  h_{\boldsymbol{\tau}_i}(\mathbf{Z}_i),
\end{equation}
where  $h_{\boldsymbol  \tau_i}$ is  the  density  of the  multinomial
probability    distribution   $\mathcal{M}(1;\boldsymbol\tau_i)$,   and
$\boldsymbol{\tau}_i  =  (\tau_{i1},\dots,\tau_  {iQ})$  is  a  random
vector containing  the variational  parameters to optimize.   The complete
set        of       parameters        $\boldsymbol        \tau       =
\left\{\tau_{iq}\right\}_{i\in\mathcal{P},q\in\mathcal{Q}}$ is what we are seeking
to obtain via the variational  inference. In the  case in
hand the  variational approach intuitively operates  as follows: each
$\tau_{iq}$ must  be seen as  an approximation of the  probability that
vertex $i$ belongs to cluster $q$, conditional on the data, that is,
$\tau_{iq}$ estimates $\mathbb{P}(Z_{iq} = 1 | \mathbf{K})$, under the
constraint   $\sum_{q}  \tau_{iq}=1$.    In  the   ideal   case  where
$\mathbb{P}(\mathbf{Z}|\mathbf{K})$     can    be     factorized    as
$\prod_i\mathbb{P}(\mathbf{Z}_i|\mathbf{K})$    and   the   parameters
$\tau_{iq}$  are  chosen  as  $\tau_{iq}  = \mathbb{P}(Z_{iq}  =  1  |
\mathbf{K})$, the K\"ullback-Leibler
divergence is null and the bound $\mathcal{J}$ reaches the log-likelihood.\\

The  following   proposition  gives  the  form  of   the  lower  bound
$\mathcal{J}$ to be maximized in order to estimate $\boldsymbol \tau$.
\begin{proposition} Let us assume that $R_{\boldsymbol\tau}$  can be factorized as
 in           \eqref{eq:R_factorized},           and let us denote
 $\mathcal{J}_{\boldsymbol\tau}\left(\mathbf{X},\mathbf{K}\right)   :=
 \mathcal{J}\left(\mathbf{X},\mathbf{K},R_{\boldsymbol\tau}(\mathbf{Z})\right)$.
 Then   $\mathcal{J}_{\boldsymbol  \tau}$  satisfies   the  following
 expression
 \begin{multline}
   \label{eq:J_factorized}
   \mathcal{J}_{\boldsymbol\tau}\left(\mathbf{X},\mathbf{K}\right) =
   c  -\sum_{\substack{i\in   \mathcal{P}  \\
       q\in\mathcal{Q}}} \tau_{iq}
   \log\tau_{iq}     +     \sum_{\substack{i\in    \mathcal{P}     \\
       q\in\mathcal{Q}}} \tau_{iq} \log \alpha_q \\
   - \left\|\boldsymbol \rho_{\boldsymbol\tau}(\mathbf{K})\right\|_{\ell_1}
   - \sum_{\substack{i,j\in\mathcal{P} , i\neq j \\     
  q,\ell\in\mathcal{Q}}}   \tau_{iq}     \tau_{j\ell}     \log
   2\lambda_{q\ell},
 \end{multline}
 where     $c$    does    not     depend    on
 $\boldsymbol\tau$                  and                  $\boldsymbol
 \rho_{\boldsymbol\tau}(\mathbf{K})=        (\rho_{\boldsymbol{\tau}_i
   \boldsymbol{\tau}_j}(K_{ij}))_{i,j\in  \mathcal{P}^2}$  is defined
 similarly as \eqref{eq:rho_func}, replacing $Z_{iq}$ by $\tau_{iq}$.
\end{proposition}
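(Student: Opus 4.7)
The plan is to use the standard variational identity to rewrite $\mathcal{J}_{\boldsymbol\tau}$ as an expected complete log-likelihood minus an entropy term, then evaluate both pieces under the factorized law $R_{\boldsymbol\tau}$ using Proposition~\ref{prop:complete_likelihood}.

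First, I would unpack the definition \eqref{eq:definition_J}. Writing the K\"ullback--Leibler divergence as
\begin{equation*}
  \mathrm{D}_{KL}\bigl(R_{\boldsymbol\tau}(\mathbf{Z})\|\mathbb{P}(\mathbf{Z}|\mathbf{K})\bigr)
  = \mathbb{E}_{R_{\boldsymbol\tau}}[\log R_{\boldsymbol\tau}(\mathbf{Z})]
  - \mathbb{E}_{R_{\boldsymbol\tau}}\bigl[\log \mathbb{P}(\mathbf{X},\mathbf{K},\mathbf{Z})\bigr]
  + \log \mathbb{P}(\mathbf{X},\mathbf{K}),
\end{equation*}
and substituting into \eqref{eq:definition_J}, the log-marginal cancels and I obtain the familiar evidence lower bound
\begin{equation*}
  \mathcal{J}_{\boldsymbol\tau}(\mathbf{X},\mathbf{K})
  = \mathbb{E}_{R_{\boldsymbol\tau}}\bigl[\log \mathcal{L}_c(\mathbf{X},\mathbf{K},\mathbf{Z})\bigr]
  - \mathbb{E}_{R_{\boldsymbol\tau}}[\log R_{\boldsymbol\tau}(\mathbf{Z})].
\end{equation*}

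Next I would evaluate the entropy term. Since $R_{\boldsymbol\tau}$ factorizes as in \eqref{eq:R_factorized} with each $\mathbf{Z}_i \sim \mathcal{M}(1;\boldsymbol\tau_i)$, I have $\log R_{\boldsymbol\tau}(\mathbf{Z}) = \sum_{i,q} Z_{iq}\log\tau_{iq}$, and using $\mathbb{E}_{R_{\boldsymbol\tau}}[Z_{iq}] = \tau_{iq}$, this yields $\mathbb{E}_{R_{\boldsymbol\tau}}[\log R_{\boldsymbol\tau}] = \sum_{i,q}\tau_{iq}\log\tau_{iq}$, producing the negative-entropy term in \eqref{eq:J_factorized}.

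For the expected complete log-likelihood I would plug in formula \eqref{eq:complete_likelihood}. The only stochastic factors are the $Z_{iq}$ and the bilinear quantities $Z_{iq}Z_{j\ell}$ with $i\neq j$. Here I use the key consequence of factorization: for $i\neq j$, $\mathbb{E}_{R_{\boldsymbol\tau}}[Z_{iq}Z_{j\ell}] = \tau_{iq}\tau_{j\ell}$ by independence. By linearity of expectation, the first bracket $\tfrac{n}{2}(\log\det\mathbf{K}-\mathrm{Tr}(\mathbf{S}\mathbf{K}))$ passes through unchanged (absorbed into the final formula since it does not depend on $\boldsymbol\tau$), the $\lambda_{q\ell}$ term becomes $-\sum_{i\neq j,q,\ell}\tau_{iq}\tau_{j\ell}\log(2\lambda_{q\ell})$, the prior term on the cluster proportions becomes $\sum_{i,q}\tau_{iq}\log\alpha_q$, and the $\ell_1$-penalty $\|\boldsymbol\rho_{\mathbf{Z}}(\mathbf{K})\|_{\ell_1}$ becomes $\|\boldsymbol\rho_{\boldsymbol\tau}(\mathbf{K})\|_{\ell_1}$ after the substitution $Z_{iq}\mapsto\tau_{iq}$ inside \eqref{eq:rho_func}, exactly matching the definition of $\boldsymbol\rho_{\boldsymbol\tau}$ stated in the proposition.

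Combining these two computations and collecting all terms that are independent of $\boldsymbol\tau$ into the constant $c$ (in particular the Gaussian log-determinant/trace piece and the $p\log(2\lambda_0)$ and $|K_{ii}|/\lambda_0$ contributions, which do not interact with $\boldsymbol\tau$), one arrives at \eqref{eq:J_factorized}. No step is really the ``hard'' one; the only place where care is required is the bilinear factorization $\mathbb{E}_{R_{\boldsymbol\tau}}[Z_{iq}Z_{j\ell}]=\tau_{iq}\tau_{j\ell}$ for $i\neq j$, which is precisely why the sum in the last term of \eqref{eq:J_factorized} is restricted to $i\neq j$, and the bookkeeping needed to verify that the diagonal contributions and the log-determinant/trace piece can indeed be absorbed into $c$.
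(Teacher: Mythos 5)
Your proposal is correct and follows essentially the same route as the paper: both decompose $\mathcal{J}_{\boldsymbol\tau}$ into the expected complete log-likelihood under $R_{\boldsymbol\tau}$ plus the entropy of $R_{\boldsymbol\tau}$, evaluate the entropy of the factorized multinomial law, and then apply Proposition~\ref{prop:complete_likelihood} together with $\mathbb{E}_{R_{\boldsymbol\tau}}[Z_{iq}]=\tau_{iq}$ and $\mathbb{E}_{R_{\boldsymbol\tau}}[Z_{iq}Z_{j\ell}]=\tau_{iq}\tau_{j\ell}$ for $i\neq j$. The only difference is that you derive the evidence-lower-bound identity explicitly from the Kullback--Leibler definition where the paper simply cites classical variational results, which is a harmless (indeed slightly more self-contained) presentation choice.
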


\begin{proof}Starting from  \eqref{eq:definition_J}, classical results
  on variational methods show that
  \begin{equation*}
    \mathcal{J}_{\boldsymbol\tau}\left(\mathbf{X},\mathbf{K}\right)
    =  \widehat{Q}_{\boldsymbol\tau}(\mathbf{K}) + \mathcal{H}(R_{\boldsymbol\tau}(\mathbf{Z})),
  \end{equation*}
  where  $\mathcal{H}(R_{\boldsymbol\tau}(\cdot))$ is  the  entropy of
  the        distribution       $R_{\boldsymbol\tau}(\cdot)$       and
  $\widehat{Q}_{\boldsymbol\tau}(\mathbf{K})$  is  the approximation
  of  the  complete log-likelihood conditional  expectation,  computed under  the  distribution  $R_{\boldsymbol\tau}$. Namely,
  \begin{equation}                               \label{eq:def_Q_chapo}
    \widehat{Q}_{\boldsymbol\tau}(\mathbf{K})                         =
    \mathbb{E}_{R_{\boldsymbol\tau}}
    \left\{\log\mathcal{L}_c(\mathbf{X},\mathbf{K},\mathbf{Z})
    \right\}       =      \sum_{\mathbf{Z}       \in      \mathcal{Z}}
    R_{\boldsymbol{\tau}}(\mathbf{Z})
    \log\mathcal{L}_c(\mathbf{X},\mathbf{K},\mathbf{Z}) .
  \end{equation}
  In     the    special     case     of    factorized     distribution
  \eqref{eq:R_factorized}, the entropy is
  \begin{equation*}
    \mathcal{H}(R_{\boldsymbol\tau}(\mathbf{Z})) = \sum_{i\in\mathcal{P}}
    \mathcal{H}(h_{\boldsymbol\tau_i}(\mathbf{Z}_i))                      =
    -\sum_{i\in\mathcal{P},q\in\mathcal{Q}} \tau_{iq} \log\tau_{iq}.
  \end{equation*}
  Moreover, 
  \begin{equation*}
    \widehat{Q}_{\boldsymbol\tau}(\mathbf{K})=\log\mathbb{P}(\mathbf{X}|
    \mathbf{K})+\mathbb{E}_{R_{\boldsymbol\tau}}[\log\mathbb{P}(\mathbf{K}
    |\mathbf{Z})]+\mathbb{E}_{R_{\boldsymbol\tau}}[\log\mathbb{P}(\mathbf{Z})].
 \end{equation*}
 Equation    \eqref{eq:J_factorized}     follows    via    Proposition
 \ref{prop:complete_likelihood},         by         using         that
 $\mathbb{E}_{R_{\boldsymbol\tau}}(Z_{iq})=\tau_{iq}$               and
 $\mathbb{E}_{R_{\boldsymbol\tau}}(Z_{iq}Z_{j\ell})=\tau_{iq}\tau_{j\ell}$.
\end{proof}

The  optimal approximate  distribution  $R_{\boldsymbol\tau}$ is  then
derived  by  direct  maximization of  $\mathcal{J}_{\boldsymbol\tau}$.
The      following       proposition      gives      the      estimate
$\widehat{\boldsymbol\tau}$ that solves the problem.

\begin{proposition} \label{prop:pointfixe}
Let $\boldsymbol\alpha$  and $\boldsymbol\lambda$
 be  known.  The  following fixed-point relationship  holds  for the
 optimal  variational  parameters  $\widehat{\boldsymbol\tau} =  \arg
 \max_{\boldsymbol{\tau}} \mathcal{J}_{\boldsymbol\tau}$
 \begin{equation}
   \label{eq:tauiq_laplace}
   \widehat{\tau}_{iq}     \varpropto  \alpha_q    
   \prod_{\substack{j\in\mathcal{P}\backslash         \{i\}        \\
       \ell\in\mathcal{Q}}}       \left(\frac{1}{2\lambda_{q\ell}}\exp
     \left\{-\frac{|K_{ij}|}{\lambda_{q\ell}}\right\}\right)^{\widehat{\tau}_{j\ell}} ,
 \end{equation}
where $\varpropto$ means that there is a scaling factor such that for any $i\in \mathcal{P}$, we have $\sum_q \widehat \tau_{iq}=1$.
\end{proposition}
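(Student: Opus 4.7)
The plan is to treat the identity \eqref{eq:tauiq_laplace} as the first-order optimality condition for the constrained maximization $\widehat{\boldsymbol\tau}=\arg\max_{\boldsymbol\tau}\mathcal{J}_{\boldsymbol\tau}(\mathbf{X},\mathbf{K})$ under the simplex constraints $\sum_{q\in\mathcal{Q}}\tau_{iq}=1$ for every $i\in\mathcal{P}$. Since a closed form for $\mathcal{J}_{\boldsymbol\tau}$ is already available in \eqref{eq:J_factorized}, everything reduces to a Karush--Kuhn--Tucker computation.

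First I would introduce one Lagrange multiplier $\mu_i$ per normalization constraint and form the Lagrangian
\[
\mathcal{L}(\boldsymbol\tau,\boldsymbol\mu)=\mathcal{J}_{\boldsymbol\tau}(\mathbf{X},\mathbf{K})-\sum_{i\in\mathcal{P}}\mu_i\Bigl(\sum_{q\in\mathcal{Q}}\tau_{iq}-1\Bigr).
\]
I would then differentiate the four $\boldsymbol\tau$-dependent terms of \eqref{eq:J_factorized} with respect to a generic variable $\tau_{iq}$. The entropy term contributes $-\log\tau_{iq}-1$, the linear ``prior'' term contributes $\log\alpha_q$, and the two quadratic-in-$\boldsymbol\tau$ pieces---namely the off-diagonal part of $\|\boldsymbol\rho_{\boldsymbol\tau}(\mathbf K)\|_{\ell_1}$ and the normalization sum $\sum_{i'\neq j',q',\ell'}\tau_{i'q'}\tau_{j'\ell'}\log(2\lambda_{q'\ell'})$---each yield a linear combination of the $\tau_{j\ell}$ for $j\neq i$, once one has exploited the symmetries $|K_{ij}|=|K_{ji}|$ and $\lambda_{q\ell}=\lambda_{\ell q}$ to merge the two contributions arising from the $i$-role and the $j$-role of the double sum.

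Setting $\partial\mathcal{L}/\partial\tau_{iq}=0$ at the optimum and isolating $\log\widehat\tau_{iq}$ will give an equation of the form
\[
\log\widehat\tau_{iq}=\log\alpha_q+\sum_{\substack{j\in\mathcal{P}\setminus\{i\}\\ \ell\in\mathcal{Q}}}\widehat\tau_{j\ell}\Bigl(\log\tfrac{1}{2\lambda_{q\ell}}-\tfrac{|K_{ij}|}{\lambda_{q\ell}}\Bigr)+C_i,
\]
where $C_i=-1-\mu_i$ depends only on $i$ and not on $q$. Exponentiating and absorbing $e^{C_i}$ into a multiplicative constant---which is then pinned down by the constraint $\sum_q\widehat\tau_{iq}=1$---delivers precisely the fixed-point relation \eqref{eq:tauiq_laplace}.

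I expect the only delicate point to be the bookkeeping of the symmetric double sums inside \eqref{eq:J_factorized}: because the summation runs over ordered pairs $(i,j)$ with $i\neq j$, each off-diagonal interaction is differentiated twice---once for its $i$-index and once for its $j$-index---and the symmetries of $\lambda$ and $|K|$ must be invoked carefully so that the final expression collapses into a single sum over $j\neq i$ in the form claimed by the proposition. Apart from this symmetry bookkeeping, the argument is a routine Lagrange-multiplier calculation; the fixed-point character of \eqref{eq:tauiq_laplace} is not an artifact but a direct consequence of the pairwise coupling between $\widehat\tau_{iq}$ and the other $\widehat\tau_{j\ell}$ induced by the $\ell_1$-penalty and the Laplace normalization terms.
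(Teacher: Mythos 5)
Your overall strategy is sound and, in fact, more informative than what the paper provides: the paper's ``proof'' of Proposition~\ref{prop:pointfixe} is a one-line appeal to Proposition~3 of \citet{2007_report_Mariadassou}, whereas you carry out the mean-field stationarity computation explicitly --- Lagrange multipliers for the simplex constraints, differentiation of the closed form \eqref{eq:J_factorized}, exponentiation, and normalization. That is exactly the computation the citation is standing in for, and the structure of your argument (entropy term gives $-\log\tau_{iq}-1$, prior term gives $\log\alpha_q$, bilinear terms give the coupling to the $\widehat\tau_{j\ell}$, the multiplier is absorbed into the normalizing constant) is correct. The positivity of the maximizer, which you need for the interior first-order condition to be the right one, is guaranteed by the entropy term (its derivative blows up as $\tau_{iq}\to 0^+$ when $\alpha_q>0$), so the inequality constraints of the simplex are inactive; you could say this in one sentence.

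The one point you must actually resolve, rather than flag and wave away, is the factor of two in the ``symmetry bookkeeping.'' You correctly observe that in a double sum over \emph{ordered} pairs $(i,j)$ with $i\neq j$ --- which is the natural reading of \eqref{eq:J_factorized}, consistent with the matrix norm $\|\mathbf K\|_{\ell_1}=\sum_{ij}|K_{ij}|$ used throughout the paper --- the variable $\tau_{i_0q_0}$ is hit twice, once through the $i$-index and once through the $j$-index. But merging two equal contributions (equal by $|K_{ij}|=|K_{ji}|$ and $\lambda_{q\ell}=\lambda_{\ell q}$) yields \emph{twice} a single sum, i.e.
\begin{equation*}
\frac{\partial}{\partial\tau_{i_0q_0}}\sum_{\substack{i\neq j\\ q,\ell}}\tau_{iq}\tau_{j\ell}\Bigl(\tfrac{|K_{ij}|}{\lambda_{q\ell}}+\log 2\lambda_{q\ell}\Bigr)
=2\sum_{\substack{j\neq i_0\\ \ell}}\tau_{j\ell}\Bigl(\tfrac{|K_{i_0j}|}{\lambda_{q_0\ell}}+\log 2\lambda_{q_0\ell}\Bigr),
\end{equation*}
which would put the exponent $2\widehat\tau_{j\ell}$, not $\widehat\tau_{j\ell}$, in \eqref{eq:tauiq_laplace}. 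Your displayed stationarity equation has coefficient $1$, so it is inconsistent with the mechanism you describe: either the double sums are over unordered pairs $i<j$ (in which case there is no merging to do and you land directly on the stated formula, matching the Mariadassou setting where the graph likelihood is a product over $i<j$), or they are over ordered pairs and the proposition's exponent should be doubled. You need to commit to one convention and make the computation match it; as written, the ``delicate point'' you identify is precisely where your proof is not yet a proof.
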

\begin{proof}This  is just  an  adaptation to  the  Laplace case  of
 \citet[][Proposition 3]{2007_report_Mariadassou}.
\end{proof}

The   initial   value  of   $\boldsymbol\tau$   is   chosen  using   a
classification  algorithm such as  spectral clustering  \citep[see for
instance][]{2002_proc_Ng}.  As  a consequence, the  initial values for
$\tau_{iq}$  lie in  $\{0,1\}$. We then use  an  iterative procedure
setting   $\widehat{\boldsymbol   \tau}^{(m+1)}=g(\widehat{\boldsymbol
  \tau}^{(m)})$, where $g$ is  the function (implicitly defined above)
for which $\widehat{\boldsymbol \tau}$ is  a fixed point. Note that we
cannot ensure uniqueness  of the fixed point for  $g$, nor convergence
of this iterative procedure.  In practice, we can always use a maximal
number of iterations, and if  convergence has not occurred, we keep the
initial value  of $\boldsymbol \tau$  given by the  clustering method.
In  appendix~\ref{app:point_fixe} we  explain  that at  least in  the
affiliation   model~\eqref{eq:affiliation},  if  the   current  values
$K_{ij}^{(m)}$ of  the precision matrix  are small enough, and  if the
penalty        parameters        $\lambda_{\text{in}}^{-1}$        and
$\lambda_{\text{out}}^{-1}$  are well-chosen,  then uniqueness  of the
fixed  point is  ensured.  However,  such a  result does  not  hold
in the general case, which is one of the drawbacks of of the  variational
approach in this context.

\paragraph*{Estimation of   $\boldsymbol\alpha$ and $ \boldsymbol\lambda$.} 
The parameters $\boldsymbol\alpha$ and $ \boldsymbol\lambda$ have been
previously  considered as  known to  keep  the statement  as clear  as
possible.

Two different strategies may be used with respect to these parameters.
The  first approach  is  to fix  their  values.  Fixing  the value  of
$\boldsymbol\alpha$  comes  down   to  choosing  \emph{a  priori}  the
proportions of the groups, which is quite a common strategy in mixture
models. As for the choice of $ \boldsymbol\lambda$, this is equivalent
to    choosing    the    penalty    parameter   in    the    classical
\textsc{lasso}. Concerning general parameters $ \boldsymbol\lambda$, a
number  of   values  need   to  be  determined,   which  might   be  a
problem.      However      in      the     particular      affiliation
model~\eqref{eq:affiliation}, only $2$ parameters  have to be fixed: a
parameter $\lambda_{\mathrm{in}}$ that corresponds to a light penalty,
since  many intra-cluster  edges are  expected, and  another parameter
$\lambda_{\mathrm{out}}$ that fits with a heavier penalty, since we do
not expect  many inter-cluster edges.   This is typically the  kind of
strategy   that  will   be  used   for  numerical   applications  (see
Section~\ref{sec:numexp}).  More generally,  the matrix penalty can be
tuned to obtain a desired  quantity of inferred edges, or to constrain
the topology of the graph, e.g.  graphs with hubs.

The second strategy is to make use of the current inferred graph
to  estimate the  parameters.  The  basic  idea is  to include  this
estimation in the variational method.  Unfortunately, the maximization
of     $\mathcal{J}_{\boldsymbol    \tau}$    given     in    equation
\eqref{eq:J_factorized}    with    respect    to    $\boldsymbol\tau$,
$\boldsymbol\lambda$ and  $\boldsymbol\alpha$ at the same  time is not
possible.  To tackle  this problem, we use an  alternate strategy. The
parameter  $\boldsymbol{\tau}$  is   computed  with  the  fixed-point
relationship    \eqref{eq:tauiq_laplace}   for    fixed    values   of
$\boldsymbol\lambda$   and  $\boldsymbol\alpha$.   Then   we  maximize
$\mathcal{J}_{\boldsymbol \tau}$  with respect to $\boldsymbol\lambda$
and $\boldsymbol\alpha$, once  $R_{\boldsymbol\tau}$ is fixed (that is,
once $\boldsymbol{\tau}$  is fixed), as in  the following proposition.
We successively iterate these two steps until stabilization.

\begin{proposition}  For   fixed  values  of   $\boldsymbol\tau$,  the
  parameters         $\hat        {\boldsymbol{\alpha}}$,        $\hat
  {\boldsymbol{\lambda}}$                                    maximizing
  $\mathcal{J}_{\boldsymbol{\tau}}$  are  given
  by
  \begin{equation*}
    \forall  q,\ell \in  \mathcal{Q}, \  \hat{\alpha}_q  = \frac{1}{p}
    \sum_{i\in\mathcal{P}} \tau_{iq} \ \textrm{and} \
    \hat{\lambda}_{q\ell} = \frac{\sum_{i\neq j} \tau_{iq}\tau_{j\ell}
      |K_{ij}|}{\sum_{i\neq j}\tau_{iq}\tau_{j\ell}}.
  \end{equation*}
\end{proposition}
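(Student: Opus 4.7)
The plan is to exploit the fact that the expression \eqref{eq:J_factorized} for $\mathcal{J}_{\boldsymbol\tau}$ is additively separable in $\boldsymbol\alpha$ and $\boldsymbol\lambda$, so each parameter can be optimized independently once $\boldsymbol\tau$ is fixed.

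First, for $\boldsymbol\alpha$ I would isolate the only term depending on it, namely $\sum_{i,q}\tau_{iq}\log\alpha_q$, and maximize it under the constraint $\sum_q\alpha_q=1$. A standard Lagrange-multiplier argument yields $\hat\alpha_q\propto\sum_i\tau_{iq}$, and since $\sum_q\tau_{iq}=1$ for every $i$, the normalization constant is exactly $p$, giving the stated formula. This is a routine step.

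Second, for $\boldsymbol\lambda$ I would combine the penalty term $\|\boldsymbol\rho_{\boldsymbol\tau}(\mathbf{K})\|_{\ell_1}$, expanded via \eqref{eq:rho_func}, with the log-term in \eqref{eq:J_factorized}. Up to constants, the part of $\mathcal{J}_{\boldsymbol\tau}$ depending on $\boldsymbol\lambda$ reads
\begin{equation*}
-\sum_{\substack{i\neq j\\ q,\ell}}\tau_{iq}\tau_{j\ell}\frac{|K_{ij}|}{\lambda_{q\ell}}-\sum_{\substack{i\neq j\\ q,\ell}}\tau_{iq}\tau_{j\ell}\log(2\lambda_{q\ell}).
\end{equation*}
This decouples across pairs $(q,\ell)$: each $\lambda_{q\ell}>0$ appears only in the one-dimensional function $\lambda\mapsto -A_{q\ell}/\lambda-B_{q\ell}\log\lambda$, where I set $A_{q\ell}=\sum_{i\neq j}\tau_{iq}\tau_{j\ell}|K_{ij}|$ and $B_{q\ell}=\sum_{i\neq j}\tau_{iq}\tau_{j\ell}$. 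Differentiating gives the critical point $\lambda_{q\ell}=A_{q\ell}/B_{q\ell}$, which is the announced expression, and the second derivative at this point is $-B_{q\ell}^3/A_{q\ell}^2<0$, confirming it is a maximum.

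The only subtlety, and what I would flag as the main thing to check, is the symmetry constraint $\lambda_{q\ell}=\lambda_{\ell q}$: strictly speaking one should optimize only over the upper-triangular entries, pairing the $(q,\ell)$ and $(\ell,q)$ contributions. However, since $\mathbf{K}$ is symmetric, interchanging $i$ and $j$ in the sum defining $A_{\ell q}$ (resp.\ $B_{\ell q}$) shows that $A_{\ell q}=A_{q\ell}$ and $B_{\ell q}=B_{q\ell}$, so the unconstrained optimum already satisfies the symmetry requirement and no extra argument is needed. Positivity of $\hat\lambda_{q\ell}$ is immediate from positivity of $A_{q\ell}$ and $B_{q\ell}$ (assuming the $\tau_{iq}$ are not all zero in a cluster and $\mathbf{K}$ is not identically zero off the diagonal). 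Combining both optimizations gives the proposition.
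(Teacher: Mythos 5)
Your proposal is correct and follows essentially the same route as the paper: isolate the terms of $\mathcal{J}_{\boldsymbol\tau}$ depending on each parameter, then set the derivative to zero (with a Lagrange multiplier for the constraint $\sum_q\alpha_q=1$). The paper states this tersely as ``null-differentiation leads straightforwardly to the result,'' whereas you additionally verify the second-order condition and the compatibility with the symmetry constraint $\lambda_{q\ell}=\lambda_{\ell q}$ --- worthwhile details, but not a different argument.
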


\begin{proof}
 Once terms that do not depend on the parameters of interest have been
 removed  from $\mathcal{J}_{\boldsymbol \tau}$, the problem becomes
 \begin{equation*}
   \hat \alpha_q = \argmax_{\alpha_q} \sum_{i}\tau_{iq}\log \alpha_q
   \ \textrm{and} \ \hat{\lambda}_{q\ell} = \argmax_{\lambda_{q\ell}}
   -           \sum_{i\neq          j}          \tau_{iq}\tau_{j\ell}
   \left(\frac{|K_{ij}|}{\lambda_{q\ell}}+\log 2\lambda_{q\ell}\right).
 \end{equation*}
 Null-differentiation   with  respect   to   $\alpha_q$  (under   the
 constraint   $\sum_q   \alpha_q=1$)   and  $\lambda_{q\ell}$   leads
 straightforwardly to the result.
\end{proof}

\subsection[A Lasso-like  method to estimate  the concentration matrix]{A Lasso-like  method to estimate  the concentration matrix  (the M-step)}

Now that we are able  to  compute the  approximate conditional  expectation
$\widehat{Q}_{\boldsymbol\tau}(\mathbf{K})$         defined         by
\eqref{eq:def_Q_chapo},  we  wish to  infer  the concentration  matrix
$\mathbf{K}$, assuming $\boldsymbol\tau$ is known.  This is the aim of
the \textsc{m}-step of our \textsc{em}--like strategy, that deals with
the    maximization    problem    $\arg    \max_{\mathbf{K}\succ    0}
\widehat{Q}_{\boldsymbol\tau}(\mathbf{K})$.

Using Proposition
\ref{prop:complete_likelihood} and the equality $\mathbb{E}_{R_{\boldsymbol\tau}}(Z_{iq}Z_{j\ell})=\tau_{iq}\tau_{j\ell}$,  it is a simple matter to rewrite the problem  as follows
\begin{equation}
 \label{eq:M_step}
 \widehat{\mathbf{K}}        =       \argmax_{\mathbf{K}\succ
   0} \left\{\frac{n}{2}\left(\log \det (\mathbf{K}) -
     \mathrm{Tr}(\mathbf{S}\mathbf{K})\right)                 -\left\|\boldsymbol
     \rho_{\boldsymbol \tau}(\mathbf{K}) \right\|_{\ell_1}\right\}. 
\end{equation}
Hence,  our  \textsc{m}--step  can  be  seen as  a  penalized  maximum
likelihood estimation problem, 
exactly like in 
\cite{2007_BS_Friedman,2008_JMLR_Banerjee}.  The likelihood considered
here   is  $\mathbb{P}(\mathbf{X}|\mathbf{K})$,   that  is,   the  likelihood
which corresponds to  the $n$ realizations of  the Gaussian vector  $X$ for a
given  concentration  matrix   $\mathbf{K}$.  The difference of our approach
lies in the complexity  of  the penalty  term, and  in slight
discrepancies as regards some constant factors.

\begin{remark}\label{rem:diagonal_solutions}
 Since we are using  a penalty term $1/\lambda_0$ on matrix $\mathbf
 K$'s diagonal elements, the solution to \eqref{eq:M_step} satisfies
\begin{equation}
  \label{eq:diagonal}
  \forall i \in \mathcal{P}, \quad \widehat {K}^{-1}_{ii} = S_{ii} +2/(n \lambda_0),
\end{equation}
when $\lambda_0^{-1} < n  |S_{ii}|/2$ for any $i \in \mathcal{P}$. 
Indeed,         the         sub-gradient        equation         is
$n/2(K^{-1}_{ii}-S_{ii})+\mathrm{sgn}(K_{ii})/\lambda_0=0$, and $K_{ii}\geq 0$ since it is the inverse of a conditional variance.
\end{remark}

Let  us now  look at  the  solution of  the \textsc{m}-step:  the
following   proposition    gives   an   equivalent    formulation   of
\eqref{eq:M_step} that is more likely  to be solved.  The result draws
its  inspiration  from \cite{2008_JMLR_Banerjee}.  

\begin{proposition}The maximization problem \eqref{eq:M_step} over the
  concentration  matrix $\mathbf{K}$ is  equivalent to  the following,
  dealing with the covariance matrix $\boldsymbol\Sigma$
  \begin{equation}
    \label{eq:M_step_optim}
    \widehat{\boldsymbol\Sigma} = \argmax_{ 
      \|     (\boldsymbol\Sigma      -     \mathbf{S})     \cdot\slash
      \mathbf{P}_{\boldsymbol\tau}\|_{\infty}\le
      1}\log\det(\boldsymbol\Sigma),
  \end{equation}
  where $\cdot\big\slash$ is the term-by-term division and
  \begin{equation*}
    \mathbf{P}_{\boldsymbol \tau} = (P_{\boldsymbol \tau_i \boldsymbol \tau_j})_{i,j\in\mathcal{P}}  \quad \text{ with }\quad
 P_{\boldsymbol \tau_i \boldsymbol \tau_j} =  \left\{ 
  \begin{array}{cc}
    2n^{-1}\sum_{q,\ell}\tau_{iq}\tau_{j\ell} \lambda_{q\ell}^{-1} & i\neq j,\\
2(n\lambda_0)^{-1} & i=j.
   \end{array} 
\right.
  \end{equation*}
\end{proposition}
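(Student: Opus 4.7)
My plan is to follow the dualization argument of Banerjee et al., but tailored to our structured penalty.

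First, I would rewrite the penalty as a weighted $\ell_1$ norm. From \eqref{eq:rho_func} and the symmetry $K_{ij}=K_{ji}$, a direct computation gives
\[
\|\boldsymbol\rho_{\boldsymbol\tau}(\mathbf{K})\|_{\ell_1}
= \sum_{i\neq j}\sum_{q,\ell}\tau_{iq}\tau_{j\ell}\frac{|K_{ij}|}{\lambda_{q\ell}}
+ \sum_{i}\frac{|K_{ii}|}{\lambda_0}
= \frac{n}{2}\sum_{i,j\in\mathcal{P}} P_{\boldsymbol\tau_i \boldsymbol\tau_j}|K_{ij}|,
\]
so the M-step objective in \eqref{eq:M_step} becomes, up to the factor $n/2$,
\[
F(\mathbf{K}) = \log\det(\mathbf{K}) - \mathrm{Tr}(\mathbf{S}\mathbf{K}) - \sum_{i,j} P_{\boldsymbol\tau_i \boldsymbol\tau_j}|K_{ij}|.
\]
This exposes the matrix $\mathbf{P}_{\boldsymbol\tau}$ as the relevant per-entry penalty weights.

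Next I would introduce the dual variational representation of the penalty,
\[
\sum_{i,j} P_{\boldsymbol\tau_i \boldsymbol\tau_j}|K_{ij}|
= \max_{\|\mathbf{U}\cdot\slash\mathbf{P}_{\boldsymbol\tau}\|_\infty \le 1} \mathrm{Tr}(\mathbf{U}\mathbf{K}),
\]
so that $\max_{\mathbf{K}\succ 0} F(\mathbf{K}) = \max_{\mathbf{K}\succ 0}\min_{\mathbf{U}} \bigl[\log\det(\mathbf{K}) - \mathrm{Tr}((\mathbf{S}+\mathbf{U})\mathbf{K})\bigr]$ where the min is taken over the compact convex set $\{\mathbf{U}: \|\mathbf{U}\cdot\slash\mathbf{P}_{\boldsymbol\tau}\|_\infty \le 1\}$ (assumed symmetric). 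The inner objective is concave in $\mathbf{K}$ on the open convex cone of p.d. matrices and linear (hence convex) in $\mathbf{U}$, so Sion's minimax theorem applies and I can swap the order.

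After the swap, the inner problem $\max_{\mathbf{K}\succ 0}[\log\det(\mathbf{K}) - \mathrm{Tr}((\mathbf{S}+\mathbf{U})\mathbf{K})]$ is $+\infty$ unless $\mathbf{S}+\mathbf{U}\succ 0$, in which case its stationary point $\mathbf{K} = (\mathbf{S}+\mathbf{U})^{-1}$ yields value $-\log\det(\mathbf{S}+\mathbf{U}) - p$. Substituting $\boldsymbol\Sigma = \mathbf{S}+\mathbf{U}$ and discarding the constant $-p$, the dual reads
\[
\widehat{\boldsymbol\Sigma} = \argmax_{\|(\boldsymbol\Sigma-\mathbf{S})\cdot\slash\mathbf{P}_{\boldsymbol\tau}\|_\infty \le 1,\ \boldsymbol\Sigma\succ 0} \log\det\boldsymbol\Sigma,
\]
which is precisely \eqref{eq:M_step_optim} (the constraint $\boldsymbol\Sigma\succ 0$ is implicit since $\log\det$ is undefined otherwise). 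Finally $\widehat{\mathbf{K}} = \widehat{\boldsymbol\Sigma}^{-1}$ recovers the primal optimizer.

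The main obstacle is the minimax exchange: the p.d. cone is not compact, so Sion's theorem is not immediately applicable to $\mathbf{K}$. I would handle this by first restricting to a compact sublevel set — because $F$ is strictly concave and coercive (the $\ell_1$ term with $P_{ii}>0$ controls $\|\mathbf{K}\|$, using the diagonal observation of Remark~\ref{rem:diagonal_solutions}) — so the primal maximum is attained on a compact convex subset of the cone. On this compact domain Sion's theorem applies and strong duality holds, justifying all the manipulations above.
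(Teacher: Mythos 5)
Your proof is correct, and it reaches \eqref{eq:M_step_optim} by the same underlying conjugate-duality argument as the paper (both in the spirit of Banerjee et al.), but via a genuinely different decomposition of the penalty. The paper splits the penalty into $Q^2$ separate terms $\sum_{q,\ell}\|\mathbf{T}_{q\ell}\star\mathbf{K}\|_{\ell_1}$, introduces one dual matrix $\mathbf{U}_{q\ell}$ per pair of classes with the constraint $\|\mathbf{U}_{q\ell}\|_\infty\le 1$, and must then prove (Appendix~\ref{ap:proof_norm}) that this family of constraints is equivalent to the single constraint $\|(\boldsymbol\Sigma-\mathbf{S})\cdot\slash\mathbf{P}_{\boldsymbol\tau}\|_\infty\le 1$. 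You instead aggregate everything into the single weighted $\ell_1$ norm $\sum_{i,j}P_{\boldsymbol\tau_i\boldsymbol\tau_j}|K_{ij}|$ from the outset (your factor check, $\tfrac{n}{2}P_{\boldsymbol\tau_i\boldsymbol\tau_j}=\sum_{q,\ell}\tau_{iq}\tau_{j\ell}\lambda_{q\ell}^{-1}$ off the diagonal and $\lambda_0^{-1}$ on it, is right), dualize it with one matrix $\mathbf{U}$ under $\|\mathbf{U}\cdot\slash\mathbf{P}_{\boldsymbol\tau}\|_\infty\le 1$, and the constraint on $\boldsymbol\Sigma=\mathbf{S}+\mathbf{U}$ then reads off immediately; the appendix lemma becomes unnecessary. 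You also supply something the paper silently omits: a justification for swapping $\max$ and $\min$. The paper simply ``obtains the dual version by swapping,'' whereas you invoke Sion's theorem and repair its compactness hypothesis by restricting to a sublevel set, using coercivity of the objective (which holds because $\sum_i P_{\boldsymbol\tau_i\boldsymbol\tau_i}K_{ii}$ grows linearly in $\mathrm{Tr}(\mathbf{K})$ when $\lambda_0^{-1}>0$, or via $\mathrm{Tr}(\mathbf{S}\mathbf{K})\ge\lambda_{\min}(\mathbf{S})\mathrm{Tr}(\mathbf{K})$ when $\mathbf{S}\succ 0$ and the diagonal penalty is dropped). In short, your route is tidier and more rigorous on the duality step; the paper's multi-matrix bookkeeping buys nothing here except the need for its own appendix.
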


\begin{remark}\label{rem:pen_diag}
 By penalizing the diagonal terms of the concentration matrix
 $\mathbf K$ in the initial problem, the set of matrices $\boldsymbol
 \Sigma $ over which we maximize our criterion contains, for instance,
 the matrix $\mathbf{S}+ 2/(n\lambda_0) I$, (where $I$ stands for the
 identity matrix).  Thus, provided that the value of penalty parameter
 $1/\lambda_0$ is set sufficiently high, this set contains positive
 definite matrices. This ensures that our estimator is always
 invertible. Obviously, when $\mathbf{S}$ is invertible, which is
 usually true for $n$ greater or equal than $p$, penalizing the
 diagonal terms 
 becomes futile. In this case $1/\lambda_0$ is set to zero.
\end{remark}

\begin{proof}The penalty  term in \eqref{eq:M_step} can  be written as
 follows
 \begin{equation*}
   \left\|\boldsymbol       \rho_{\boldsymbol       \tau}(\mathbf{K})
   \right\|_{\ell_1}           =          \sum_{q,\ell\in\mathcal{Q}}
   \sum_{ \substack{i,j\in\mathcal{P}  \\ i\neq j}}
   \frac{\left|K_{ij}     \right|}{\lambda_{q\ell}}  \tau_{iq}\tau_{j\ell}   
   + \sum_{i\in \mathcal{P}} \frac{|K_{ii}|}{\lambda_0}    =
   \sum_{q,\ell\in\mathcal{Q}}\left\|\mathbf{T}_{q\ell}\star\mathbf{K}\right\|_{\ell_1},
 \end{equation*}
 where    $\star$   is   the    term-by-term   product.     The   set
 $\left\{\mathbf{T}_{q\ell} \right\}_{q,\ell\in\mathcal{Q}}$ contains
 $p\times p$ symmetric matrices, defined, for each couple $(q,\ell)$,
 by
 \begin{equation*}
   \mathbf{T}_{q\ell} =  \left(T_{q\ell;ij}\right)_{i,j\in\mathcal{P}}  
   \quad \text{with} \quad \forall i\neq j, \quad T_{q\ell;ij}= \frac{\tau_{iq}\tau_{j\ell}}{\lambda_{q\ell}} \quad \text{ and } \quad T_{q\ell;ii} = \frac{1}{\lambda_{0}Q^2}. 
 \end{equation*}

 Let   us   now    use   the   fact   that   $\|\mathbf{A}\|_{\ell_1}
 =\max_{\|\mathbf{U}\|_\infty \leq 1} \mathrm{Tr}(\mathbf{A U})$, for
  a   given   matrix $\mathbf{A}$.   The   optimization   problem
 \eqref{eq:M_step} can now be written as
 \begin{equation*}
   \max_{\mathbf{K}\succ      0}      \min_{\left\{\mathbf{U}_{q\ell}      :
     \|\mathbf{U}_{q\ell}\|_\infty\leq 1\right\}} 
   \left\{  \frac{n}{2}  \log\det \mathbf{K} 
     - \mathrm{Tr}\left( 
       \frac{n}{2} \mathbf{S} \mathbf{K} + \sum_{q,\ell\in\mathcal{Q}} 
       \left(\mathbf{T}_{q\ell}\star\mathbf{K}\right)\mathbf{U}_{q\ell}
     \right)
   \right\},
 \end{equation*}
 since the trace  operator is linear. The dual  version of the above expression
 is  obtained  by  swapping  $\max$  and  $\min$.   The
 maximization   is  solved   by  differentiating   with   respect  to
 $\mathbf{K}$.  To do this, we recall that in our
 specific  case the  matrices  $\mathbf{T}$ are  symmetrical, and  thus
 $\mathrm{Tr}\left((\mathbf{T}\star\mathbf{K})\mathbf{U}\right)      =
 \mathrm{Tr}\left(\mathbf{K}(\mathbf{T}\star\mathbf{U})\right)$.
 Then,  applying the  usual  rules  for  the derivative of  the  trace  operator,
 null-differentiation with respect to $\mathbf{K}$ yields
 \begin{equation}
   \label{eq:Sigma_opt}
   \boldsymbol   \Sigma   :=   \mathbf   {K}^{-1}  =   \mathbf{S}   +
   \frac{2}{n}\sum_{q,\ell\in\mathcal{Q}}
   \left(\mathbf{U}_{q\ell}\star\mathbf{T}_{q\ell}\right).
 \end{equation}
 The dual problem therefore becomes
 \begin{equation*}
   \min_{\left\{\mathbf{U}_{q\ell}      :
       \|\mathbf{U}_{q\ell}\|_\infty\leq 1\right\}} 
   \left\{ 
     - \frac{n}{2}\log \det(\boldsymbol\Sigma) - \frac{np}{2} 
   \right\},
 \end{equation*}
 or in other words,
 \begin{equation*}
   \max_{\left\{\mathbf{U}_{q\ell}      :
       \|\mathbf{U}_{q\ell}\|_\infty\leq 1\right\}} 
   \log \det(\boldsymbol\Sigma).
 \end{equation*}
 Finally, we  need  to write  the  constraint as  a function  of
 $\boldsymbol\Sigma$  rather than  the  set $\{\mathbf{U}_{q\ell}\}$.
 In fact, we simply need to show that
 \begin{equation*}
   \left\{\mathbf{U}_{q\ell};    \forall    q,\ell    \in    \mathcal{Q},
     \|\mathbf{U}_{q\ell}\|_\infty\leq           1\right\}          =
   \left\{    \boldsymbol    \Sigma;   \left\|(\boldsymbol\Sigma    -
       \mathbf{S})     \cdot    \big\slash    \mathbf{P}_{\boldsymbol
         \tau}\right\|_{\infty} \leq 1 \right\},
 \end{equation*}
 which is straightforward  (see Appendix \ref{ap:proof_norm} for details).
\end{proof}

To  solve   \eqref{eq:M_step_optim}  and  thus   obtain  the  estimate
$\widehat{  \boldsymbol \Sigma}$, we  successively use  two coordinate
descent methods.   The first corresponds to  a block-wise strategy
suggested by \citeauthor{2008_JMLR_Banerjee}.   The second one is used
to  solve the resulting  \textsc{Lasso} problem  and was  suggested by
\cite{2007_AAS_Friedman}.
\\

Let us  first explain  the block-wise strategy.  For this  purpose, we
introduce the following  notation for $\widehat{ \boldsymbol \Sigma}$,
$\mathbf{S}$ and the penalty matrix $\mathbf{P}_{\boldsymbol \tau}$
\begin{equation}
  \label{eq:Sigma_block}
  \widehat{ \boldsymbol \Sigma} = \begin{bmatrix}
    \widehat{ \boldsymbol \Sigma}_{11} &  \widehat{\boldsymbol \sigma}_{12} \\
    \widehat{\boldsymbol \sigma}_{12}^\intercal &  \widehat{ \Sigma}_{22} \\
  \end{bmatrix}, \quad 
  \mathbf{S} = \begin{bmatrix}
    \mathbf{S}_{11} & \mathbf{s}_{12} \\
    \mathbf{s}_{12}^\intercal & S_{22} \\
  \end{bmatrix},\quad 
  \mathbf{P}_{\boldsymbol \tau} = \begin{bmatrix}
    \mathbf{P}_{11} & \mathbf{p}_{12} \\
    \mathbf{p}_{12}^\intercal & P_{22} \\
  \end{bmatrix},
\end{equation}
where  $ \widehat{  \boldsymbol  \Sigma}_{11}$, $\mathbf{S}_{11}$  and
$\mathbf{P}_{11}$      are      $(p-1)\times     (p-1)$      matrices,
$\widehat{\boldsymbol     \sigma}_{12}$,     $\mathbf{s}_{12}$     and
$\mathbf{p}_{12}$  are  $(p-1)$ length  column  vectors and  $\widehat
\Sigma_{22}$, $S_{22}$ and $P_{22}$ are real numbers. We have already remarked
(Remark~\ref{rem:diagonal_solutions})       that   the   solution      to
\eqref{eq:M_step_optim}      satisfies      $\widehat      \Sigma_{22}
=S_{22}+2/(n\lambda_0)$.  Moreover, using Sch\"ur  complement, the
vector $\widehat{\boldsymbol \sigma}_{12}$ satisfies
\begin{equation}
  \label{eq:M_step_optim_block}
  \widehat{\boldsymbol \sigma}_{12}       =       \argmin_{\left\{\mathbf{y}:      \|       (\mathbf{y}-\mathbf{s}_{12})\cdot\slash
    \mathbf{p}_{12}\|_\infty \leq 1\right\}}\left\{\mathbf{y}^\intercal    \widehat{\boldsymbol \Sigma}^{-1}_{11}\mathbf{y}\right\}.
\end{equation}
We have $\det(\widehat{\boldsymbol \Sigma})=
\det(\widehat{\boldsymbol \Sigma}_{11})(\widehat{
  \Sigma}_{22}-\widehat{\boldsymbol \sigma}_{12}^\intercal
\widehat{\boldsymbol \Sigma}_{11}^{-1}\widehat{\boldsymbol
 \sigma}_{12})$.  The full matrix $\widehat{\boldsymbol \Sigma}$ is
approximated in the following way: first, if required when $p$ is
greater than $n$, we initialize the procedure with $
S+2/(n\lambda_0)I$, where $\lambda_0>0$ is chosen so as to make $
S+2/(n\lambda_0)I$ invertible; secondly, we permute the columns
(and thus the rows) of $\widehat{\boldsymbol \Sigma}$ and iteratively
solve problems like \eqref{eq:M_step_optim_block} until convergence of
the procedure. This convergence is ensured by the following lemma.

\begin{lemma}\label{lem:cv_coord_descent_1}
 The  procedure which  starts  with a  positive  definite matrix  and
 iteratively updates the columns and rows of this matrix according to
 the  solutions of  \eqref{eq:M_step_optim_block}  converges to  the
 solution $\widehat{\boldsymbol \Sigma}$ of \eqref{eq:M_step_optim}.
\end{lemma}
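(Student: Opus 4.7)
The plan is to adapt the block-coordinate ascent convergence argument of \citet{2008_JMLR_Banerjee} to the present setting, where the uniform penalty $\rho$ is replaced by the entry-wise matrix $\mathbf{P}_{\boldsymbol\tau}$. First, I would record the relevant structural facts about the primal problem \eqref{eq:M_step_optim}: the objective $\log\det(\boldsymbol\Sigma)$ is strictly concave on the positive-definite cone; the feasible set is the intersection of this cone with the closed convex polytope $\{\boldsymbol\Sigma:\|(\boldsymbol\Sigma-\mathbf{S})\cdot/\mathbf{P}_{\boldsymbol\tau}\|_{\infty}\le 1\}$; and, since the diagonal entries are fixed to $S_{ii}+2/(n\lambda_0)$ by Remark~\ref{rem:diagonal_solutions} and positive-definiteness forces $|\Sigma_{ij}|\le (\Sigma_{ii}\Sigma_{jj})^{1/2}$, the iterates live in a compact set on which the objective is bounded above. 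Strict concavity therefore guarantees a unique maximizer $\widehat{\boldsymbol\Sigma}$.

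Next, I would verify that each block update both preserves positive-definiteness and strictly improves the objective. Writing the current iterate in the block form of \eqref{eq:Sigma_block} and using
\begin{equation*}
\det(\widehat{\boldsymbol\Sigma})=\det(\widehat{\boldsymbol\Sigma}_{11})\bigl(\widehat\Sigma_{22}-\widehat{\boldsymbol\sigma}_{12}^{\intercal}\widehat{\boldsymbol\Sigma}_{11}^{-1}\widehat{\boldsymbol\sigma}_{12}\bigr),
\end{equation*}
maximizing $\log\det$ with $\widehat{\boldsymbol\Sigma}_{11}$ and $\widehat\Sigma_{22}$ held fixed reduces to minimizing the strictly convex quadratic $\mathbf{y}^{\intercal}\widehat{\boldsymbol\Sigma}_{11}^{-1}\mathbf{y}$ on the box appearing in \eqref{eq:M_step_optim_block}. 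Since $\widehat{\boldsymbol\Sigma}_{11}$ is positive definite by induction, the subproblem has a unique solution, and the resulting Sch\"ur complement remains strictly positive, so the updated matrix stays inside the positive-definite cone. This gives both the invariance of positive-definiteness along the trajectory and the well-posedness of every successive block update.

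Then the classical cyclic block-coordinate ascent argument for smooth strictly concave objectives on a product-structured feasible set applies. Monotonicity of the objective and boundedness from above yield convergence of the objective value; compactness of the iterate set provides a limit point $\boldsymbol\Sigma^{\star}$; and the fact that each block update is solved exactly forces $\boldsymbol\Sigma^{\star}$ to satisfy the first-order optimality conditions of every block subproblem simultaneously. Pooling these conditions recovers the full KKT conditions for \eqref{eq:M_step_optim}, so strict concavity identifies $\boldsymbol\Sigma^{\star}$ with the unique global optimum $\widehat{\boldsymbol\Sigma}$; a standard argument then upgrades subsequential convergence to convergence of the whole sequence.

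The main obstacle I anticipate is justifying that the cyclic coordinate ascent machinery (e.g., Proposition~2.7.1 in Bertsekas, \emph{Nonlinear Programming}) really applies, since the constraint set is not a pure Cartesian product: the box constraint $\|(\boldsymbol\Sigma-\mathbf{S})\cdot/\mathbf{P}_{\boldsymbol\tau}\|_{\infty}\le 1$ is separable across the $p$ row/column blocks, but positive-definiteness couples them in an implicit, non-separable way. The resolution is precisely the Sch\"ur-complement argument above, which shows that the positive-definiteness constraint is never active along the iterates produced by the algorithm, so it can be ignored for the purposes of convergence and the standard theory yields the result.
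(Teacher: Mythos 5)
Your argument is correct and follows the same block-coordinate-ascent skeleton as the paper, but it is genuinely more self-contained. The paper's proof is essentially a citation: it invokes Tseng's Theorem 4.1 (convergence of cyclic coordinate descent for objectives with a nonsmooth separable part, \emph{conditional} on each block subproblem having at most one solution) together with Theorem 3 of Banerjee et al.\ (which supplies that uniqueness). You instead observe that in the dual formulation the objective $\log\det$ is smooth and strictly concave, that the only ``nonsmoothness'' sits in the separable box constraints, and that the coupling positive-definiteness constraint is handled by the Sch\"ur-complement monotonicity argument, so that the classical smooth cyclic-ascent theory applies with uniqueness of each block update coming from strict convexity of $\mathbf{y}^{\intercal}\widehat{\boldsymbol\Sigma}_{11}^{-1}\mathbf{y}$ on a box. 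The paper's route buys brevity and a theorem tailored to exactly this setting; yours buys transparency and independence from Tseng's technical hypotheses. Two small points to tighten: (i) you need the \emph{limit point}, not only the iterates, to be strictly positive definite for the blockwise first-order conditions to pool into the global optimality condition --- this follows because $\log\det$ is nondecreasing along the trajectory while the box constraint bounds the eigenvalues above, so the smallest eigenvalue stays bounded away from zero; and (ii) the pooling step gives optimality over the full box product, which \emph{contains} the feasible set, and concavity along segments (which remain in the positive-definite cone by convexity) then yields global optimality over the actual feasible set. Neither point is a gap in substance, and your resolution of the non-product structure of the constraint set is exactly the right observation.
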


\begin{proof}
 The  proof  relies  on \citet[][Theorem  3]{2008_JMLR_Banerjee}  and
 \citet[][Theorem      4.1]{2001_JOTA_Tseng}.      Convergence     of
 block-coordinate  descent  methods is  a  well-documented topic  in
 convex optimization  literature. Here, we have to bear in mind that using $\ell_1$-norm  penalty leads  to non-differentiable
 functions.    Thus,  we   rely   on  a   result  by   \citet[Theorem
 4.1]{2001_JOTA_Tseng}, which in our  case ensures the convergence of
 the procedure,  provided there  is at most  one solution  to each
 minimization  problem \eqref{eq:M_step_optim_block}.  This  point is
 proved in \citet[][Theorem 3]{2008_JMLR_Banerjee}.
\end{proof}

Then, starting  from a  result given in  \cite{2008_JMLR_Banerjee}, an
interpretation     of      \eqref{eq:M_step_optim_block}     as     an
$\ell_1$--penalized  problem  is  given  in  \citet{2007_BS_Friedman}.
This  $\ell_1$--penalized problem is  reminiscent of  the {\protect\sc
  Lasso} and  may thus be  solved using a coordinate  descent strategy
\citep{2007_AAS_Friedman}.   The  following  proposition enunciates  a
result   similar    to   those    obtained   in   \citet[equation
(6)]{2008_JMLR_Banerjee} and \citet[equation (2.4)]{2007_BS_Friedman},
although with a  more general penalty term and  a factor $\frac{1}{2}$
that differs.  Since  none of these articles gives  an explicit proof
for this result, it is fitting that we provide our own proof here.

\begin{proposition}Solving \eqref{eq:M_step_optim_block} is equivalent
 to solving the dual problem
 \begin{equation}
   \label{eq:M_step_lasso}
   \widehat{\boldsymbol\beta} =   \argmin_{\boldsymbol \beta} \left\|\frac{1}{2}  \widehat{\boldsymbol \Sigma}_{11}^{1/2} \boldsymbol\beta -
     \widehat{\boldsymbol
       \Sigma}_{11}^{-1/2}\mathbf{s}_{12}\right\|_2^2    +    \left\|
     \mathbf{p}_{12} \star 
     \boldsymbol \beta \right\|_{\ell_1},
 \end{equation}
 where     solution     $\widehat{\boldsymbol    \sigma}_{12}$     to
 \eqref{eq:M_step_optim_block}  and $\widehat  {\boldsymbol\beta}$ to
 \eqref{eq:M_step_lasso} are linked through
 \begin{equation}
   \label{eq:w12_to_beta}
   \widehat{\boldsymbol \sigma}_{12} = \widehat{\boldsymbol \Sigma}_{11}\widehat {\boldsymbol\beta}/2.
 \end{equation}
\end{proposition}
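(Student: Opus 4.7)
The plan is to obtain the dual of the quadratic program \eqref{eq:M_step_optim_block} by explicitly dualizing the infinity-norm constraint, solving the inner minimization in closed form, and identifying the resulting expression with the $\ell_1$-penalized quadratic in \eqref{eq:M_step_lasso}. Throughout I write $\mathbf{W} := \widehat{\boldsymbol\Sigma}_{11}$, which is positive definite, so $\mathbf{W}^{1/2}$ and $\mathbf{W}^{-1/2}$ are well-defined.

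First I would rewrite the box constraint $\|(\mathbf{y}-\mathbf{s}_{12})\cdot/\mathbf{p}_{12}\|_\infty\le 1$ via the Fenchel conjugate identity $\mathbb{I}[\|\mathbf{a}\|_\infty\le 1]=\sup_{\boldsymbol\gamma}\{\boldsymbol\gamma^\intercal\mathbf{a}-\|\boldsymbol\gamma\|_1\}$ applied to $\mathbf{a}=(\mathbf{y}-\mathbf{s}_{12})\cdot/\mathbf{p}_{12}$. Performing the change of variable $\gamma_i=p_{12,i}\,\tilde\beta_i$ then yields the unconstrained saddle-point formulation
\begin{equation*}
  \min_{\mathbf{y}} \sup_{\tilde{\boldsymbol\beta}}\ \mathbf{y}^\intercal \mathbf{W}^{-1}\mathbf{y} + \tilde{\boldsymbol\beta}^\intercal(\mathbf{y}-\mathbf{s}_{12}) - \|\mathbf{p}_{12}\star\tilde{\boldsymbol\beta}\|_{\ell_1},
\end{equation*}
in which the $\ell_1$-norm penalty has naturally picked up the weights $\mathbf{p}_{12}$ that we want to see in \eqref{eq:M_step_lasso}.

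Next I would invoke strong duality to swap the $\min$ and $\sup$. This is legitimate because the Lagrangian is convex in $\mathbf{y}$, concave in $\tilde{\boldsymbol\beta}$, and Slater's condition is obvious (take $\mathbf{y}=\mathbf{s}_{12}$). I can then solve the inner minimization in $\mathbf{y}$ in closed form: the first-order condition $2\mathbf{W}^{-1}\mathbf{y}+\tilde{\boldsymbol\beta}=0$ gives $\mathbf{y}^\star=-\mathbf{W}\tilde{\boldsymbol\beta}/2$. Substituting back reduces the saddle problem to
\begin{equation*}
  \max_{\tilde{\boldsymbol\beta}}\ \Big\{-\tfrac14\tilde{\boldsymbol\beta}^\intercal\mathbf{W}\tilde{\boldsymbol\beta} - \tilde{\boldsymbol\beta}^\intercal\mathbf{s}_{12} - \|\mathbf{p}_{12}\star\tilde{\boldsymbol\beta}\|_{\ell_1}\Big\}.
\end{equation*}

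Finally I would flip the sign via the substitution $\boldsymbol\beta:=-\tilde{\boldsymbol\beta}$, which leaves the $\ell_1$-penalty invariant and turns the problem into the minimization of $\tfrac14\boldsymbol\beta^\intercal\mathbf{W}\boldsymbol\beta - \boldsymbol\beta^\intercal\mathbf{s}_{12} + \|\mathbf{p}_{12}\star\boldsymbol\beta\|_{\ell_1}$. Completing the square using $\tfrac14\boldsymbol\beta^\intercal\mathbf{W}\boldsymbol\beta - \boldsymbol\beta^\intercal\mathbf{s}_{12} = \|\tfrac12\mathbf{W}^{1/2}\boldsymbol\beta - \mathbf{W}^{-1/2}\mathbf{s}_{12}\|_2^2 - \mathbf{s}_{12}^\intercal\mathbf{W}^{-1}\mathbf{s}_{12}$, where the last term is a constant with respect to $\boldsymbol\beta$, yields exactly \eqref{eq:M_step_lasso}. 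The relationship \eqref{eq:w12_to_beta} drops out of the first-order condition: $\widehat{\boldsymbol\sigma}_{12}=\mathbf{y}^\star=-\mathbf{W}\tilde{\boldsymbol\beta}^\star/2=\mathbf{W}\widehat{\boldsymbol\beta}/2$.

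The main obstacle is really just bookkeeping: tracking the signs and factors of $1/2$ through the dualization, the reweighting of the dual variable by $\mathbf{p}_{12}$, and the final square-completion. The convex-analytic content (conjugate of $\|\cdot\|_1$, strong duality for a strictly convex quadratic with a polyhedral constraint) is standard, so the only genuine care needed is to ensure the identification with the exact normalization $\tfrac12\mathbf{W}^{1/2}\boldsymbol\beta$ that appears in \eqref{eq:M_step_lasso}.
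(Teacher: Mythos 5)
Your proof is correct and follows the same overall skeleton as the paper's: dualize the box constraint, swap the $\min$ and the $\max$, solve the inner minimization in $\mathbf{y}$ by the first-order condition $2\widehat{\boldsymbol\Sigma}_{11}^{-1}\mathbf{y}+\tilde{\boldsymbol\beta}=0$, and complete the square to reach \eqref{eq:M_step_lasso}; the link \eqref{eq:w12_to_beta} drops out of the same stationarity condition in both arguments. The one genuine difference is how the $\ell_1$ penalty appears in the dual. The paper splits $\|(\mathbf{y}-\mathbf{s}_{12})\cdot\slash\mathbf{p}_{12}\|_\infty\le 1$ into $2(p-1)$ one-sided inequalities, attaches two nonnegative multiplier vectors $\boldsymbol\beta^1,\boldsymbol\beta^2$, and then needs the complementary-slackness observation that at the optimum $\min(\beta_i^1,\beta_i^2)=0$ in order to identify $\sum_i(\beta_i^1+\beta_i^2)(\mathbf{p}_{12})_i$ with $\|\mathbf{p}_{12}\star\boldsymbol\beta\|_{\ell_1}$. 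You instead represent the indicator of the $\ell_\infty$-ball by its conjugate, $\sup_{\boldsymbol\gamma}\{\boldsymbol\gamma^\intercal\mathbf{a}-\|\boldsymbol\gamma\|_1\}$, and rescale $\gamma_i=(\mathbf{p}_{12})_i\tilde\beta_i$, so the weighted $\ell_1$ term is present from the start and no complementary-slackness step is needed. This is a slightly cleaner bookkeeping of the same duality; it does implicitly use that the entries of $\mathbf{p}_{12}$ are strictly positive (so the rescaling is invertible and the term-by-term division is licit), which holds here since $P_{\boldsymbol\tau_i\boldsymbol\tau_j}>0$, and your appeal to strong duality is unproblematic because the constraint set is polyhedral and the objective is a strictly convex quadratic.
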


\begin{proof}Problem  \eqref{eq:M_step_optim_block} can be  written as
 follows, by splitting the constraint:
 \begin{equation*}
   \left\{\begin{array}{rcl}
       & \min_{\mathbf{y}} \mathbf{y}^\intercal \widehat{\boldsymbol \Sigma}_{11}^{-1} \mathbf{y} & \\
       \text{subject   to  }  &   -(\mathbf{p}_{12})_i  \leq   y_i  -
       (\mathbf{s}_{12})_i - (\mathbf{p}_{12})_i \leq 0, & \forall i=1,\dots,p-1,\\
       \text{or } &  -(\mathbf{p}_{12})_i  \leq  - y_i  +
       (\mathbf{s}_{12})_i - (\mathbf{p}_{12})_i \leq 0, & \forall i=1,\dots,p-1.
     \end{array}\right.
 \end{equation*}
 Let  us introduce  $L$  the so-called  Lagrangian,  with vectors  of
 Lagrange             coefficients             denoted             by
 $\boldsymbol\beta^1=(\beta^1_i)_{i\le                           p-1},
 \boldsymbol\beta^2=(\beta^2_i)_{i\le   p-1}$   with   nonnegative
 entries.        Also, let        $\boldsymbol\beta        =
 \boldsymbol\beta^2-\boldsymbol\beta^1$. The  Lagrange version of the
 above problem is
 \begin{equation}
   \label{eq:lagrange_form}
   \min_{\mathbf{y}} \left\{\mathbf{y}^\intercal \widehat{\boldsymbol \Sigma}_{11}^{-1} \mathbf{y} + \max_{\boldsymbol\beta}L(\boldsymbol\beta) \right\},
 \end{equation}
 where, in the present case, $L$ is given by
 \begin{equation*}
   L(\boldsymbol\beta)  =  
   \sum_{i} \beta_i^1 \left(y_i - (\mathbf{s}_{12})_i - 
     (\mathbf{p}_{12})_i\right) + \sum_{i} \beta_i^2 \left( - y_i +
     (\mathbf{s}_{12})_i - (\mathbf{p}_{12})_i \right),
 \end{equation*}
 The    coefficients   $\beta_i^1$    and    $\beta_i^2$   maximizing
 $L(\boldsymbol\beta)$ are  null when the  constraints are satisfied,
 and   for  each   index  $i$,   at  least   one   coefficient  among
 $\{\beta_i^1,\beta_i^2\}$ is zero.  Then
 \begin{equation*}
   \| \boldsymbol \beta \|_{\ell_1} = \sum_{i} \left| \beta_i \right|
   = \sum_{i} \left(\beta_i^1 + \beta_i^2\right).
 \end{equation*}
 Meanwhile,  consider the  dual problem  of \eqref{eq:lagrange_form},
 swapping  $\min$ and $\max$:  the solution  that minimizes  the dual
 problem  with respect  to $\mathbf{y}$  satisfies  the null-gradient
 hypothesis.   We  obtain  $2 \widehat{\boldsymbol  \Sigma}^{-1}_{11}
 \mathbf{y} -\boldsymbol\beta = 0$, that is $\mathbf{y} = \frac{1}{2}
 \widehat{\boldsymbol  \Sigma}_{11}  \boldsymbol\beta$ (which  proves
 equation  \eqref{eq:w12_to_beta}).  Introducing  this result  in the
 dual of \eqref{eq:lagrange_form}, we get
 \begin{equation*}
   \max_{\boldsymbol\beta}-\frac{1}{4}      \boldsymbol\beta^\intercal
   \widehat{\boldsymbol         \Sigma}_{11}\boldsymbol\beta        +
   \mathbf{s}_{12}^\intercal      \boldsymbol\beta      -      \sum_i
   \left(\beta_i^1 + \beta_i^2 \right) (\mathbf{p}_{12})_i, 
 \end{equation*} 
 also equivalent to 
 \begin{equation*}
   \min_{\boldsymbol\beta}   \frac{1}{4}   \boldsymbol\beta^\intercal
   \widehat{\boldsymbol         \Sigma}_{11}\boldsymbol\beta        -
   \mathbf{s}_{12}^\intercal            \boldsymbol\beta            +
   \left\|\mathbf{p}_{12}\star \boldsymbol\beta \right\|_{\ell_1}.
 \end{equation*}
 Expressing this quantity by using the Euclidean norm achieves the
 proof.
\end{proof}

Hence, the column $\widehat{\boldsymbol \sigma}_{12}$ of the estimated
covariance  matrix  $\widehat{\boldsymbol   \Sigma}$  is  computed  by
solving the  {\protect \sc Lasso}  problem \eqref{eq:M_step_lasso}
using another coordinate descent method.

\begin{lemma}\label{lem:coord_descent_2}
  The solution to \eqref{eq:M_step_lasso}  is computed by updating the
  $j$th coordinate of $\widehat {\boldsymbol\beta}$ via
  \begin{equation}\label{eq:beta}
    \widehat \beta_j = 2 S\left((\mathbf{s}_{12})_j - \frac{1}{2} \sum_{k\neq j} (\widehat{\boldsymbol\Sigma}_{11})_{jk}
      \widehat \beta_k \; ; \; (\mathbf{p}_{12})_j \right) / (\widehat{\boldsymbol\Sigma}_{11})_{jj},
  \end{equation}
  where   $S(x   ;\rho)    =   \mathrm{sgn}(x)(|x|-\rho)_+$   is   the
  soft-thresholding operator.

  Moreover,  the procedure  which iteratively  updates the  entries of
  vector  $\widehat{\boldsymbol  \sigma}_{12}  =  \widehat{\boldsymbol
    \Sigma}_{11}\widehat   {\boldsymbol\beta}/2$   according  to   the
  solutions $\widehat {\boldsymbol\beta}$ of \eqref{eq:beta} converges
  to the solution of \eqref{eq:M_step_optim_block}.
\end{lemma}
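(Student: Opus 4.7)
}
My plan is to expand the squared Euclidean norm in \eqref{eq:M_step_lasso}, write the (sub)gradient with respect to a single coordinate $\beta_j$, and solve explicitly using the soft-thresholding identity. Then I will invoke a general convergence theorem for block/coordinate descent on a separable-nonsmooth-plus-smooth-convex objective to get the second assertion.

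First, expanding the squared norm yields, up to a constant independent of $\boldsymbol\beta$,
\begin{equation*}
  F(\boldsymbol\beta) \;=\; \tfrac{1}{4}\,\boldsymbol\beta^\intercal \widehat{\boldsymbol\Sigma}_{11}\boldsymbol\beta
  \;-\; \mathbf{s}_{12}^\intercal \boldsymbol\beta \;+\; \sum_{j} (\mathbf{p}_{12})_j\, |\beta_j|,
\end{equation*}
which is precisely the objective obtained in the previous proof. Since $\widehat{\boldsymbol\Sigma}_{11}$ is positive definite (it is the leading block of a positive definite matrix produced by Lemma~\ref{lem:cv_coord_descent_1}), the quadratic part is strongly convex and $F$ is a strictly convex, coercive function, so its minimizer exists and is unique.

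Next, I would compute the subgradient of $F$ with respect to the single coordinate $\beta_j$, holding the other $\beta_k$ fixed. The smooth part contributes $\tfrac{1}{2}(\widehat{\boldsymbol\Sigma}_{11})_{jj}\beta_j + \tfrac{1}{2}\sum_{k\neq j}(\widehat{\boldsymbol\Sigma}_{11})_{jk}\beta_k - (\mathbf{s}_{12})_j$, while the $\ell_1$ part contributes $(\mathbf{p}_{12})_j\,\partial|\beta_j|$. Setting the subdifferential to contain zero and denoting $A_j := (\mathbf{s}_{12})_j - \tfrac{1}{2}\sum_{k\neq j}(\widehat{\boldsymbol\Sigma}_{11})_{jk}\beta_k$, the optimality condition becomes
\begin{equation*}
  \tfrac{1}{2}(\widehat{\boldsymbol\Sigma}_{11})_{jj}\,\beta_j \;=\; A_j \;-\; (\mathbf{p}_{12})_j\, s_j,\qquad s_j\in\partial|\beta_j|.
\end{equation*}
A case analysis on $\text{sgn}(\beta_j)$ (positive, negative, zero), exactly as in the derivation of the soft-threshold estimator, shows that $\beta_j$ equals zero precisely when $|A_j|\le (\mathbf{p}_{12})_j$ and otherwise equals $2(A_j - \text{sgn}(A_j)(\mathbf{p}_{12})_j)/(\widehat{\boldsymbol\Sigma}_{11})_{jj}$. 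Both cases are summarized by the soft-thresholding formula \eqref{eq:beta}.

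Finally, for convergence of the cyclic coordinate update, I would argue as in the proof of Lemma~\ref{lem:cv_coord_descent_1}: the objective $F$ has the form ``smooth convex plus separable nonsmooth convex'' (the $\ell_1$ penalty being separable in the coordinates), and each coordinate-wise subproblem admits a unique minimizer given by \eqref{eq:beta} since $(\widehat{\boldsymbol\Sigma}_{11})_{jj}>0$. Therefore Theorem~4.1 of \citet{2001_JOTA_Tseng} applies and the iterates converge to the unique minimizer $\widehat{\boldsymbol\beta}$ of \eqref{eq:M_step_lasso}. Through the bijective relation \eqref{eq:w12_to_beta}, the vector $\widehat{\boldsymbol\sigma}_{12}=\widehat{\boldsymbol\Sigma}_{11}\widehat{\boldsymbol\beta}/2$ then converges to the solution of \eqref{eq:M_step_optim_block}. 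The main obstacle in this proof is purely bookkeeping in the coordinate subgradient step: one must carefully track the factors of $\tfrac{1}{2}$ arising from the $\tfrac{1}{4}\,\boldsymbol\beta^\intercal \widehat{\boldsymbol\Sigma}_{11}\boldsymbol\beta$ term so that the final formula indeed has the $2\,S(\cdot;\cdot)/(\widehat{\boldsymbol\Sigma}_{11})_{jj}$ form and not the more common $S(\cdot;\cdot)/(\widehat{\boldsymbol\Sigma}_{11})_{jj}$ appearing in the classical \textsc{Lasso}.
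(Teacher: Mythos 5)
Your derivation is correct and follows essentially the same route as the paper: a coordinate-wise soft-thresholding update for the $\ell_1$-penalized quadratic objective, with convergence of the cyclic sweep obtained from Theorem~4.1 of \citet{2001_JOTA_Tseng} via the uniqueness of each coordinate-wise minimizer. The only (harmless) difference is presentational: you take the subgradient of the expanded quadratic form $\tfrac{1}{4}\boldsymbol\beta^\intercal\widehat{\boldsymbol\Sigma}_{11}\boldsymbol\beta-\mathbf{s}_{12}^\intercal\boldsymbol\beta+\left\|\mathbf{p}_{12}\star\boldsymbol\beta\right\|_{\ell_1}$ directly, whereas the paper first casts \eqref{eq:M_step_lasso} as a generic \textsc{Lasso} with design matrix $\widehat{\boldsymbol\Sigma}_{11}^{1/2}/\sqrt{2}$ and response $\sqrt{2}\,\widehat{\boldsymbol\Sigma}_{11}^{-1/2}\mathbf{s}_{12}$ and then simplifies the resulting inner products by symmetry --- both yield exactly \eqref{eq:beta}, including the factor $2$.
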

  
\begin{proof}
  The proof of this lemma is postponed to Appendix \ref{ap:pathwise}.
\end{proof}

Finally, the  estimate of the matrix of  concentration $\mathbf{K}$ is
recovered  by inverting $\widehat{\boldsymbol  \Sigma}$, which  can be
done at  low computational  cost (see appendix  \ref{ap:inversion} for
details).    Hence,  we   solve  the   initial   maximization  problem
\eqref{eq:M_step} that defines the \textsc{m}-step of our algorithm.
\\

Implementation  of the  full  \textsc{em} algorithm  is  outlined  in
Algorithm~\ref{algo:1}.

\begin{algorithm}\label{algo:1}
\dontprintsemicolon
\While{$\widehat{Q}_{\boldsymbol\tau}(\widehat{\mathbf{K}}^{(m)})$ has
  not stabilized}{

 \BlankLine
 \BlankLine
 \CommentSty{//THE E-STEP: LATENT STRUCTURE INFERENCE}\;
 \eIf{$m = 1$}{
   \CommentSty{// First pass}\;
   Apply   spectral  clustering   on  the   empirical  covariance
   $\mathbf{S}$ to initialize $\widehat{\boldsymbol\tau}$\;
 }{
   Compute   $\widehat{\boldsymbol\tau}$   with   the   fixed-point
   relationship \eqref{eq:tauiq_laplace}, using $\widehat{\mathbf{K}}^{(m-1)}$\;
 }
 \BlankLine
 \BlankLine

 \CommentSty{//THE M-STEP: NETWORK INFERENCE}\; 
 Construct the penalty matrix $\mathbf{P}$ according to $\widehat{\boldsymbol\tau}$\;
 \While{$\widehat{\boldsymbol \Sigma}^{(m)}$ has not stabilized}{

   \For{each column of $\widehat{\boldsymbol \Sigma}^{(m)}$}{
     Compute  $\widehat{\boldsymbol \sigma}_{12}$ by  solving  the \textsc{lasso}--like  problem
     with path-wise coordinate optimization\;
   }
 }
 Compute $\widehat{\mathbf{K}}^{(m)}$ by block inversion of $\widehat{\boldsymbol \Sigma}^{(m)}$\;

 \BlankLine
 \BlankLine
 $m\leftarrow m+1$\;
}
\BlankLine
\caption{The full \textsc{em}--like algorithm}
\label{algo:main}
\end{algorithm}

\subsection{Choice of penalty parameters}\label{sec:choice}

As previously stated, the penalty parameters $\boldsymbol\lambda $ may
be estimated in the \textsc{e}-step of the algorithm (see
subsection~\ref{sec:Estep}). However, this choice is not necessarily
optimal for the estimation of $\mathbf{K}$, and other choices might
in practice lead to a better solution.  A good strategy is to keep the
estimated value of $\boldsymbol \lambda$ in the \textsc{e}-step that
leads to the estimation of $\boldsymbol\tau$, and to impose another
value of $\boldsymbol \lambda$ during the \textsc{m}-step. In this
part, we indicate a possible choice for the penalty parameters to use
in the \textsc{m}-step, ensuring a small error on the connectivity
components of the estimated graph.

Let  us   first  introduce  some   notation.   For  any   node  $i\in
\mathcal{P}$, let $C_i$ denote  the connectivity component of node $i$
in  the true  underlying conditional  dependency graph,  and $\widehat
C_i$ the corresponding component resulting from the estimate $\widehat
{\mathbf{K}}$ of  this graph  structure. The following  proposition is
based on \citet[][Theorem 2]{2006_AS_Meinshausen} and \citet[][Theorem
2]{2008_JMLR_Banerjee}.

\begin{proposition}
 Fix   some  $\varepsilon>0$  and   choose  the   penalty  parameters
 $\boldsymbol \lambda$ such that, for all $q,\ell \in \mathcal{Q}$, 
 \begin{equation}\label{eq:lambdachoice}
   2p^2 F_{n-2}\left(\frac 2 {n\lambda_{q\ell}} \left( \max_{i\neq j}
       S_{ii}S_{jj}   -  \frac   1  {\lambda_{q\ell}^2}\right)^{-1/2}
     (n-2)^{1/2}\right)\le \varepsilon,
 \end{equation}
 where $1-F_{n-2}$  is the  c.d.f.  of Student's $t$-distribution
 with $n-2$ degrees of freedom. Then
\begin{equation}\label{eq:error}
  \mathbb{P}(\exists k, \widehat C_k \nsubseteq C_k) \le \varepsilon.
\end{equation}
\end{proposition}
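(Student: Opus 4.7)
The strategy adapts the arguments of Theorem 2 of \citet{2006_AS_Meinshausen} and Theorem 2 of \citet{2008_JMLR_Banerjee} to our heterogeneous-penalty setting. The first step is a union-bound reduction: if $\widehat C_k\nsubseteq C_k$ for some $k$, then the estimated graph must contain at least one edge $(i,j)$ joining two vertices belonging to distinct true connectivity components, for which necessarily $K_{ij}=0$. Consequently
\[
\mathbb{P}(\exists k,\ \widehat C_k\nsubseteq C_k)\le \sum_{(i,j):\, K_{ij}=0,\, i\ne j}\mathbb{P}(\widehat K_{ij}\ne 0),
\]
and since the number of such ordered pairs is bounded by $p(p-1)<p^2$, it suffices to bound each per-pair tail probability by $2F_{n-2}(\cdot)$.

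Next I would translate the event $\{\widehat K_{ij}\ne 0\}$ into a condition that involves only the empirical covariance $\mathbf S$ and the penalty. The first-order (KKT) optimality conditions for the M-step problem \eqref{eq:M_step} imply that $\widehat K_{ij}=0$ is equivalent to an inequality of the form $|S_{ij}-\widehat\Sigma_{ij}|\le P_{\boldsymbol\tau_i\boldsymbol\tau_j}$. Using the block-wise dual characterization \eqref{eq:M_step_optim} together with the identity $\sum_{q,\ell}\tau_{iq}\tau_{j\ell}=1$, and then taking the worst case in $(q,\ell)$, one obtains a threshold that is uniform in the latent assignments and depends only on the $\lambda_{q\ell}$. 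The event $\{\widehat K_{ij}\ne 0\}$ is thereby contained in a tail event for the sample partial covariance of $X_i$ and $X_j$ given the remaining variables.

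Finally I would invoke the null distribution of sample partial correlations. When $K_{ij}=0$ the normalized statistic
\[
\widehat r_{ij|\mathcal P\setminus\{i,j\}}\,\sqrt{\tfrac{n-2}{1-\widehat r_{ij|\mathcal P\setminus\{i,j\}}^2}}
\]
follows a Student $t_{n-2}$ law. Rewriting the KKT threshold $2/(n\lambda_{q\ell})$ in terms of this standardized partial correlation, normalizing by $\sqrt{S_{ii}S_{jj}}$ (whence the factor $\max_{i\ne j}S_{ii}S_{jj}$) and taking the worst case over $(q,\ell)$, the two-sided Student tail yields
\[
\mathbb{P}(\widehat K_{ij}\ne 0)\le 2F_{n-2}\!\left(\tfrac{2}{n\lambda_{q\ell}}\bigl(\max_{i\ne j}S_{ii}S_{jj}-1/\lambda_{q\ell}^2\bigr)^{-1/2}\sqrt{n-2}\right).
\]
Summing over all pairs and invoking hypothesis \eqref{eq:lambdachoice} delivers the bound $2p^2 F_{n-2}(\cdot)\le\varepsilon$ and thus \eqref{eq:error}.

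The main obstacle lies in the second step: because $\mathbf P_{\boldsymbol\tau}$ is data-dependent and heterogeneous across pairs, isolating a clean per-pair threshold that converts into a partial-correlation inequality requires carefully unwinding the block-coordinate descent and dual formulations of Section~\ref{sec:inference}, and then choosing the worst case in $(q,\ell)$ to obtain a bound uniform in the inferred latent structure. Once this reduction is in place, the null-distribution arguments of the third step are classical, the Student $t_{n-2}$ law of sample partial correlations under $r=0$ following from the standard regression decomposition of Gaussian vectors.
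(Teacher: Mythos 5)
Your overall architecture --- union bound over falsely created cross-component edges, sub-gradient/KKT characterization of $\{\widehat K_{ij}\neq 0\}$ reducing to a tail event of the form $\frac n2|S_{ij}|\ge \sum_{q,\ell}Z_{iq}Z_{j\ell}/\lambda_{q\ell}$, then a Student tail bound --- is exactly the paper's route, which follows Theorem~2 of \citet{2008_JMLR_Banerjee}. The gap is in your third step, and it is a genuine one. You invoke the null law of the sample \emph{partial} correlation $\widehat r_{ij|\mathcal P\setminus\{i,j\}}$ under $K_{ij}=0$ and assert it is $t_{n-2}$ after the usual transformation. This is wrong twice over. First, the degrees of freedom: a sample partial correlation conditioning on the other $p-2$ variables has null law $t_{n-p}$, not $t_{n-2}$ --- and in the ``large $p$, small $n$'' regime of this paper that statistic may not even exist. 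Second, and more importantly, the statistic that actually arises when you normalize the KKT threshold on $|S_{ij}|$ by $\sqrt{S_{ii}S_{jj}}$ is the \emph{marginal} sample correlation $S_{ij}/\sqrt{S_{ii}S_{jj}}$, not a partial one; your translation is internally inconsistent.

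The correct argument, and the one the paper uses, exploits the fact that the union bound need only run over pairs $(i,j)$ with $j\notin C_i$, i.e.\ pairs lying in \emph{distinct true connectivity components}. For such pairs $\mathbf K$ is block-diagonal across components, hence so is $\boldsymbol\Sigma$, hence $\Sigma_{ij}=0$ and $X_i,X_j$ are \emph{marginally} independent; the marginal sample correlation then satisfies $\widehat r_{ij}\sqrt{(n-2)/(1-\widehat r_{ij}^2)}\sim t_{n-2}$, which after inverting the monotone map $x\mapsto x/\sqrt{1-x^2}$ produces precisely the threshold $\frac{2}{n\lambda_{q\ell}}\bigl(S_{ii}S_{jj}-\lambda_{q\ell}^{-2}\bigr)^{-1/2}(n-2)^{1/2}$ of \eqref{eq:lambdachoice}. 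Your union bound over \emph{all} pairs with $K_{ij}=0$ is too coarse: it includes within-component pairs that are conditionally but not marginally independent, and for those neither the marginal-correlation nor your (misstated) partial-correlation tail bound holds. Restricting to cross-component pairs is not a cosmetic refinement; it is what makes the $t_{n-2}$ distribution available at all.
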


\begin{proof}
 Here we simply indicate  the  main differences  between  the proof  of
 \citet[][Theorem  2]{2008_JMLR_Banerjee} and  what is  valid  in our
 context.   Note that according  to \eqref{eq:M_step},  the estimator
 $\widehat{\mathbf{K}}$  must satisfy the  following sub-gradient
 equation
\begin{equation*}
  \forall i\neq j, \quad  \frac n 2 \left(\widehat K_{ij}^{-1}-S_{ij}\right) -\left(\sum_{q,\ell} \frac{Z_{iq}Z_{j\ell}}{\lambda_{q\ell}} \right)\nu_{ij}=0
\end{equation*}
where $\nu_{ij}\in \textrm{sgn}(\widehat K_{ij})$.
Following  the proof  of  \citet[][Theorem 2]{2008_JMLR_Banerjee},  we
easily get
\begin{equation*}
  \mathbb{P}(\exists k, \widehat C_k \nsubseteq  C_k) \le p^2  \max_{i\in \mathcal{P}, j\notin C_i}  \mathbb{P}\left( \frac n 2 |S_{ij}| \ge \sum_{q ,\ell} \frac{Z_{iq}Z_{j\ell} } {\lambda_{q\ell}}\right).
\end{equation*}
Performing some computations involving the correlation between variables $X_i$ and $X_j$, we also obtain
\begin{equation*}
  \mathbb{P}(\exists k, \widehat C_k \nsubseteq  C_k) \le 2 p^2   \max_{q,\ell \in \mathcal{Q}} F_{n-2}\left(\frac{2(n-2)^{1/2}} {n\lambda_{q\ell}} \left( \max_{i\in \mathcal{P}, j\notin C_i} S_{ii}S_{jj} - \frac 1 {\lambda_{q\ell}^2}\right)^{-1/2} \right),
\end{equation*}
which entails the conclusion.
\end{proof}
 
\begin{remark}
 Following  \cite{2008_JMLR_Banerjee}, note that  in order  to ensure
 \eqref{eq:lambdachoice},  it   is  enough  to   choose  the  penalty
 parameter  $\boldsymbol  \lambda$ such  that,  for  all $q,\ell  \in
 \mathcal{Q}$,
\begin{equation*}
  \lambda_{q\ell}(\varepsilon) \ge \frac{2}{n}  \left(n-2+t^2_{n-2}\left(\frac{\varepsilon}{2p^2}\right)\right)^{1/2} \left( \max_{i\neq j} S_{ii}S_{jj}\right)^{-1/2}  t_{n-2}\left( \frac{\varepsilon}{2p^2}\right)^{-1},
\end{equation*}
where   $t_{n-2}(u)$  is  the   $(1-u)$-quantile  of   Student's
$t$-distribution     with      $(n-2)$     degrees     of     freedom, i.e. $F_{n-2}(t_{n-2}(u))= u$.
\end{remark}

\begin{remark}
 Inequality  \eqref{eq:lambdachoice} does not take into account that
 different penalty parameters are used for different hidden
 classes $q,\ell  \in \mathcal{Q}$.   An adaptation of  the preceding
 strategy is to use  current values $\mathbf {Z}^{(m)}$ obtained from
 the probabilities $ \boldsymbol  {\tau}^{(m)}$ of the hidden classes
 and  to choose    the   current   penalty    parameters   $\boldsymbol
 \lambda^{(m)}$ accordingly.  More precisely, let us set, for instance
 \begin{equation*} \forall i\in \mathcal{P}, \quad 
   Z_{iq}^{(m)}= \left\{
     \begin{array}{cc}
       1 & \text{if } q =\argmax_{\ell}\tau_{i\ell}^{(m)}\\
       0 & \text{otherwise}.
     \end{array}
\right.
 \end{equation*}
Then, when
\begin{equation}\label{eq:currentlambdachoice}
  2p^2  F_{n-2}\left(\frac 2
    {n\lambda_{q\ell}^{(m)}}    \left(    \max_{\substack{i\neq   j\\
          Z_{iq}^{(m)}Z_{j\ell}^{(m)}=1}}  S_{ii}S_{jj}   -  \frac  1
      {(\lambda_{q\ell}^{(m)})^2}\right)^{-1/2} (n-2)^{1/2}\right)\le
  \varepsilon,
\end{equation}
for all  $q,\ell \in \mathcal{Q}$,  the  current estimate
$\widehat {\mathbf{K}}^{(m)}$  of the dependency  graph will approximately
satisfy   \eqref{eq:error}.    Moreover,   in   order   to   ensure
\eqref{eq:currentlambdachoice},  it  is  enough  to choose,  for  all
$q,\ell \in \mathcal{Q}$,
\begin{equation}
  \label{eq:autolambda}
  \lambda_{q\ell}^{(m)}(\varepsilon) \ge \frac{2}{n}  \left(n-2+t^2_{n-2}\left(\frac{\varepsilon}{2p^2}\right)\right)^{1/2} \left( \max_{\substack{i\neq j\\ Z_{iq}^{(m)}Z_{j\ell}^{(m)} =1}} S_{ii}S_{jj}\right)^{-1/2}  t_{n-2}\left( \frac{\varepsilon}{2p^2}\right)^{-1}.
\end{equation}
\end{remark}

Typically, the kind of values obtained with \eqref{eq:autolambda} will
lead  to  large penalties  and,  consequently,  to \emph{very}  sparse
graphs:  practically, more  informative  networks can  be obtained  by
replacing  the  term  $\varepsilon/2p^2$ in  \eqref{eq:autolambda}  by
greater values. In any  cases, \eqref{eq:autolambda} should be seen as
a starting value.

\section{Link       with       Meinshausen      and       B\"uhlmann's
 approach} \label{sec:pseudo-likelihood}

We should also like to  fill the  gap between, on the one hand   solving
\eqref{eq:M_step}      and, on the other,      the      approach     proposed      in
\cite{2006_AS_Meinshausen}, where $p$ independent penalized regression
problems are solved using the  {\protect \sc Lasso}.  In fact, we shall show
that \citeauthor{2006_AS_Meinshausen}'s  approach is equivalent to
maximizing  the penalized  \emph{pseudo} log-likelihood  corresponding to
the size-$n$ sample of the multivariate Gaussian vector $X$ on the set
of non symmetric matrices.  Let us denote as $\widetilde{\mathcal{L}}$
this pseudo-likelihood, defined by
\begin{equation*}
  \log \widetilde{\mathcal{L}}(\mathbf{X};\mathbf{K})= \sum_{i\in\mathcal{P}}
  \left(\sum_{k=1}^n   \log  \mathbb{P}(X_i^k|X_{\mathcal{P}\backslash
      i}^k;\mathbf{K}_i)\right),
\end{equation*}
where $X^k_{\mathcal{P}\backslash i}$ is  the $k$th realization of the
Gaussian  vector  $X$, once   the  $i$th  coordinate has been removed. In  this
section,  the  $\ell_1$-norm of  matrices  is  restricted to  off-diagonal
elements   only,    that   is,   $\|\mathbf   A\|_{\ell_1}=\sum_{i\neq
  j}|A_{ij}|$.

\begin{proposition}
 Consider  the solution $\widehat  {\mathbf{K}}^{\textrm{pseudo}}$ to
 the penalized pseudo-likelihood problem
 \begin{equation}\label{eq:pen_pseudolikelihood}
   \widehat  {\mathbf{K}}^{\textrm{pseudo}} =\argmax_{\{K_{ij}, i\neq
     j\}} \; \; \log \widetilde{\mathcal{L}}(\mathbf{X};\mathbf{K}) -
   \|\mathbf P\star \mathbf K\|_{\ell_1},
 \end{equation}
 (whose   diagonal    is   fixed)   and    the   solution   $\widehat
 {\mathbf{K}}^{\textrm{MB}}$  given in  \cite{2006_AS_Meinshausen} to
 the  $p$ different  regression  problems, using  the matrix  penalty
 $2\mathbf{P}/n$. The two solutions have exactly the same null
 entries.
\end{proposition}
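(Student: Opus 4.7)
The plan is to show that, once one reparametrizes $\mathbf{K}$ in terms of the regression coefficients $\beta^{(i)}_j = -K_{ij}/K_{ii}$ ($j\neq i$), the penalized pseudo-likelihood decouples into $p$ independent Lasso regressions that coincide (up to a factor $2/n$ on the penalty) with the problems solved in \cite{2006_AS_Meinshausen}. The zero patterns will therefore match coordinate-by-coordinate.

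First, I would write down the conditional distribution of $X_i$ given $X_{\mathcal{P}\setminus i}$ under a centred Gaussian with concentration matrix $\mathbf{K}$: it is Gaussian with conditional mean $-K_{ii}^{-1}\sum_{j\neq i} K_{ij} X_j$ and conditional variance $K_{ii}^{-1}$. Substituting into the definition of $\log \widetilde{\mathcal{L}}$ gives, for each $i$, a sum of squared residuals of the form $\sum_{k=1}^n \bigl(X_i^k-\sum_{j\neq i}\beta^{(i)}_j X_j^k\bigr)^2$ weighted by $K_{ii}/2$, plus a term depending only on $K_{ii}$. Because the optimization is performed over \emph{non-symmetric} matrices with the diagonal fixed, $K_{ij}$ and $K_{ji}$ are free separate variables; hence the log-pseudo-likelihood splits cleanly into a sum of $p$ pieces, one per row $i$.

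Second, I would deal with the penalty. Since $K_{ii}$ is fixed and positive, $|K_{ij}|=K_{ii}|\beta^{(i)}_j|$, so $\|\mathbf P\star \mathbf K\|_{\ell_1}=\sum_i K_{ii}\sum_{j\neq i}P_{ij}|\beta^{(i)}_j|$, which is also additive in $i$. Collecting the $i$-th slice and dividing through by the positive factor $nK_{ii}/2$ yields exactly
\begin{equation*}
  \widehat{\boldsymbol\beta}^{(i)} = \argmin_{\beta^{(i)}}\; \frac{1}{n}\sum_{k=1}^n \Bigl(X_i^k-\sum_{j\neq i}\beta^{(i)}_j X_j^k\Bigr)^{2} + \sum_{j\neq i}\frac{2P_{ij}}{n}\,|\beta^{(i)}_j|,
\end{equation*}
which is the $i$-th Lasso regression of \cite{2006_AS_Meinshausen} with penalty matrix $2\mathbf P/n$. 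I would then invoke the uniqueness-of-sparsity-pattern property of the Lasso (the zero coordinates of any minimizer are determined by the KKT conditions) to conclude that $\widehat{\boldsymbol\beta}^{(i)}$ and the corresponding MB estimator share the same support.

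Third, translating back via $\widehat K^{\textrm{pseudo}}_{ij}=-\widehat K_{ii}\widehat\beta^{(i)}_j$, the indices $j\neq i$ where $\widehat K^{\textrm{pseudo}}_{ij}=0$ are exactly those where $\widehat\beta^{(i)}_j=0$, which in turn are exactly those where $\widehat K^{\textrm{MB}}_{ij}=0$, giving the claim. The only delicate point, which I would flag explicitly, is that dropping symmetry is essential for the decoupling: had we imposed $K_{ij}=K_{ji}$, the variable $K_{ij}$ would appear in both the $i$-th and $j$-th conditionals and the problem would no longer separate into $p$ independent Lassos, so the equivalence with MB would break down (and one would recover instead a symmetrized variant requiring the AND/OR post-processing discussed in the introduction).
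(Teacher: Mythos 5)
Your proposal is correct and follows essentially the same route as the paper: both derive the conditional Gaussian form of $\log\widetilde{\mathcal{L}}$, use the absence of a symmetry constraint to decouple the problem into $p$ independent penalized quadratic programs in the rows $\mathbf{K}_{i\backslash i}$, and relate them to the Meinshausen--B\"uhlmann regressions through the change of variables $\boldsymbol\beta^{(i)}=-\mathbf{K}_{i\backslash i}/K_{ii}$ with the penalty rescaling by $2K_{ii}/n$. The only (cosmetic) difference is that you substitute the regression parametrization directly into the objective and match it term-by-term with the MB Lasso, whereas the paper keeps the column of $\mathbf{K}$ as the variable and identifies the two problems by comparing their sub-gradient equations; your variant is, if anything, slightly more self-contained since identical objectives trivially yield identical supports.
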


\begin{proof}
 Denote    by    $\mathbf{K}_{\backslash    i\backslash    i}$    and
 $\mathbf{S}_{\backslash i\backslash  i}$, respectively, the matrices
 $\mathbf{K}$ and $\mathbf{S}$ once their $i$th row and $i$th
 column have been removed.      Moreover,      $\mathbf{K}_{i\backslash     i}$     and
 $\mathbf{S}_{i\backslash i}$ are the $i$th rows of the matrices with
 the $i$th term removed.  After some routine computations, and using
 classical   results   for    Gaussian   multivariate   vectors   (see
 Appendix~\ref{ap:pseudo-lik}), it can be shown that
\begin{equation}
 \label{eq:pseudolikelihood_result}
 \log \widetilde{\mathcal{L}}(\mathbf{X};\mathbf{K}) = 
 \frac{n}{2}\sum_{i\in\mathcal{P}}\left(      \log      K_{ii}     -
   K_{ii}S_{ii} -  2 \mathbf{S}_{i\backslash i}\mathbf{K}_{i\backslash i}
   -  \frac{1}{K_{ii}} \mathbf{K}_{i\backslash  i} \mathbf{S}_{\backslash
     i\backslash i} \mathbf{K}_{i\backslash i}^\intercal\right) + c,
\end{equation}
where $c$  does not  depend on $\mathbf{K}$.   Thus, if we  forget the
symmetry constraint on  $\mathbf{K}$, maximizing the pseudo-likelihood
\eqref{eq:pseudolikelihood_result}  with respect  to  the non-diagonal
entries of $\mathbf{K}$ is  equivalent to $p$ independent maximization
problems   with  respect   to  each   column  $\mathbf{K}_{i\backslash
 i}^\intercal$.    Consider,   for  instance,   the   last  column   of
$\mathbf{K}$,  that   is,  for  $i=p$,   and  the  relative   term  in
\eqref{eq:pseudolikelihood_result}.  This term can be written as
\begin{multline*}
  -\frac{n}{2K_{22}}   \left(   2   K_{22}   \mathbf{s}_{12}^\intercal
    \mathbf{K}_{i\backslash i}^\intercal + \mathbf{K}_{i\backslash
      i}^\intercal \mathbf{S}_{11} \mathbf{K}_{i\backslash i}\right) \\
  =     -      \frac{n}{2K_{22}}     \left\|     \mathbf{S}_{11}^{1/2}
    \mathbf{K}_{i\backslash        i}^\intercal        +        K_{22}
    \mathbf{S}_{11}^{-1/2}\mathbf{s}_{12}\right\|_2^2 + c',
\end{multline*}
where    we    use    the    block-wise   notation    defined    above
\eqref{eq:Sigma_block}.    The   term   $C'$   does  not   depend   on
$\mathbf{K}_{i\backslash  i}$,  which is  the  current  column of  the
concentration   matrix   to  infer.    Namely,   $c'   =  -   K_{22}^2
\mathbf{s}_{12}^\intercal \mathbf{S}_{11}^{-1}\mathbf{s}_{12}$.

Consider   now   the   penalized   version   of   the   log-likelihood
\eqref{eq:pen_pseudolikelihood}:  we  wish   to  solve  $p$  penalized
problems of minimization as defined  above, which can be written
as follows
\begin{equation}
  \label{eq:pseudolikelihood_penalised}
  \min_{\boldsymbol\beta}   \left\|  \mathbf{S}_{11}^{1/2}\boldsymbol\beta  +
    K_{22} \mathbf{S}_{11}^{-1/2}\mathbf{s}_{12}\right\|_2^2 + \frac{2K_{22}}{n} \left\|\mathbf{p}_{12} \star \boldsymbol\beta \right\|_{\ell_1}.
\end{equation}

\citeauthor{2006_AS_Meinshausen}  wish  to  solve  $p$  {\protect  \sc
  Lasso}-problems, for instance for the last variable $p$,
\begin{equation}
  \label{eq:meinshausen_problem}
  \min_{\boldsymbol\alpha}  \frac{1}{n}\left\| \mathbf{X}_p -\mathbf{X}_{\backslash
      p}\boldsymbol\alpha                \right\|_2^2                +
  \left\|2n^{-1}\mathbf{p}_{12}\star \boldsymbol\alpha \right\|_{\ell_1},
\end{equation}
where  $\mathbf{X}_p$   is  the  $p$th  column   of  $\mathbf{X}$  and
$\mathbf{X}_{\backslash  p}$ is the  matrix of  data the
$p$th  column  has been removed (note  that   we  adapted  the penalization term
corresponding to the framework developed here).

The  minimum is  reached in  \eqref{eq:pseudolikelihood_penalised} for
null-differentiation, and we get
\begin{equation*}
  2   \mathbf{S}_{11}   \boldsymbol\beta    +   2   K_{22}   \mathbf{s}_{12}^\intercal   +\frac{2K_{22}}{n}
  \mathbf{p}_{12}\star\nu =0,
\end{equation*}
where    $\nu\in\mathrm{sign}(\boldsymbol\beta)$.    The    same   for
\eqref{eq:meinshausen_problem}, and we get
\begin{equation*}
  \frac{2}{n}            \mathbf{X}_{\backslash           p}^\intercal
  \mathbf{X}_{\backslash p}\boldsymbol\alpha  - \frac{2}{n} \mathbf{X}_p^\intercal
  \mathbf{X}_{\backslash p}+ 2n^{-1}\mathbf{p}_{12}\star\gamma =0,
\end{equation*}
where  $\gamma\in\mathrm{sign}(\boldsymbol\alpha)$.   Now,  just  note
that $n^{-1}\mathbf{X}_{\backslash p}^\intercal \mathbf{X}_{\backslash
  p}   =   \mathbf{S}_{11}$   and   $n^{-1}\mathbf{X}_{p}   ^\intercal
\mathbf{X}_{\backslash  p} = \mathbf{s}_{12}^\intercal$,  and problems
\eqref{eq:pseudolikelihood_penalised}                               and
\eqref{eq:meinshausen_problem}    are   equivalent,    provided   that
$\boldsymbol\alpha=-\boldsymbol\beta/K_{22}$.

Thus, the columns of the concentration matrix (with a removed diagonal
term)  inferred from the  penalized maximum  pseudo-likelihood problem
\eqref{eq:pen_pseudolikelihood},  and   those  inferred   with 
\citeauthor{2006_AS_Meinshausen}'s  approach,  share  exactly the  same
null-entries, that is, the same network of conditional dependencies.
\end{proof}


\section{Numerical experiments}\label{sec:numexp}

In  this section we  present numerical  experiments on both synthetic
data, to investigate how well the proposed selection procedure behaves, 
and real data, to demonstrate  the practical use  of GGM covariance
selection with latent structure.
In the remainder of this section we focus on an affiliation model
\eqref{eq:affiliation}, the
choice of the penalty being made in line with Section~\ref{sec:choice}.
More precisely, we fix the ratio
$\lambda_{\text{in}}/\lambda_{\text{out}}=1.2$ and either let the
value $1/\lambda_{\text{in}}$ vary when considering precision/recall
curves for synthetic data, or fix this parameter according to \eqref{eq:autolambda} when
dealing with real data.

\subsection{Synthetic  data}\label{subsec:setting}

We  perform numerical  experiments to  assess the  performance  of our
approach (\texttt{SIMoNe}, Statistical Inference for
Modular Network) and compare it to already existing methods for GGM
covariance selection:  \texttt{GLasso}  \citep{2007_BS_Friedman}  and
\texttt{GeneNet} \citep{2005_SAGMB_Schafer}.

Data synthesis in our framework requires  the simulation  of a
structured sparse  inverse covariance  matrix.   
To this aim,  we first
simulate a graph with an  affiliation structure.  We consider a simple
binary affiliation model where two types of edges exist: edges between
nodes of the same class  and edges between nodes of different classes.
The binary incidence matrix of the graph is transformed by randomly
flipping the sign of some elements in order to simulate both
positively and negatively correlated variables.
Positive definiteness  of this  matrix is  ensured by  adding a
large enough constant to the diagonal. 
The matrix is then further normalized to have a diagonal
of ones.  A Gaussian sample of
size $n$  with zero  mean and the above covariance  matrix is then  simulated 50
times.   The results  we present  below are  averaged over  the 50
samples. At the end of this section we discuss the performances of our method when
there is no latent structure on the data.
\begin{figure}[htbp!]
  \centering
  \begin{tabular}{@{}c@{}c@{}c@{}}
    \includegraphics[width=.33\textwidth]{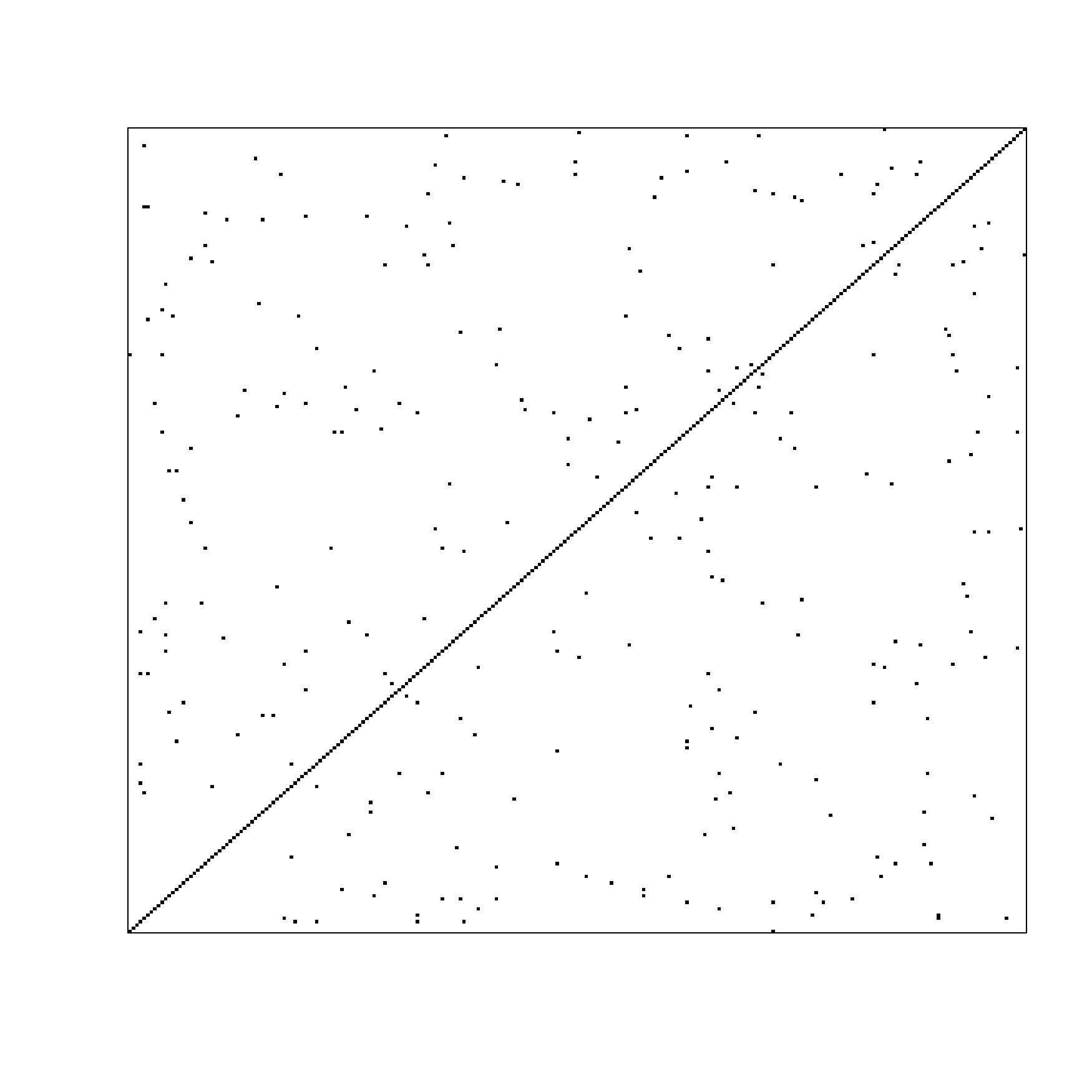}
    &
    \includegraphics[width=.33\textwidth]{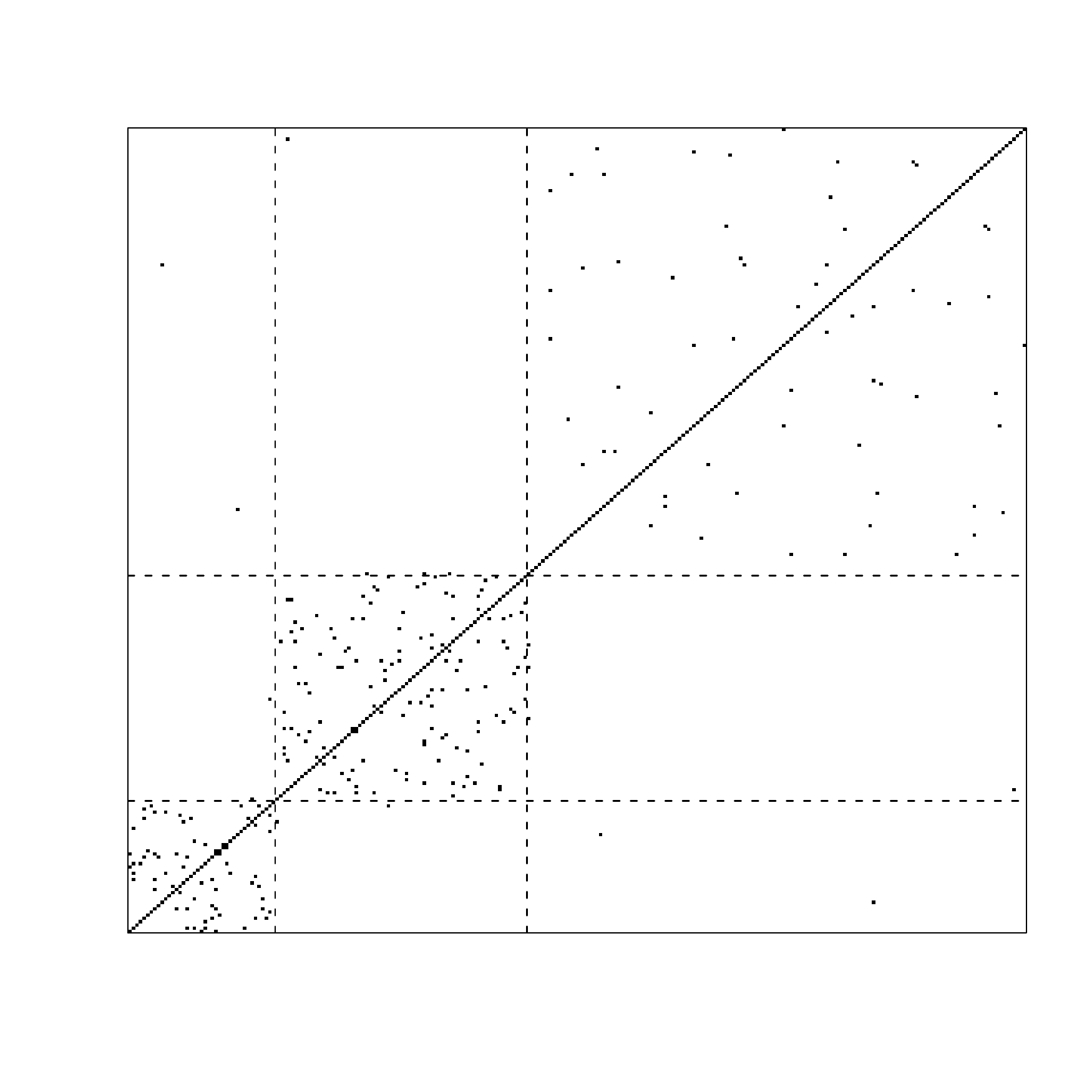}
    & 
    \includegraphics[width=.33\textwidth]{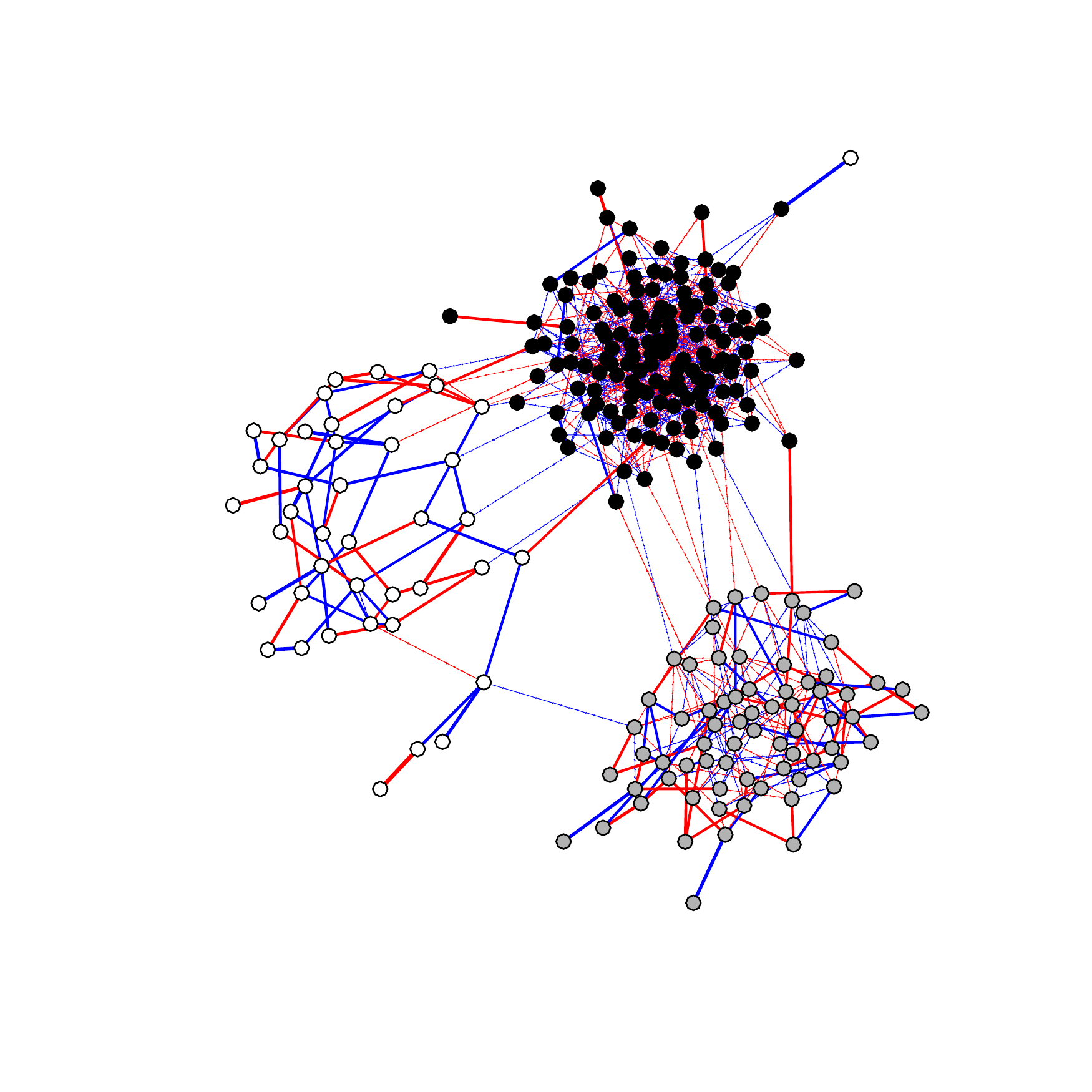} \\
    (a) & (b) & (c) 
  \end{tabular}
  \caption{Simulation of  the structured sparse  concentration matrix.
    Adjacency  matrix   without  (a)  and   with  (b) rows and columns
    reorganized   according  the affiliation  structure  and  corresponding graph
    (c). }
  \label{fig:sampled_graph}
\end{figure}

We simulate sparse graphs with $p=200$  and $n$ from 100 to 2000  
($n/p \in \{1/2,2,3,6,10\}$). We use  
a probability  of intra-cluster connection of  $0.125$, a probability
of  inter-cluster connection  of  $0.0025$, $Q=3$ groups and equal
group proportions $\alpha_i=1/3$. With  these settings,  the
theoretical expected number of edges is about 862 and the total number
of  potential   edges  is   19900.   A  sample   graph  is   given  in
Figure~\ref{fig:sampled_graph}. The running times of \texttt{GLasso} and \texttt{SIMoNe} are
of the same order. For the settings described above the running time
varies from a few seconds to a few minutes, according to the penalty parameter.
\\

We focus the  experiments on the ability to  recover existing edges of
the network, that is the  nonzero entries of the concentration matrix.
This is  a binary decision  problem where the compared  algorithms are
considered as  classifiers. The decision  made by a  binary classifier
can  be summarized using  four numbers:  True Positives  ($TP$), False
Positive ($FP$), True Negatives ($TN$) and False Negatives ($FN$).  We
have chosen to draw precision/recall  curves to display this information and
compare how well the methods perform (Figure \ref{fig:prcurves}).

Precision ($TP/(TP+FP)$) is the ratio of the number of true nonzero
elements to the total number of nonzero elements in the estimated
concentration matrix $\widehat{\mathbf{K}}$.  Recall that ($TP/(TP+FN)$) is
the ratio of true nonzero elements in
$\widehat{\mathbf{K}}$ to all nonzero entries of the real
concentration matrix $\mathbf{K}$.  In a sparse context where the
number of actual positives ($TP+FN$) is small compared to the number
of actual negatives ($FP+TN$), precision/recall curves give a more
informative picture of an algorithm's performance than classical
Receiver Operator Characteristic (ROC) curves. Indeed, ROC curves plot
the False Positive Rate ($FPR=FP/(FP+TN)$) against the True Positive
Rate ($TPR=TP/(TP+FN)$). When the number of total positives is small
compared to the number of total negatives, small variations of $FP$ and
$TP$ will result in small variations of $FPR$ and large variations of
$TPR$, which is not relevant for comparing performances.  In a
statistical framework, the recall is equivalent to the power and the precision
is equivalent to one minus the False Discovery Proportion.

Additionally  to   the  \texttt{GLasso}  \citep{2007_BS_Friedman}  and
\texttt{GeneNet}  \citep{2005_SAGMB_Schafer}  we  consider  two  other
 procedures:
\begin{itemize}
\item
When $n$ is greater than $p$, a straightforward way to obtain
an estimate of the inverse covariance matrix is to invert the
empirical covariance matrix. Although this approach is unlikely to perform well
in a selection context (since it is designed for
estimation purposes), it is worth comparing it to its competitors in order to
assess the scale of improvement. 
We call this procedure \texttt{InvCor}.
\item 
When the latent structure $\mathbf{Z}$ of the concentration matrix is
known, our method can be applied without its E-step and
produce a relevant selection of the nonzero entries of the concentration matrix.
This approach represents the upper limit of our method,
since it makes use of an usually unavailable source of
information. This procedure is denoted \texttt{perfect SIMoNe}.

In some problems  the latent structure of the graph is partially
known and  this information can be  used in the E-step  to improve the
estimation of the latent structure. For example, when inferring gene
regulation networks,  a subset of identified genes may be known to belong
to the same functional module. 
\end{itemize}

The approach of \cite{2006_AS_Meinshausen} was also tested. The
principle of this approach, and the performances obtained are close to those of \texttt{GLasso}, but
it was always slightly outperformed. We have therefore decided, for the sake of
brevity, to report only the four previously described procedures.

For    the   methods    based   on    penalization   (\texttt{GLasso},
\texttt{SIMoNe}  and  \texttt{Perfect SIMoNe}),   the
precision/recall   curves  are   plotted  by   varying   the  penalty
parameter (namely $1/\lambda_{\text{in}}$ in our case).  The penalty  parameter  varies  from close  to  zero to  a
maximum value  which forces all off-diagonal elements of
$\widehat{\mathbf{K}}$ to be null (see
Appendix   \ref{sec:bound}). The   \texttt{GeneNet}   and
\texttt{InvCor}  methods  are  plotted  by  sorting  the  elements  of
$\widehat{\mathbf{K}}$  according to  their  absolute   values,  and   choosing  different
thresholds to find nonzero entries.

Even when $n$ is really greater than $p$ (Figures \ref{fig:prcurves}
(a-b)) \texttt{Invcor} is always dominated by the other methods from a
selection point of view. This simple check shows that even in a favorable
context with abundant data, penalization procedures improve the selection 
of nonzero entries of the concentration matrix, in comparison with methods
based on estimation of these entries.

Although \texttt{ GeneNet} and \texttt{GLasso} can provide different
results on a given run, both methods perform similarly on average (50
runs for our experiment). The only parameter we change in this
experimental setting is the $n/p$ ratio. 

\texttt{Perfect SIMoNe}'s curves dominate all other curves for any $n/p$
ratio. This clearly shows that the knowledge of the  structure
provides a valuable information for selecting the nonzero entries of
the concentration matrix. When the structure is hidden, the main
problem of our approach is then to find a reliable estimate of this
structure from the initial data.  

\texttt{Perfect SIMoNe} and  \texttt{SIMoNe} perform
 equivalently when $n=10p$ and when the ratio $n/p$ decreases,   
\texttt{Perfect SIMoNe} tends to outperform  \texttt{SIMoNe}
more clearly. This means that \texttt{SIMoNe} is able to recover the
latent structure when there is enough data, but does not find a
substantial structure when $n$ drops below $p$.

When $p>n$,  the empirical covariance matrix ceases to be
invertible. Thus, Figures \ref{fig:prcurves}
(e-f) do not display the \texttt{InvCor} results.  Although it is
possible to show that both \texttt{GLasso} and \texttt{SIMoNe} increase the number of
inferred true nonzero elements with the number of iterations in all settings,
precision/recall curves show the relative poor performances for all tested algorithms
when $p \geq n$.

Notice that when $p>n$, the estimated latent structure is not
reliable. Nevertheless, the performance of \texttt{SIMoNe} remains
comparable to that of \texttt{GLasso}. We can therefore see that assuming the existence of a
latent structure when there is none does not impair the selection
of nonzero entries of the matrix $\mathbf{K}$.

\begin{figure}[htbp]
  \begin{tabular}{@{}c@{}c@{}}
        \includegraphics[width=.5\textwidth]{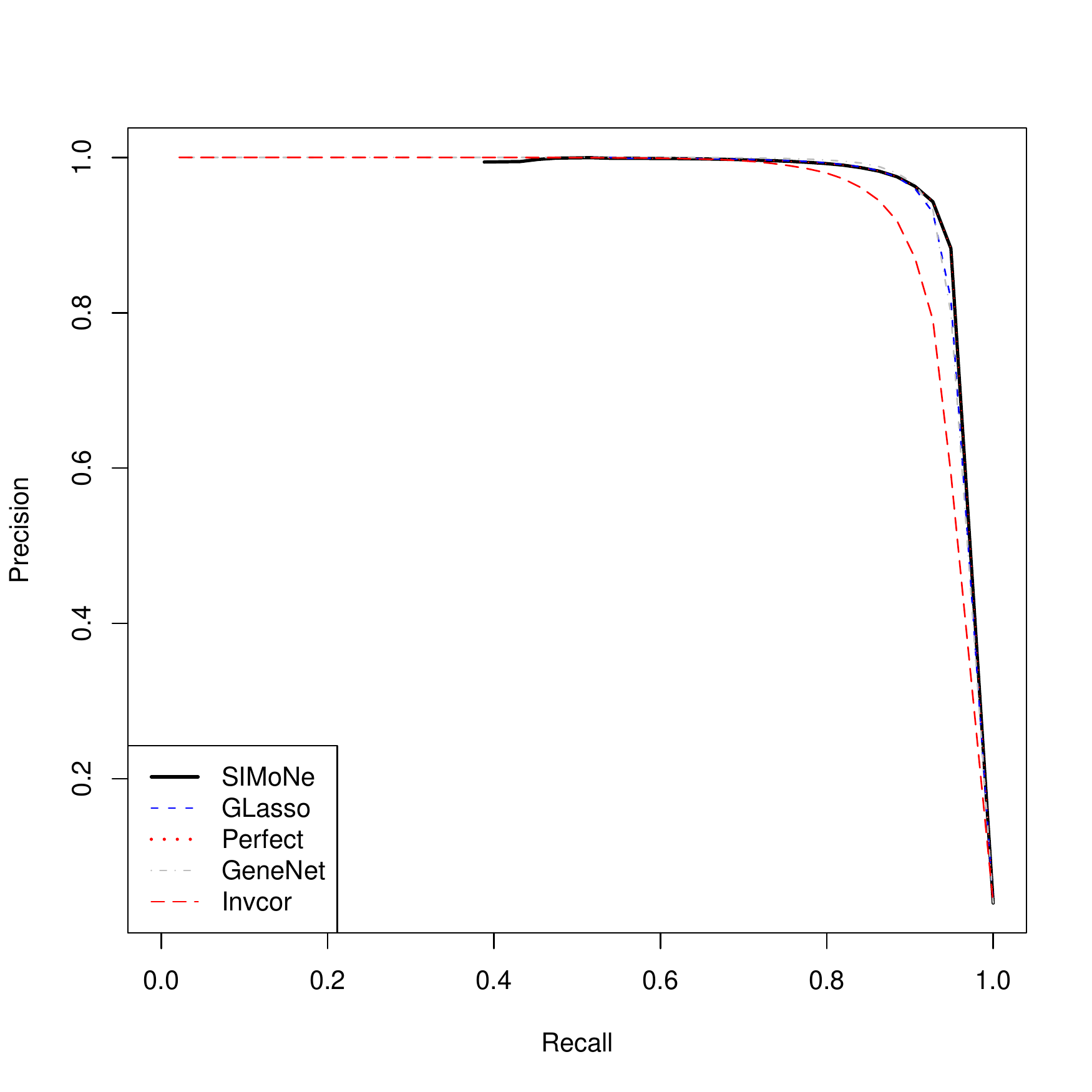}
  &     \includegraphics[width=.5\textwidth]{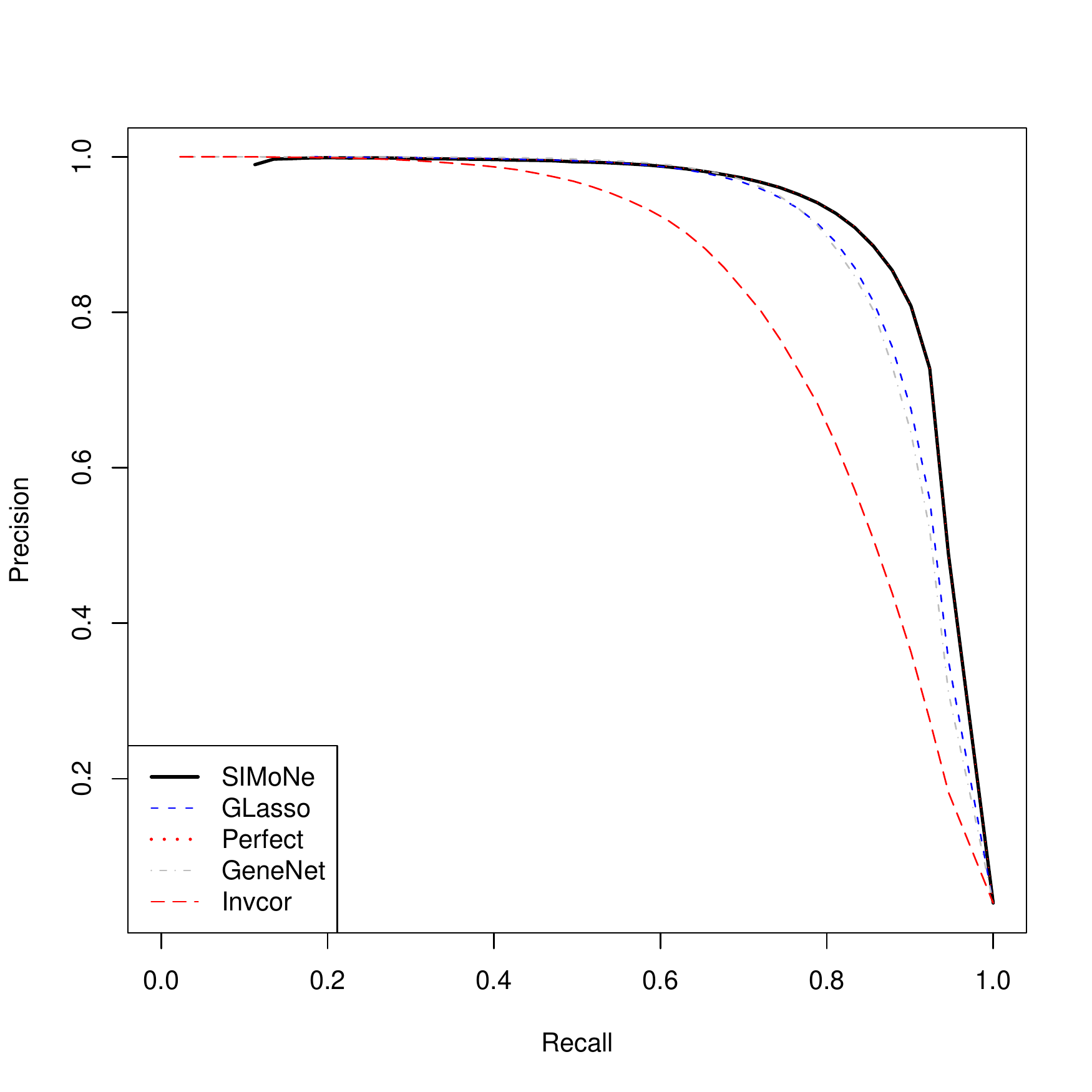} \\
(a) $n=10p$& (b) $n=6p$ \\
        \includegraphics[width=.5\textwidth]{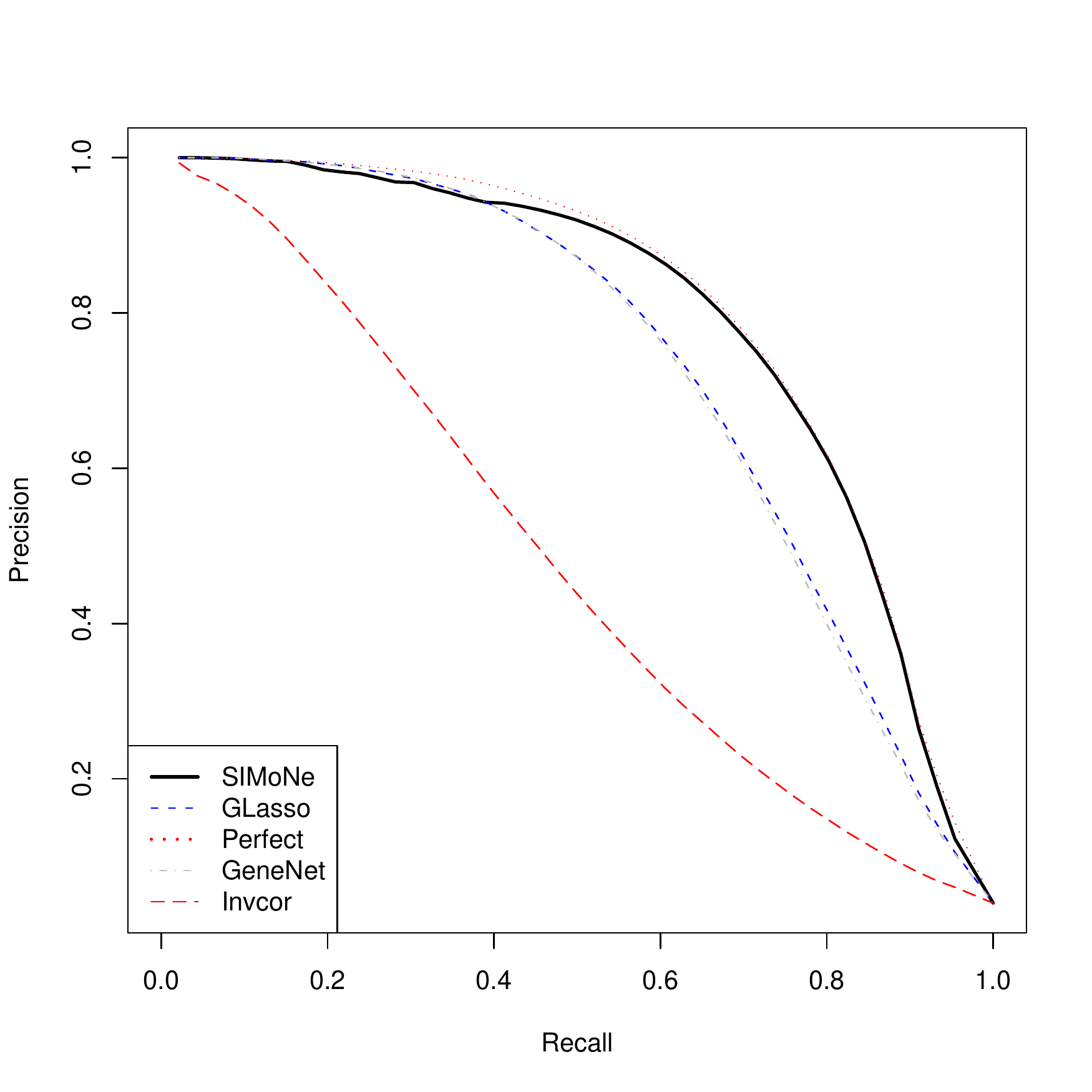}
  &     \includegraphics[width=.5\textwidth]{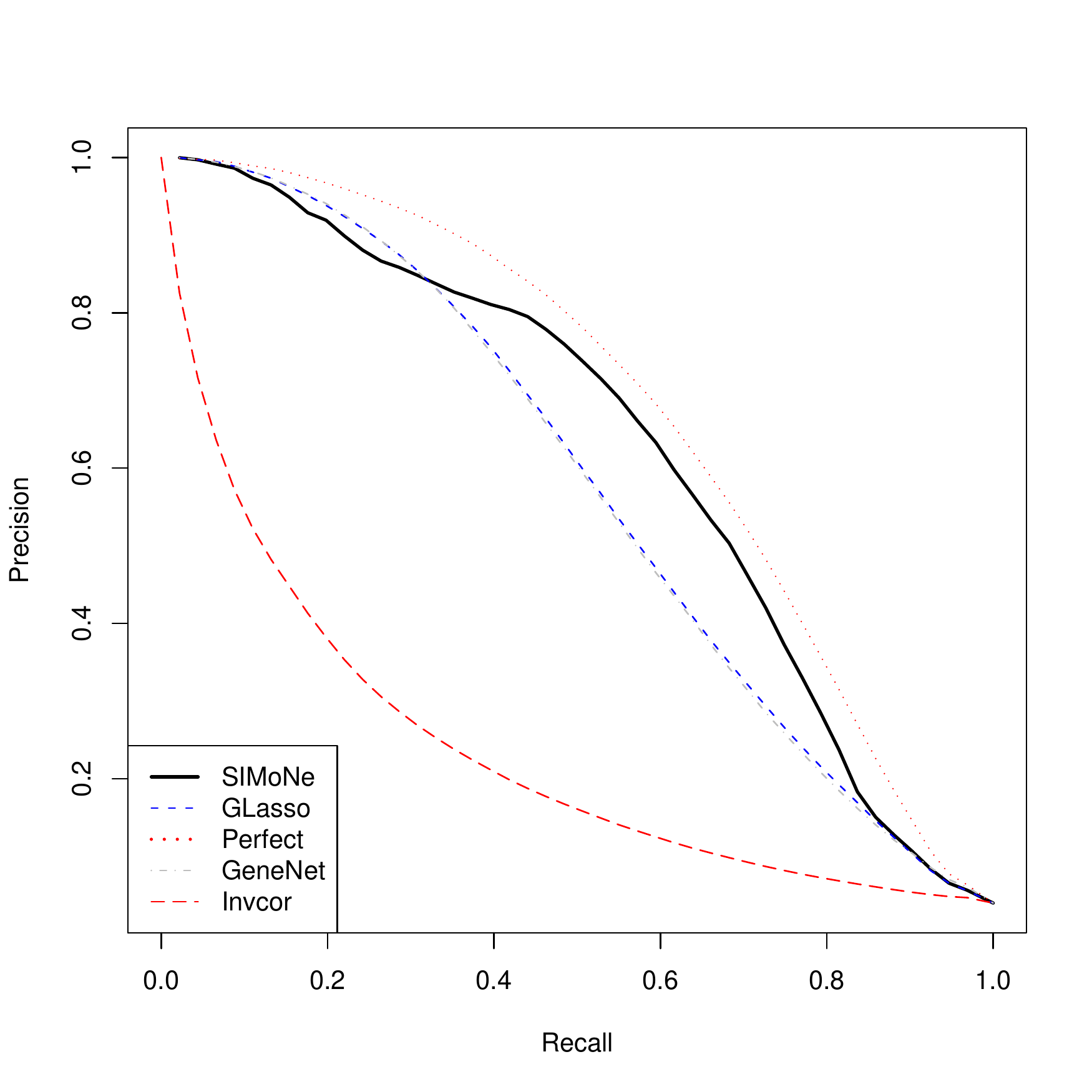} \\
(c) $n=3p$& $n=2p$ (d) \\
        \includegraphics[width=.5\textwidth]{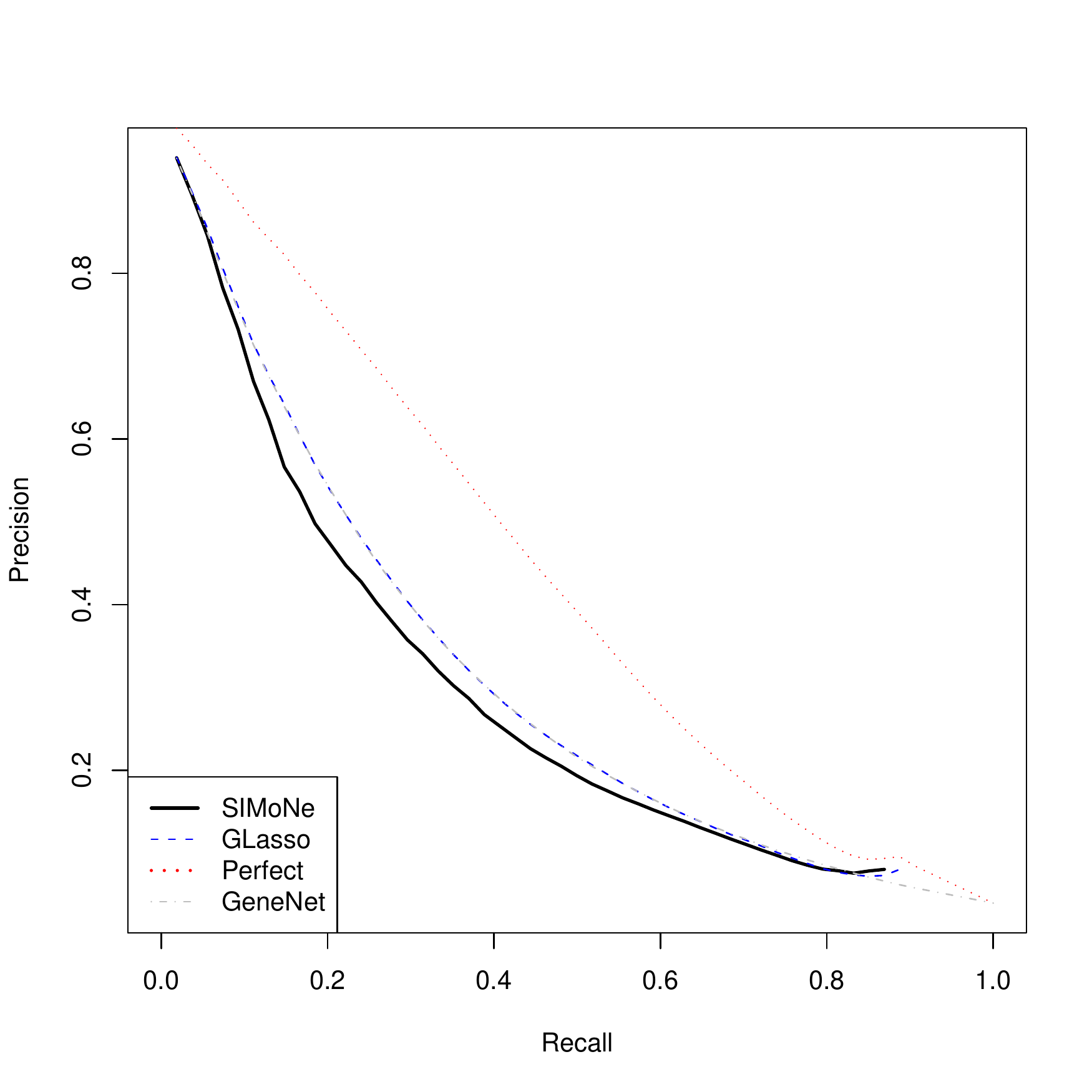}
  &     \includegraphics[width=.5\textwidth]{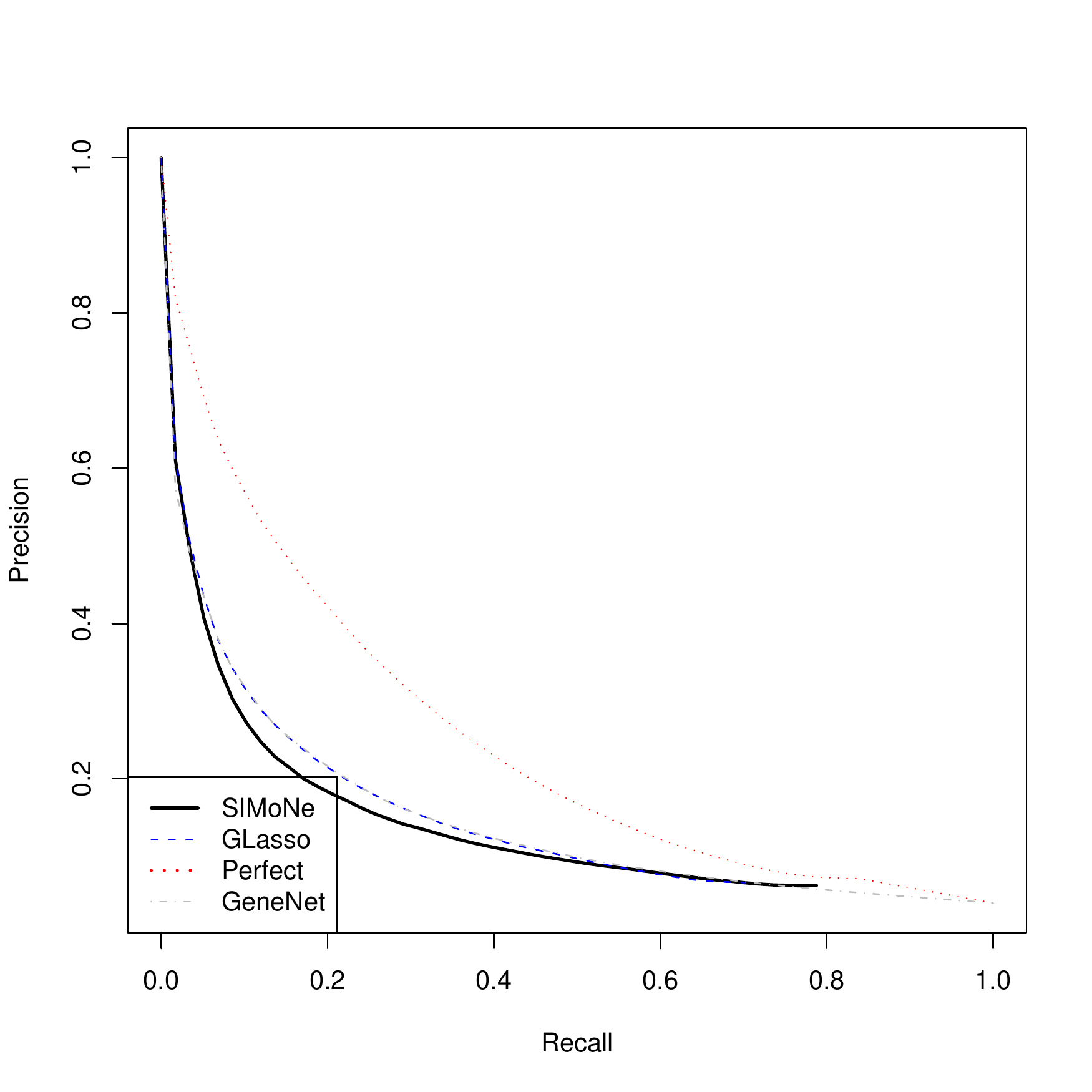} \\
(e) $n=p$& (f) $n=\frac{p}{2}$\\
  \end{tabular}
  \caption{Precision/recall curves comparing the performance of
    GeneNet,
 GLasso, SIMoNe and perfect SIMoNe, when inferring the structure of a
simulated graph with $p=200$ variables.\label{fig:prcurves}}
\end{figure}

\subsection{Breast Cancer data}

We tested our algorithm on a gene expression data set provided by
\citet{2006_JCO_hess} and concerning $133$ patients
with stage $I-III$ breast cancer.  The patients were treated with
chemotherapy prior to surgery. Patient response to the treatment is classified as either a pathologic complete response (pCR) or a 
residual disease (not-pCR).
\citet{2006_JCO_hess} and \citet{2008_BMC_Natowicz} developed and
tested a multigene predictor for treatment response on this data set. They focused on a set of 
26 genes having a  high  predictive value (see
Table \ref{tab:genes_reference}). We
thus consider a total of  $n=133$ cases containing $p=26$ gene
expression levels.

When dealing with gene regulatory networks, we typically observe $n$
independent microarray experiments, each giving the expression
levels of the same $p$ genes. If the same experimental conditions
are used for all microarrays, these may be considered as a sample of
the same experiment.  In the application in question, cases from the
pCR class (34 cases) and from the not-pCR class (99 cases) clearly do
not have the same distribution. We apply our algorithm on each
class of patients. Two distinct gene regulatory networks are thus
inferred.\\


Figure \ref{fig:cancer} plots the resulting networks obtained for three
different penalizations. The penalization parameters  were heuristically chosen from the number of expected nonzero
entries. We used $Q=2$ latent clusters, and it is interesting to note
that when assuming more than two clusters, the algorithm systematically
produces exactly two non-empty clusters.  

The inferred networks exhibit very different structures
according to the  class of patients. This in itself is
interesting and suggests that  gene regulation differs with respect
to the presence or absence of a pCR.   

The network obtained with not-pCR cases displays a two-star
pattern. Each star connects to a unique gene, either SCUBE2 or IGFBP4.
Almost all the most significant connections imply SCUBE2. This star
pattern suggests that further studies of this particular gene would be
of interest for understanding residual disease.

The network estimated with the pCR cases has a different two-cluster 
structure. In particular, it groups  IGFBP4 and  SCUBE2 in the same
cluster with a direct significant link. This again indicates a
completely different relationship between the genes in pCR versus non-pCR.

\begin{figure}[htbp]
  \centering
  \begin{tabular}{|c | c c|}
 \hline
    & \textbf{not-pCR} & \textbf{pCR} \\ \hline
     \rotatebox{90}{\hspace{2cm }Low penalty}&\includegraphics[width=.475\textwidth]{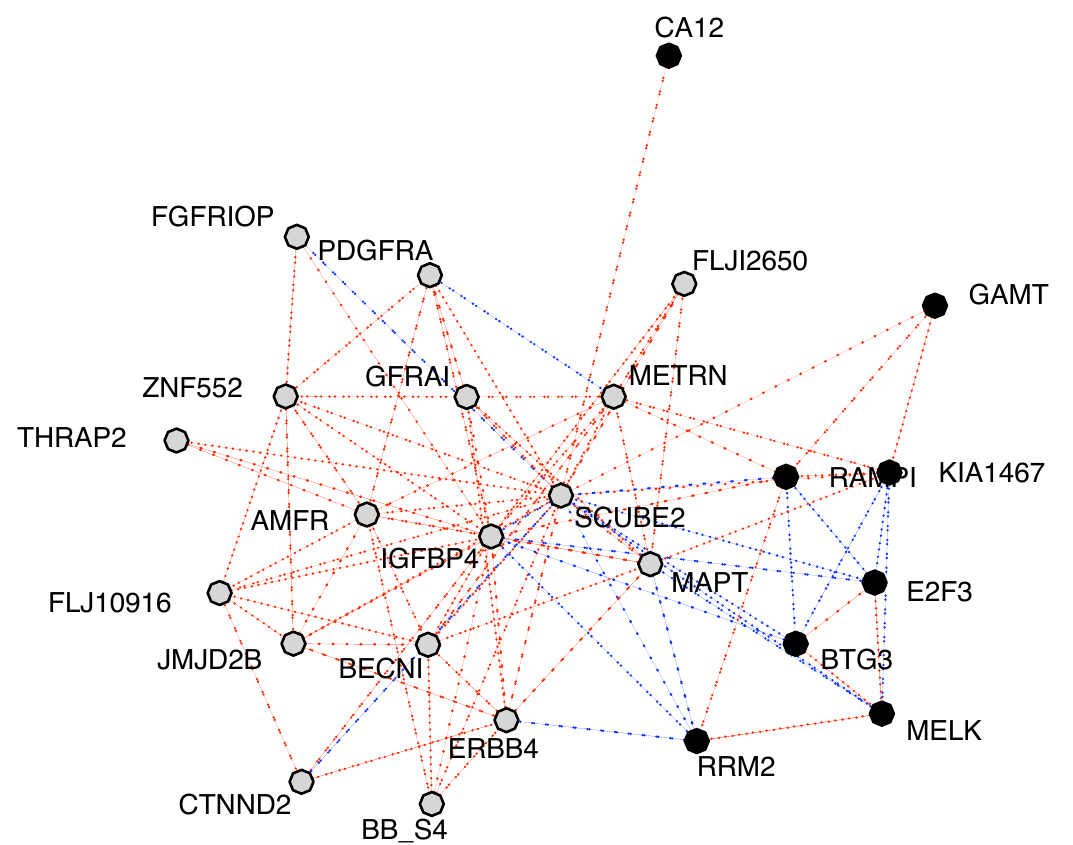}& 
    \includegraphics[width=.475\textwidth]{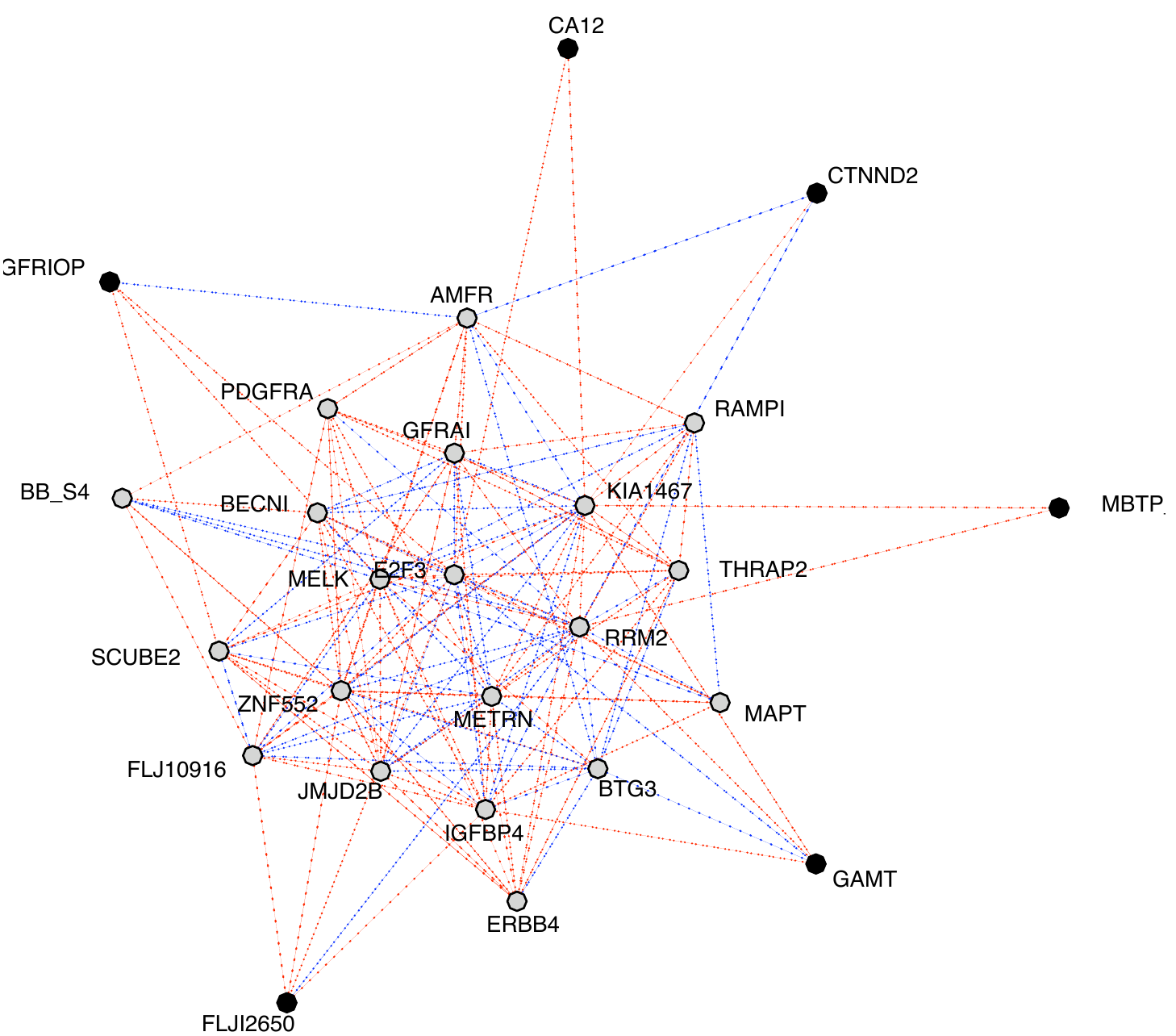} \\  \hline
     \rotatebox{90}{\hspace{2cm } Medium penalty}&\includegraphics[width=.475\textwidth]{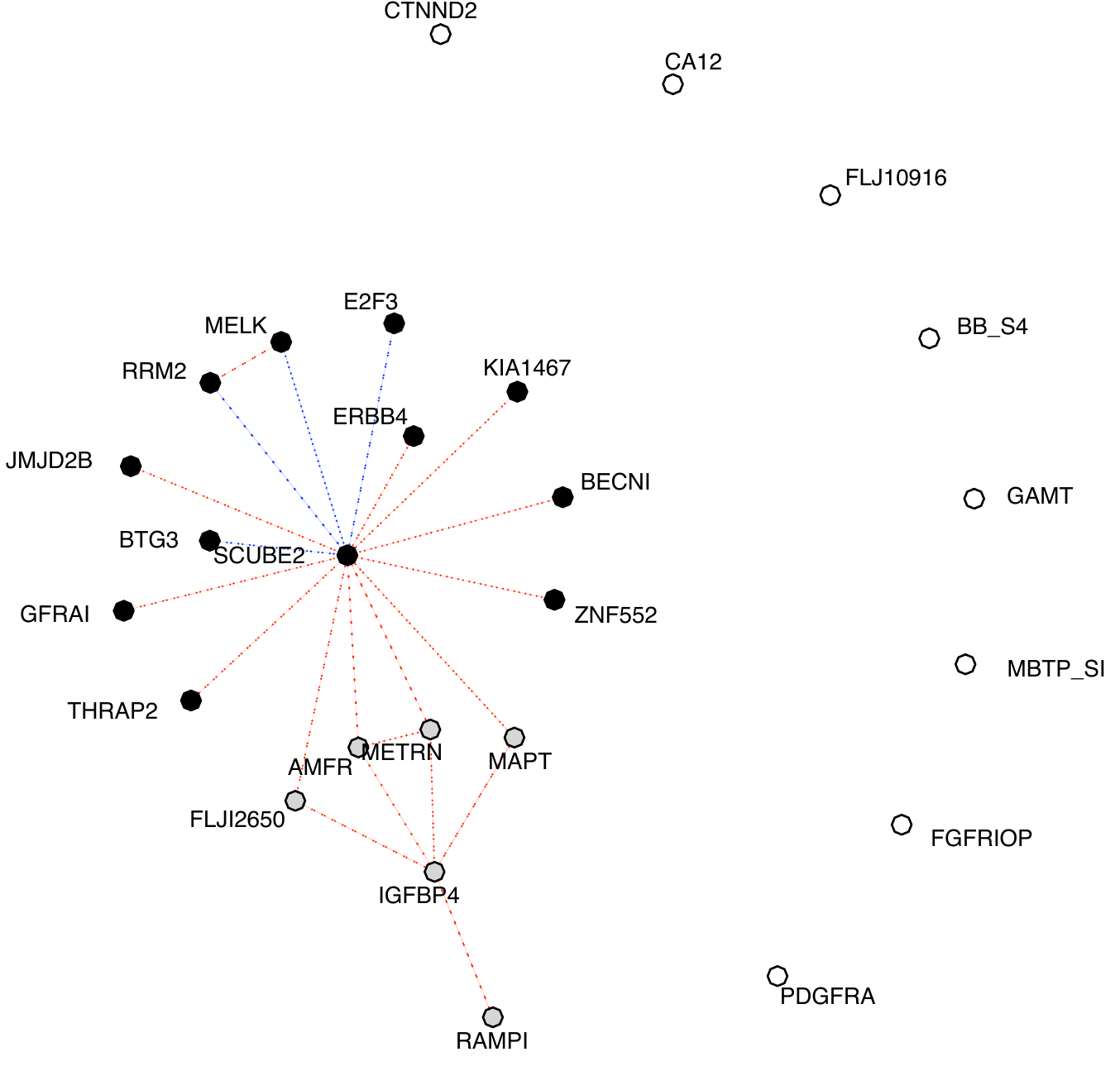}& 
    \includegraphics[width=.475\textwidth]{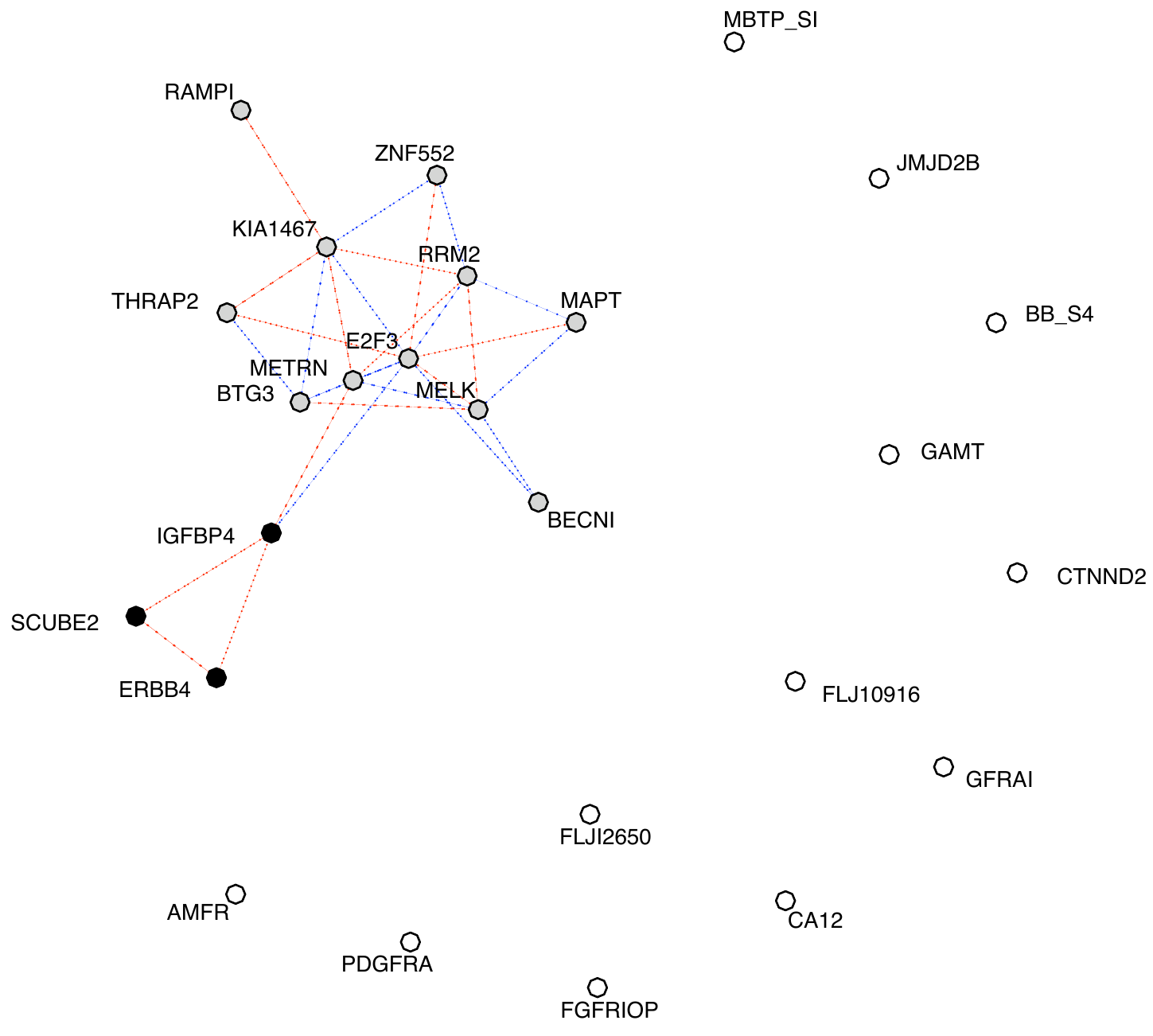} \\  \hline
     \rotatebox{90}{\hspace{2cm } High penalty}&\includegraphics[width=.475\textwidth]{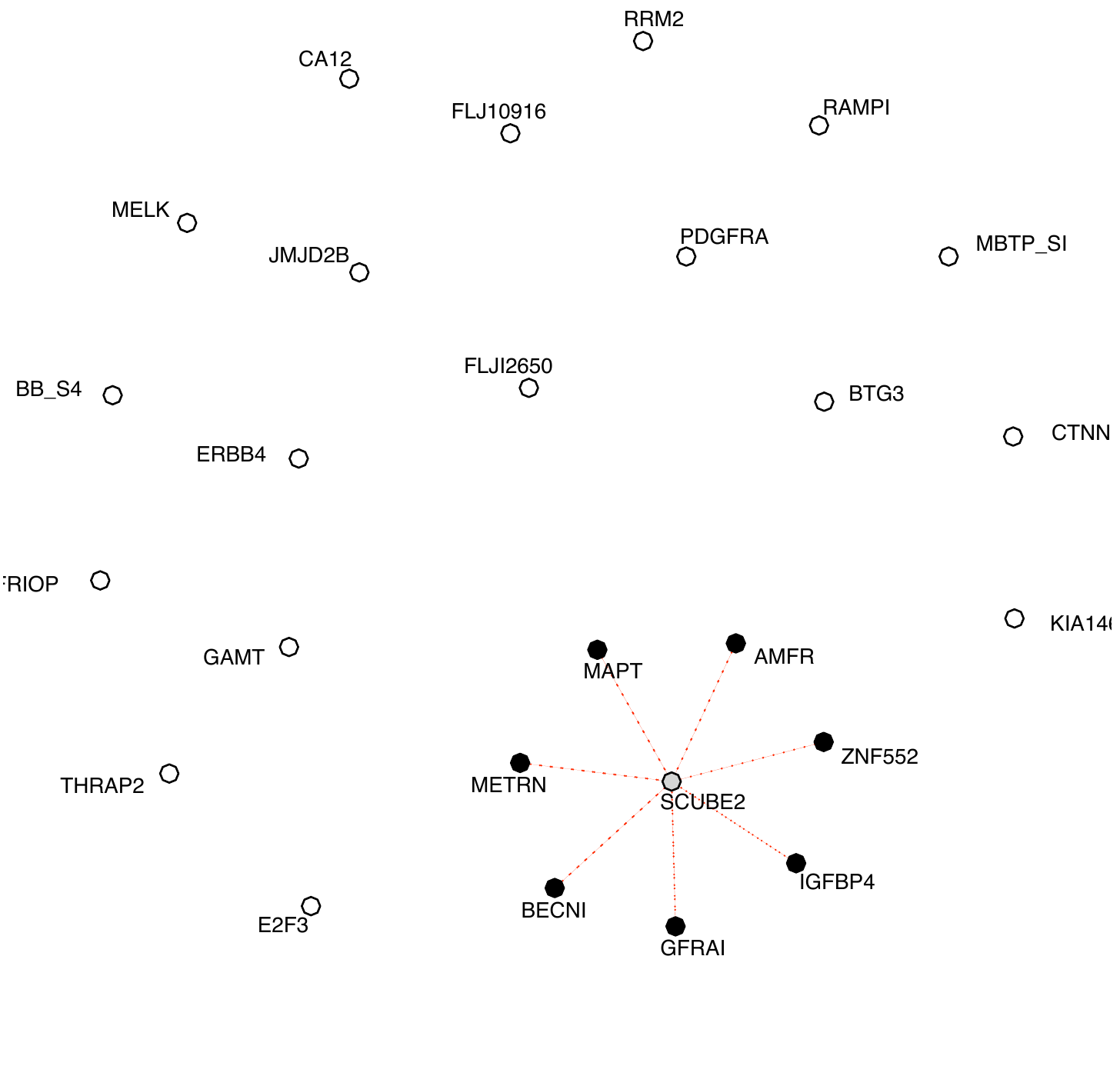}& 
    \includegraphics[width=.475\textwidth]{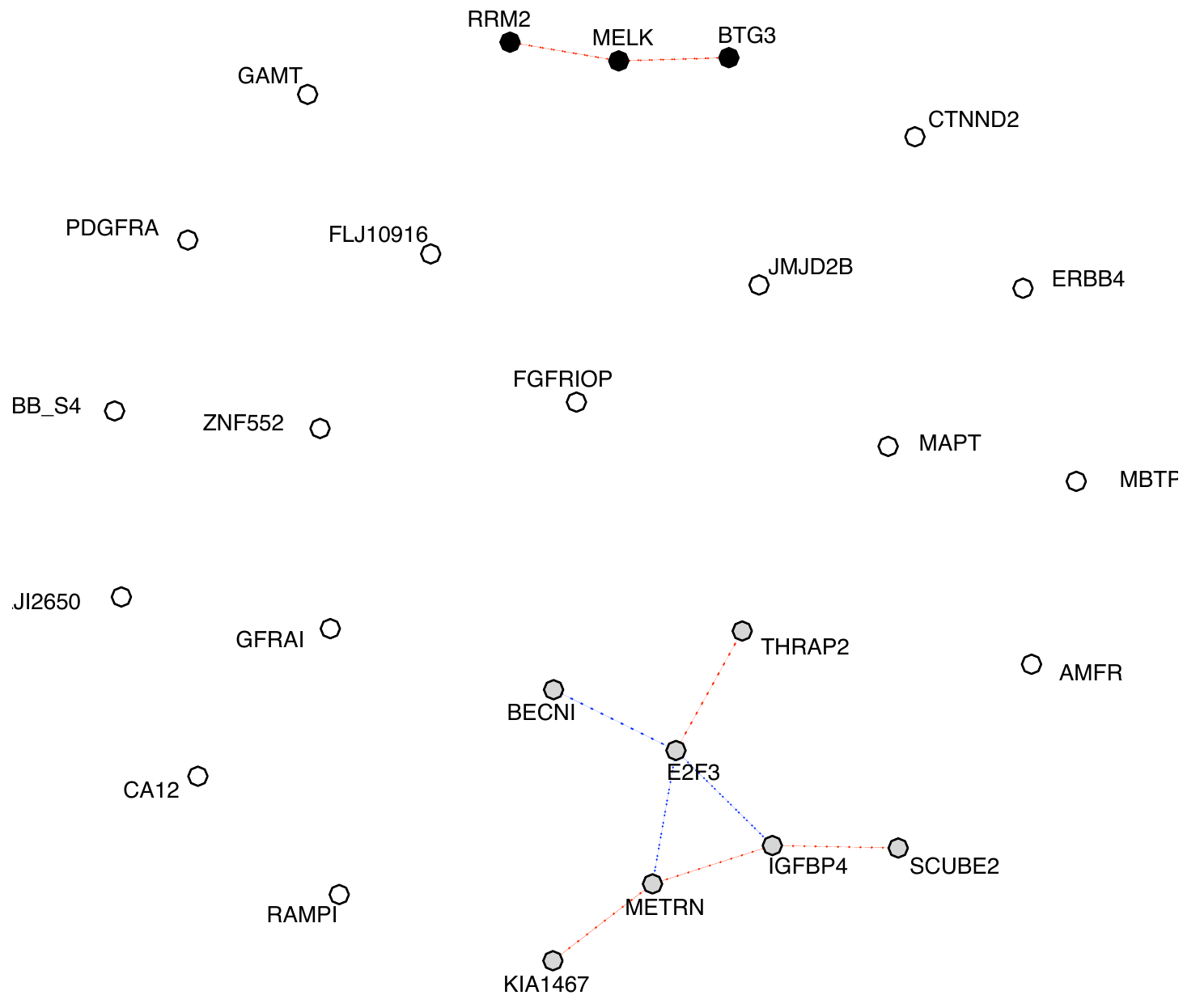} \\  \hline
  \end{tabular}
  \caption{Inferred graphs for three different  penalization's levels.} \label{fig:cancer}
\end{figure}

\begin{table}[htbp!]
  \centering
  \begin{tabular}{ll}
    \hline
    Gene symbol & Gene name \\ 
    \hline
    \hline
    MAPT & Microtubule-associated protein \\
    \hline
    BBS4  & Bardet-Biedl syndrome 4 \\
    \hline
    THRAP2  & Thyroid hormone receptor associated protein 2 \\
    \hline
    MBTP-S1  & Hypothetical protein \\
    \hline
    PDGFRA  & Human clone 23,948 mRNA sequence \\
    \hline
    ZNF552  & Zinc finger protein 552 \\
    \hline
    RAMP1  & Receptor (calcitonin) activity modifying protein 1 \\
    \hline
    BECN1  & Beclin 1 (coiled-coil, myosin-likeBCL2 interacting protein) \\
    \hline
    BTG3  & BTG family, member 3 \\
    \hline
    SCUBE2  & Signal peptide, CUB domain,EGF-like 2 \\
    \hline
    MELK  & Maternal embryonic leucine zipper kinase \\
    \hline
    AMFR  & Autocrine motility factor receptor \\
    \hline
    CTNND2  & Catenin, delta 2  \\
    \hline
    GAMT  & Guanidinoacetate N-methyl transferase \\
    \hline
    CA12  & Carbonic anhydrase XII \\
    \hline
    FGFR1OP  & FGFR1 oncogene partner \\
    \hline
    KIAA1467  & KIAA1467 protein \\
    \hline
    MTRN  & Meteorin, glial cell differentiation regulator \\
    \hline
    FLJ10916  & Hypothetical protein FLJ10916 \\
    \hline
    E2F3  & E2F transcription factor 3 \\
    \hline
    ERBB4  & V-erb-a erythroblastic leukemiaviral oncogene homolog 4(avian) \\
    \hline
    JMJD2B  & Jumonji domain containing 2B \\
    \hline
    RRM2  & Ribonucleotide reductase M2polypeptide \\
    \hline
    FLJ12650  & Hypothetical protein FLJ12650 \\
    \hline
    GFRA1  & GDNF family receptor 1 \\
    \hline
    IGFBP4  & Insulin-like growth factor binding protein 4 \\
  \end{tabular}
  \caption{The key genes that composed the inferred networks.}
  \label{tab:genes_reference}
\end{table}


\bibliography{../../biblio_common}

\appendix 
\section{Appendix section}

\subsection{Proof of the equivalence between the constraints}\label{ap:proof_norm}

When $\left\|\mathbf{U}_{q\ell}\right\|_\infty\leq 1$,  we have
for each couple $(i,j) \in \mathcal{P}^2$,
\begin{equation*}
  \left| \left(\boldsymbol\Sigma  - \mathbf{S}\right)_{ij}\right| = \frac{2}{n}
  \sum_{q,\ell} \left|(\mathbf{U}_{q\ell})_{ij} \cdot
    (\mathbf{T}_{q\ell})_{ij}\right| 
  \leq                     \frac{2}{n}                    \sum_{q,\ell}
  T_{q\ell;ij} = P_{\boldsymbol\tau_i \boldsymbol \tau_j}.
\end{equation*}
Thus   $\left\|\mathbf{U}_{q\ell}\right\|_\infty\leq   1   \Rightarrow
\left\|    (\boldsymbol\Sigma    -    \mathbf{S}\right)    \cdot\slash
\mathbf{P}_{\boldsymbol \tau}\|_\infty \leq 1$.

On  the   other  hand,  assume  that   $\left\|  (\boldsymbol\Sigma  -
  \mathbf{S})\cdot\slash                        \mathbf{P}_{\boldsymbol
    \tau}\right\|_\infty\leq   1$,   that  is,   for   all  $i,j   \in
\mathcal{P}$, we have
\begin{equation*}
  -P_{\boldsymbol\tau_i \boldsymbol \tau_j}\leq    \left(\boldsymbol\Sigma   -\mathbf{S}\right)_{ij}\leq P_{\boldsymbol\tau_i \boldsymbol \tau_j}.
\end{equation*}
This also means that there exists some $\delta_{ij}\in[0,1]$ such that
\begin{align*}
  \left(\boldsymbol\Sigma       -\mathbf{S}\right)_{ij}       &      =
  \delta_{ij}P_{\boldsymbol\tau_i      \boldsymbol      \tau_j}      +
  (1-\delta_{ij})(-P_{\boldsymbol\tau_i    \boldsymbol    \tau_j})   =
  \frac{2}{n}\sum_{q,\ell}(2\delta_{ij}-1) T_{q\ell;ij}.
\end{align*}
We   choose  $\mathbf{U}_{ql}$   such   that  $(\mathbf{U}_{ql})_{ij}=
(2\delta_{ij}-1)$   for  all   $q,\ell\in\mathcal{Q}$.    Then,  since
$\delta_{ij}\in[0,1]$, we have
\begin{equation*}
  -1\leq (\mathbf{U}_{q\ell})_{ij} \leq 1, \qquad \forall i,j \in \mathcal{P},
\end{equation*}
which proves that $\left\| (\boldsymbol\Sigma - \mathbf{S})\cdot\slash
  \mathbf{P}_{\boldsymbol          \tau}\right\|_\infty\leq          1
\Rightarrow\|\mathbf{U}_{q\ell} \|_\infty\leq 1$.

\subsection{Fixed-point study}\label{app:point_fixe}
Let us first  introduce some notation. For any  $i,j \in \mathcal{P}$
and any $q,\ell \in \mathcal{Q}$, consider the random variables
\begin{equation*}
   L_{ijq\ell} = \frac  {|K_{ij}|}{\lambda_{q\ell}} +\log 2\lambda_{q\ell} . 
 \end{equation*}
 Let $u  : \mathbb{R}^{pQ} \rightarrow \mathbb{R}^{pQ}$  be defined by
 its     coordinate     functions    $u=(u_{iq})_{i\in\mathcal{P},q\in
   \mathcal{Q}}$ in the following way
 \begin{multline*}
   \forall a=(a_{iq})_{i\in\mathcal{P},q\in \mathcal{Q}} \in
   \mathbb{R}^{pQ}, \\
   u_{iq}(a)  =   \alpha_q  \exp\Big\{  -\sum_{j\neq   i}  \sum_{\ell}
   a_{j\ell} L_{ijq\ell} \Big\} \\ = \alpha_q \exp\Big\{ -\sum_{j\neq i}
   \sum_{\ell}  a_{j\ell}   \left(  \frac  {|K_{ij}|}{\lambda_{q\ell}}
     +\log 2\lambda_{q\ell}\right) \Big\} ,
\end{multline*}
and   let   $   g=   (g_{iq})_{i\in\mathcal{P},q\in   \mathcal{Q}}   :
\mathbb{R}^{pQ} \rightarrow \mathbb{R}^{pQ}$ satisfy
\begin{equation*}
 \forall a
\in \mathbb{R}^{pQ}, \quad g_{iq}(a) =
 \frac{u_{iq}(a)}{\sum_{\ell}u_{i\ell}(a)}.   
 \end{equation*}
 According to  Proposition~\ref{prop:pointfixe}, the optimal parameter
 $\widehat{\boldsymbol \tau}$ is a fixed-point of $g$.

Now, let 
\begin{equation*}
  \Theta=\Big\{ a= (a_{iq})_{i\in\mathcal{P},q\in \mathcal{Q}} 
\in \mathbb{R}^{pQ} ; \forall i \in \mathcal{P}, q\in
\mathcal{Q}, a_{iq}\in [0,1] \text{ and } \sum_{q}
a_{iq}= 1 \Big\} .
\end{equation*}
We wish to study the fixed-points of $g$ in
$ \Theta$. First, let us note that as $\Theta$ is a compact state
space and as the function $g$ satisfies $g :  \Theta \rightarrow  \Theta$
and is continuous, the existence of a fixed-point of $g$ follows from
Brouwer's Theorem. 

We now restrict our attention to a smaller set than the whole
state space $\Theta$. For any $\varepsilon >0$, let 
\begin{equation*}
  \Theta_{\varepsilon}=\Big\{ a \in \Theta, \forall i \in \mathcal{P}, q\in
\mathcal{Q}, a_{iq}\in [\varepsilon, 1-\varepsilon] \Big\} . 
\end{equation*}
Note that  we do  not claim that  $g :  \Theta_\varepsilon \rightarrow
\Theta_\varepsilon $.  However, the existence of  a fixed-point of  $g$ is
ensured  in $\Theta$  and  if  we assume  $\alpha_q>0$  for any  $q\in
\mathcal{Q}$  (which  is a  reasonable  assumption  if  the number  of
classes $Q$ is not too large),  it can easily  be seen that any fixed-point
satisfies  $a_{iq}>0$,   for  any  $i\in  \mathcal{P}$   and  any  $q\in
\mathcal{Q}$. Thus for sufficiently small $\varepsilon>0$, the fixed-points
of $g$ belong to $\Theta_\varepsilon$.

In order to study the behaviour of $g$ in the vicinity of a fixed-point,
we need to look at some kind of contraction property for $g$. To this
end we introduce a distance $d$ on $\Theta_\varepsilon$ that will
make use of the form of the state space
$\Theta_\varepsilon$. For all $a,b \in \Theta_\varepsilon$, denote by 
$a_i=(a_{iq})_{q\in \mathcal{Q}} \in \mathbb{R}^{Q}$ and
$b_i=(b_{iq})_{q\in \mathcal{Q}} \in \mathbb{R}^{Q}$. Moreover, let 
\begin{equation*}
  d(a,b) = \max_{i\in \mathcal{P}} d_0(a_i,b_i) = \max_{i\in
    \mathcal{P}} \log \left( \frac{\max_{q\in \mathcal{Q}}
      a_{iq}/b_{iq}} {\min_{q\in \mathcal{Q}} a_{iq}/b_{iq}}\right) = \max_{i\in
    \mathcal{P}} \max_{q,\ell \in \mathcal{Q}} \log \left(
    \frac{a_{iq}b_{i\ell}}{b_{iq} a_{i\ell}} \right).
\end{equation*}
It  is   well known  that  $d_0$  is  a   distance  in  $[\varepsilon,
1-\varepsilon]^{Q}$, and it is easy to check that the resulting $d$ is also a
distance in $\Theta_\varepsilon$.

Now,  fix   $a,b  \in  \mathbb{R}^{pQ}$  and   consider  the  distance
$d(g(a),g(b))$. It is easily checked that
\begin{equation*}
  d(g(a),g(b)) = \max_{i\in \mathcal{P}} d_0(g_i(a),g_i(b)) =
  \max_{i\in \mathcal{P}} d_0(u_i(a),u_i(b)) = \max_{i\in \mathcal{P}} d_0(\bar{u}_i(a),\bar{u}_i(b)),
\end{equation*}
where       $\bar       u=       (\bar{u}_i)_{i\in       \mathcal{P}}=
(\bar{u}_{iq})_{i\in\mathcal{P},q\in  \mathcal{Q}}$ is defined  in the
following way
 \begin{multline*}
 \forall a=(a_{iq})_{i\in\mathcal{P},q\in \mathcal{Q}} \in
 \mathbb{R}^{pQ}, \\
 \bar{u}_{iq}(a) =  \exp\Big\{
 \sum_{j\neq i} \sum_{\ell} a_{j\ell} L_{ijq\ell} \Big\} 
  = \exp\Big\{ \sum_{j\neq i} \sum_{\ell} a_{j\ell} \left( \frac
  {|K_{ij}|}{\lambda_{q\ell}} +\log 2\lambda_{q\ell}\right) \Big\} .
\end{multline*}

In    the    following,    fix    $\varepsilon>0$   and    $a,b    \in
\Theta_{\varepsilon}$ and denote by
\begin{equation*}
 \forall i \in \mathcal{P}, \quad c_1^i= \min_{q\in \mathcal{Q}}
 \frac{a_{iq}}{b_{iq}} , \quad  c_2^i= \max_{q\in \mathcal{Q}}
 \frac{a_{iq}}{b_{iq}} .
\end{equation*}
With these notations, we have
\begin{equation}
  \label{eq:distab}
  d(a,b)=\max_{i\in \mathcal{P}}d_0(a_i,b_i)=\max_{i\in \mathcal{P}}\log\left( \frac{c_2^i}{c_1^ i}\right).
\end{equation}

We only consider the affiliation model described in
\eqref{eq:affiliation}. Thus, there are only two different values for $\lambda_{q\ell}$, namely $\lambda_{\text{in}}$ and $\lambda_{\text{out}}$ for intra and extra cluster connectivity. 
\begin{lemma}
  If for any $i,j\in\mathcal{P}, i\neq j$ and any $\lambda \in \{\lambda_{\text{in}}, \lambda_{\text{out}}\}$, we have
  \begin{equation}\label{eq:smallroom}
    0 < \frac{|K_{ij}|}{\lambda} +\log 2\lambda < \frac{\varepsilon}{2(p-1)(1+\varepsilon)} \text{ almost surely,}
  \end{equation}
then the function $g$ satisfies a contraction property on $\Theta_{\varepsilon}$. 
\end{lemma}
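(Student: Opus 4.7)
The plan is to reduce everything to bounding ``double differences'' of the log-coordinates of $\bar u$, and then to combine a logarithmic-to-Euclidean conversion with the constraint \eqref{eq:smallroom}. First I rewrite $d(g(a),g(b))=\max_i d_0(\bar u_i(a),\bar u_i(b))$ in log coordinates: each expression $\log\bar u_{iq}(a)-\log\bar u_{iq'}(a)-\log\bar u_{iq}(b)+\log\bar u_{iq'}(b)$ expands as $\sum_{j\neq i}\sum_\ell (a_{j\ell}-b_{j\ell})(L_{ijq\ell}-L_{ijq'\ell})$. In the affiliation model, a direct case analysis shows that $L_{ijq\ell}-L_{ijq'\ell}$ is nonzero only when $\ell\in\{q,q'\}$ and then equals $\pm(L_{\text{in}}^{ij}-L_{\text{out}}^{ij})$, where $L_\lambda^{ij}=|K_{ij}|/\lambda+\log 2\lambda$. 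The double difference therefore telescopes to $\sum_{j\neq i}(L_{\text{in}}^{ij}-L_{\text{out}}^{ij})\bigl((a_{jq}-b_{jq})-(a_{jq'}-b_{jq'})\bigr)$.

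Taking absolute values and maximising over $q,q'$ gives $d_0(\bar u_i(a),\bar u_i(b))\le \sum_{j\neq i}|L_{\text{in}}^{ij}-L_{\text{out}}^{ij}|\cdot \Delta_j$, where $\Delta_j=\max_{q,q'}|(a_{jq}-b_{jq})-(a_{jq'}-b_{jq'})|$. To control $\Delta_j$ by $d_0(a_j,b_j)$, I would use that the ratios $r_\ell=a_{j\ell}/b_{j\ell}$ lie in an interval $[c_1^j,c_2^j]$ containing $1$ (because both $a_j$ and $b_j$ are probability vectors summing to $1$). Combined with $c_2^j/c_1^j=e^{d_0(a_j,b_j)}$, this forces $|r_\ell-1|\le e^{d_0(a_j,b_j)}-1$, so $|a_{j\ell}-b_{j\ell}|\le (1-\varepsilon)(e^{d_0(a_j,b_j)}-1)$ on $\Theta_\varepsilon$, and therefore $\Delta_j\le 2(1-\varepsilon)(e^{d(a,b)}-1)$.

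It remains to plug in \eqref{eq:smallroom}. Since $L_{\text{in}}^{ij}$ and $L_{\text{out}}^{ij}$ are positive and both bounded by $\varepsilon/(2(p-1)(1+\varepsilon))$, so is their absolute difference; summing the $(p-1)$ terms over $j\neq i$ gives the clean bound $d(g(a),g(b))\le \frac{\varepsilon(1-\varepsilon)}{1+\varepsilon}\bigl(e^{d(a,b)}-1\bigr)$. Since $\varepsilon(1-\varepsilon)/(1+\varepsilon)<1$ and $(e^x-1)/x\to 1$ as $x\to 0$, this yields $d(g(a),g(b))\le \kappa\, d(a,b)$ for some $\kappa<1$ in a neighbourhood of a fixed point; a global contraction on $\Theta_\varepsilon$ then follows from the compactness of $\Theta_\varepsilon$ and the monotonicity of $(e^x-1)/x$.

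The main obstacle will be the conversion in the second paragraph: passing from $d_0$, an intrinsically logarithmic ratio metric, to the Euclidean spread $\Delta_j$. The crucial ingredient is the simplex constraint $c_1^j\le 1\le c_2^j$, together with the uniform separation from the boundary provided by $\Theta_\varepsilon$. Everything else is bookkeeping: the affiliation model allows only two distinct values of $\lambda_{q\ell}$, which is what makes the telescoping in the first paragraph so clean, and \eqref{eq:smallroom} is precisely calibrated to compensate the factor $(p-1)$ produced by the sum over $j$. An alternative and more conceptual route would be to invoke Birkhoff's contraction theorem under the Hilbert projective metric for positive linear operators; this might give a sharper constant but would obscure the combinatorial simplification specific to the two-cluster case.
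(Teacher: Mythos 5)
Your first paragraph is correct and is in fact a cleaner route than the paper's: expressing $d_0(\bar u_i(a),\bar u_i(b))$ as a maximum of double differences of $\log \bar u$ and telescoping over $\ell$ in the affiliation model avoids the paper's explicit computation of $\sup_{b\in\Theta_\varepsilon}\sum_\ell b_{j\ell}L_{ijq\ell}$ and naturally produces $|L_{ij}^{\text{in}}-L_{ij}^{\text{out}}|$ in place of the paper's $L_{ij}^{\text{max}}+L_{ij}^{\text{min}}$. The gap is in your last step. Your chain of bounds ends with $d(g(a),g(b))\le C\,(e^{d(a,b)}-1)$ where $C=\varepsilon(1-\varepsilon)/(1+\varepsilon)$, and you assert that compactness of $\Theta_\varepsilon$ and monotonicity of $(e^x-1)/x$ upgrade this to a global contraction. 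They do not: the best global Lipschitz constant this yields is $C(e^{D}-1)/D$ with $D=\mathrm{diam}(\Theta_\varepsilon)=2\log\bigl((1-\varepsilon)/\varepsilon\bigr)$, namely
\begin{equation*}
C\,\frac{e^{D}-1}{D}=\frac{(1-\varepsilon)(1-2\varepsilon)}{2\varepsilon(1+\varepsilon)\log\bigl((1-\varepsilon)/\varepsilon\bigr)},
\end{equation*}
which grows like $\bigl(2\varepsilon\log(1/\varepsilon)\bigr)^{-1}$ as $\varepsilon\to 0$ and already exceeds $1$ at $\varepsilon=0.1$. Since the lemma is invoked precisely for small $\varepsilon$ (so that $\Theta_\varepsilon$ captures all fixed points of $g$), this is a genuine failure, not a matter of constants. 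Nor does a contraction ``in a neighbourhood of a fixed point'' suffice: the lemma is what delivers uniqueness of the fixed point and convergence of $\widehat{\boldsymbol\tau}^{(s+1)}=g(\widehat{\boldsymbol\tau}^{(s)})$ from an arbitrary initialization, and a local statement around one Brouwer fixed point gives neither.

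The repair is local to your second paragraph. Rather than bounding $|r_\ell-1|\le c_2^j-c_1^j\le e^{d_0(a_j,b_j)}-1$ (which uses only $c_1^j\le 1$), apply the mean value theorem to $x\mapsto e^x$ on $[\log c_1^j,\log c_2^j]$ together with $c_2^j\le(1-\varepsilon)/\varepsilon\le\varepsilon^{-1}$ on $\Theta_\varepsilon$ --- exactly the device used in the paper --- to get $c_2^j-c_1^j\le c_2^j\log(c_2^j/c_1^j)\le\varepsilon^{-1}d_0(a_j,b_j)$. Then $\Delta_j\le 2(1-\varepsilon)\varepsilon^{-1}d(a,b)$, and summing over the $p-1$ indices $j\neq i$ with $0\le|L_{ij}^{\text{in}}-L_{ij}^{\text{out}}|\le\max(L_{ij}^{\text{in}},L_{ij}^{\text{out}})<\varepsilon/\bigl(2(p-1)(1+\varepsilon)\bigr)$ (positivity coming from the left inequality in \eqref{eq:smallroom}) gives $d(g(a),g(b))\le\frac{1-\varepsilon}{1+\varepsilon}\,d(a,b)$ on all of $\Theta_\varepsilon$. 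With this one change your argument goes through and in fact yields a contraction constant bounded away from $1$, whereas the paper's only falls below $1$ by the strictness of \eqref{eq:smallroom}.
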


Before proving the lemma, let us explain the consequences of this result. Consider the function $h_{K}$ defined on $(0,+\infty)$ by 
\begin{equation*}
  h_{K}(\lambda) =  \frac{|K| }{\lambda} +\log 2\lambda.
\end{equation*}
This function first decreases from $+\infty$ to the value $1+\log 2|K|$ on the interval $(0,|K|)$ and then increases from $1+\log 2|K|$ to $+\infty$ on $(|K|,+\infty)$. 

At any step of the algorithm, if the current values $K_{ij}^ {(m)}$ of the concentration matrix are small enough, namely smaller than $1/(2e)\simeq 0.184$ then the functions $h_{K_{ij}^ {(m)}}$ take all the  values between   $1+\log 2|K|<0$ and $+\infty$. Thus,  there is room for choosing  $\lambda_{\text{in}},\lambda_{\text{out}}$ such that \eqref{eq:smallroom} is satisfied. In such a case, the fixed-point we are looking for is unique and the iterative procedure setting $\widehat\tau^{(s+1)}=g(\widehat\tau^{(s)})$ converges.

\begin{proof}
  Using   that  for  any   $j\in  \mathcal{P}$   and  any   $\ell  \in
  \mathcal{Q}$,  we  have  $c_1^j  b_{j\ell}\le  a_{j\ell}  \le  c_2^j
  b_{j\ell}$ and $L_{ijq\ell}>0$, we get
\begin{equation*}
\exp\left( \sum_{j\neq i} c_1^j
    \sum_{\ell} b_{j\ell}L_{ijq\ell}\right) \le \bar{u}_{iq}(a) \le \exp\left( \sum_{j\neq i} c_2^j    \sum_{\ell} b_{j\ell}L_{ijq\ell}\right).
\end{equation*}
Thus, it follows 
\begin{equation}\label{eq:ptfixe_1}
\exp\left( \sum_{j\neq i} (c_1^j-1)
    \sum_{\ell} b_{j\ell}L_{ijq\ell}\right) \le \frac{\bar{u}_{iq}(a)}{\bar{u}_{iq}(b)} \le \exp\left( \sum_{j\neq i} (c_2^j-1)    \sum_{\ell} b_{j\ell}L_{ijq\ell}\right).
\end{equation}
In the case of the affiliation model, for fixed $i,j \in
\mathcal{P}$ and $q\in \mathcal{Q}$, the set of random
variables $\{L_{ijq\ell}\}_{\ell \in \mathcal{Q}}$ is reduced to only
two random values, namely
\begin{equation*}
  L_{ij}^{\text{in}} =  \frac  {|K_{ij}|}{\lambda_{\text{in}}} +\log
  2\lambda_{\text{in}} , \quad   L_{ij}^{\text{out}} =  \frac  {|K_{ij}|}{\lambda_{\text{out}}} +\log  2\lambda_{\text{out}} .
\end{equation*}
For the sake of  simplicity, we assume $Q=2$ groups  (our arguments may be
easily  generalized   to  $3$  groups  or  more).    Now,  denoting  $
L_{ij}^{\text{max}}=  \max(  L_{ij}^{\text{in}}, L_{ij}^{\text{out}})$
and      $     L_{ij}^{\text{min}}=      \min(     L_{ij}^{\text{in}},
L_{ij}^{\text{out}})$,  it  can  easily  be seen  that  (for  $\varepsilon
<1/2$),
\begin{eqnarray*}
  \sup_{b\in \Theta_{\varepsilon}}  \sum_{\ell} b_{j\ell}L_{ijq\ell} &=&
  (1-\varepsilon) L_{ij}^{\text{max}} + \varepsilon
  L_{ij}^{\text{min}} \\
\inf_{b\in \Theta_{\varepsilon}}  \sum_{\ell} b_{j\ell}L_{ijq\ell} &=&
  (1-\varepsilon) L_{ij}^{\text{min}} + \varepsilon
  L_{ij}^{\text{max}}, 
\end{eqnarray*}
almost surely. Note  that if we have $Q\ge  3$ groups, explicit bounds
can  also be  obtained (their  expression is  only slightly more
complicated).  Coming back to \eqref{eq:ptfixe_1}, we get
\begin{multline*}
  \exp\left( \sum_{j\neq i} (c_1^j-1)\{ (1-\varepsilon)
    L_{ij}^{\text{min}} + \varepsilon L_{ij}^{\text{max}}\} \right) \\
  \le \frac{\bar{u}_{iq}(a)}{\bar{u}_{iq}(b)} \le \exp\left( \sum_{j\neq
      i} (c_2^j-1) \{ (1-\varepsilon) L_{ij}^{\text{max}} + \varepsilon
    L_{ij}^{\text{min}}\} \right).
\end{multline*}
This leads to 
\begin{multline*}
  d_0(\bar{u}_i(a),\bar{u}_i(b)) = \log \frac {\max_{q\in \mathcal{Q}}
    \bar{u}_{iq}(a)/\bar{u}_{iq}(b)}{\min_{q\in \mathcal{Q}}
    \bar{u}_{iq}(a)/\bar{u}_{iq}(b)} \\
  \le  \sum_{j\neq
    i} (c_2^j-1) \{ (1-\varepsilon) L_{ij}^{\text{max}} + \varepsilon
  L_{ij}^{\text{min}}\} -\sum_{j\neq
    i} (c_1^j-1) \{ (1-\varepsilon) L_{ij}^{\text{min}} + \varepsilon
  L_{ij}^{\text{max}}\} \\
  \le  \sum_{j\neq
    i}  L_{ij}^{\text{max}}\{c_2^j-1-\varepsilon(c_2^j+c_1^j-2)\}  + 
  L_{ij}^{\text{min}}\{1-c_1^j+\varepsilon(c_2^j+c_1^j-2)\} .
\end{multline*}
Finally,             recall             that            $d(g(a),g(b))=
\max_{i}d_0(\bar{u}_i(a),\bar{u}_i(b))$, leading to
\begin{multline*}
  d(g(a),g(b))           \le          \max_{i\in          \mathcal{P}}
  \left\{     \Big(c_2^i-1-\varepsilon(c_2^i+c_1^i-2)    \Big)    \vee
    \Big(1-c_1^i+\varepsilon(c_2^i+c_1^i-2)\Big) \right\} \\ \times
  \max_{i\in  \mathcal{P}}  \sum_{ j\neq  i}  ( L_{ij}^{\text{max}}  +
  L_{ij}^{\text{min}} ).
\end{multline*}
Now, using the  inverse triangle inequality, and the  fact that $c_1^i
\le 1\le c_2^i$, we get for any $i\in \mathcal{P}$,
\begin{equation*}
|c_2^i+c_1^i -2| = \big|  |c_2^i-1|-|1-c_1^i| \big| \le |c_2^i-c_1^i| = c_2^i-c_1^i.
\end{equation*}
Moreover, we have $0\le c_2^i-1\le c_2^i -c_1^i$ and $0 \le 1-c_1^i\le c_2^i -c_1^i$. This leads to 
\begin{multline}\label{eq:step3}
 d(g(a),g(b))  \le  (1+\varepsilon) \max_{i\in \mathcal{P}} (c_2^i- c_1^i) \times \max_{i\in \mathcal{P}}\sum_{j\neq i} ( L_{ij}^{\text{max}}  + 
  L_{ij}^{\text{min}} ) \\
\le (1+\varepsilon) \max_{i\in \mathcal{P}} (c_2^i- c_1^i) \times 2(p-1)\max_{j\neq i}  L_{ij}^{\text{max}} .
\end{multline}
Since $a$ and $b$ belong to $ \Theta_\varepsilon$, we get that $c_1^i, c_2^i \in [\varepsilon, \varepsilon ^{-1}]$ and thus
\begin{equation*}
  c_2^i- c_1^i = \exp(\log c_2^i)-\exp(\log c_1^i)\le \frac {1} {\varepsilon} \log\left( \frac{c_2^i}{c_1^i}\right).
\end{equation*}
In particular, recalling \eqref{eq:distab}, we have 
\begin{equation*}
  0\le \max_{i\in \mathcal{P}} c_2^i- c_1^i \le \frac 1 \varepsilon d(a,b).
\end{equation*}
Coming back to \eqref{eq:step3}, we get
\begin{equation}\label{eq:contraction}
  d(g(a),g(b))  \le  (1+\varepsilon^{-1}) 2(p-1) \Big(\max_{j\neq i}  L_{ij}^{\text{max}}  \Big) d(a,b).
\end{equation}
Now, under  assumption \eqref{eq:smallroom} the  multiplicative random
factor $(1+\varepsilon^{-1}) 2(p-1) \max_{j\neq i} L_{ij}^{\text{max}}
$ is strictly smaller than $1$.
\end{proof}

\subsection{Proof  of  Lemma~\ref{lem:coord_descent_2} (\textsc{Lasso}
  with pathwise coordinate optimization)}\label{ap:pathwise}

The following is partly  based on \cite{2007_AAS_Friedman}.  There are
various algorithms for solving the \textsc{Lasso} problem.  When there
is just one predictor, the  \textsc{Lasso} solution is simply given by
soft-thresholding \citep{1995_JASA_Donoho}. The  approach used here is
based on iterative soft-thresholding with a ``partial residual'' as a
response variable.
\\

The usual formulation of the \textsc{Lasso} problem is the minimization
with respect to $\boldsymbol \beta$ of the quantity
\begin{equation}
  \label{eq:pbm_lasso}
  \frac  1 2  \sum_{i=1}^n  \left(y_i -\sum_{j=1}^p  x_{ij} \beta_j  \right)^2
  +\rho \|\boldsymbol\beta\|_{\ell_1},
\end{equation}
where   $(y_i)_{i=1,\dots,n}$   is    a   vector   of   response   and
$(x_{ij})_{i=1,\dots,n;j=1,\dots,p}$  a matrix  of predictors  such that 
$\sum_i  x_{ij} =0$,  with  no  loss of  generality.   
Using a coordinate-descent approach, we   simply   write  the   problem
\eqref{eq:pbm_lasso} in the form
\begin{equation*}
  \frac  1  2  \sum_{i=1}^n   \left(y_i  -\sum_{k\neq  j}  x_{ik}  \beta_k  -
    x_{ij}\beta_j \right)^2 +\rho \sum_{k\neq j} |\beta_k| +\rho |\beta_j|
\end{equation*}
and  minimizing this function  with respect  to $\beta_j$  will lead  to the
solution
\begin{equation*}
  \beta_j(\rho) =S\left(\sum_{i=1}^n x_{ij} (y_i-\tilde y_i^{(j)}),\rho\right) N_j^{-2}, 
\end{equation*}
where  $\tilde  y_i^{(j)}  =\sum_{k\neq  j}  x_{ik}\beta_k(\rho)$,  the normalizing term $N_j^2 $ satisfies $ N_j^2 =\sum_{i=1}^n   x_{ij}^2$ and   the   function    $S(x,\rho)   = \mathrm{sgn}(x)(|x|-\rho)_+$ is the soft-thresholding operator. 

This leads to  an iterative procedure, repeated on  each coordinate of
$\boldsymbol\beta$ until stabilization of  the full vector.  Note that
as   each   coordinate-wise   solution   is   unique,   results   from
\citet[Theorem 4.1]{2001_JOTA_Tseng} imply that the procedure converges.
\\

Now,  we   want  to   apply  this  approach   to  solve   the  problem
\eqref{eq:M_step_lasso}, which can be written
\begin{equation}\label{eq:lasso_friedman}
  \min_{\boldsymbol\beta}              \frac{1}{2}             \left\|
    \frac{1}{\sqrt{2}}\widehat{\boldsymbol\Sigma}_{11}^{1/2}\boldsymbol\beta
    -    \sqrt{2}\widehat{\boldsymbol\Sigma}_{11}^{-1/2}\mathbf{s}_{12}
  \right\|_2^2    +    \left\|    \mathbf{p}_{12}\star\boldsymbol\beta
  \right\|_{\ell_1}.
\end{equation}
From the  previous lines,  the  solution for  $j$th entry  of
$\boldsymbol\beta$ is
\begin{equation*}
  \beta_j(\mathbf{p}_{12}) = S\left(\sum_i (\widehat{\boldsymbol\Sigma}_{11}^{1/2})_{ij}
    \left( (\widehat{\boldsymbol\Sigma}_{11}^{-1/2}\mathbf{s}_{12})_{i}-\frac{1}{2}\sum_{k\neq
        j} 
      (\widehat{\boldsymbol\Sigma}_{11}^{1/2})_{ik}
      \beta_k(\mathbf{p}_{12})   \right),   (\mathbf{p}_{12})_j\right)
  N_j^{-2}.
\end{equation*}
Then, using the symmetry of the matrices, it is easy to see that
\begin{eqnarray*}
  & \sum_i (\widehat{\boldsymbol\Sigma}_{11}^{1/2})_{ij}  (\widehat{\boldsymbol\Sigma}_{11}^{-1/2} \mathbf{s}_{12})_i = \sum_\ell
  (\widehat{\boldsymbol\Sigma}_{11}^{1/2}\widehat{\boldsymbol\Sigma}_{11}^{-1/2})_{j\ell} (\mathbf{s}_{12})_{\ell} = (\mathbf{s}_{12})_{j} , \\
  &    \sum_i   (\widehat{\boldsymbol\Sigma}_{11}^{1/2})_{ij}\sum_{k\neq    j}   (\widehat{\boldsymbol\Sigma}_{11}^{1/2})_{ik}
  \beta_k(\mathbf{p}_{12}) = \sum_{k\neq j} (\widehat{\boldsymbol\Sigma}_{11})_{jk} \beta_k(\mathbf{p}_{12}) , \\
  & N_j^2 = \sum_i \left(\frac{(\widehat{\boldsymbol\Sigma}_{11}^{1/2})_{ij}}{\sqrt{2}}\right)^2 =  (\widehat{\boldsymbol\Sigma}_{11}/2)_{jj}.
\end{eqnarray*}
Finally,  the  solution  to  \eqref{eq:M_step_lasso}  is  computed  by
updating the $j$th coordinate of $\boldsymbol\beta$ via
\begin{equation*}
  \beta_j(\mathbf{p}_{12}) = 2 S\left((\mathbf{s}_{12})_j - \frac{1}{2} \sum_{k\neq j} (\widehat{\boldsymbol\Sigma}_{11})_{jk}
    \beta_k(\mathbf{p}_{12}) ; (\mathbf{p}_{12})_j \right) / (\widehat{\boldsymbol\Sigma}_{11})_{jj},
\end{equation*}
and permuting the rows of $\widehat{\boldsymbol\Sigma}$ until convergence.

\subsection{Reconstruction of the concentration matrix}\label{ap:inversion}

At  the  end  of  the  block-wise  resolution  algorithm,  a  solution
$\widehat{\boldsymbol\Sigma}  $  is available.   In  order to  recover
$\widehat{\mathbf{K}}$,    we    simply     use    the    fact    that
$\widehat{\boldsymbol\Sigma}\widehat{\mathbf{K}}  = I$.  Block-wisely,
we get
\begin{align*}
  \widehat{\mathbf{K}}_{12} & = -\widehat{\boldsymbol\Sigma}_{11}^{-1} \widehat{\boldsymbol\sigma}_{12} K_{22} = - K_{22} \widehat {\boldsymbol\beta}/2, \\
  \widehat{K}_{22}    &   =    1/(\widehat{\boldsymbol\sigma}_{12}   -
  \widehat{\boldsymbol\sigma}_{12}^\intercal
  \widehat{\boldsymbol\Sigma}_{11}^{-1}
  \widehat{\boldsymbol\sigma}_{12})                                   =
  1/(\widehat{\boldsymbol\sigma}_{12}                                 -
  \widehat{\boldsymbol\sigma}_{12}^\intercal                   \widehat
  {\boldsymbol\beta}/2),
\end{align*}
thanks   to   the   fact  that   $\widehat{\boldsymbol\sigma}_{12}   =
\widehat{\boldsymbol\Sigma}_{11}\widehat{\boldsymbol\beta}/2$.

To perform this  inversion, note that we need  to stock the successive
solutions $\widehat  {\boldsymbol\beta}$ of the  penalized regressions
along the algorithm.

\subsection{Pseudo-likelihood of a Gaussian vector}\label{ap:pseudo-lik}

It is well known that the  distribution of $X_i^k $ conditional on the
remaining variables  $X_{\backslash i}^k$ is  Gaussian with parameters
$(\mu_i^k,\sigma_i)$ given by
\begin{equation}
  \label{eq:mui_sigmai}
  \mu_i^k = {\boldsymbol \Sigma}_{i\backslash i} {\boldsymbol \Sigma}^{-1}_{\backslash i \backslash i} X_{\backslash i}^k, \qquad \sigma_i
  = {\Sigma}_{ii} - {\boldsymbol \Sigma}_{i \backslash i} {\boldsymbol \Sigma}^{-1}_{\backslash i \backslash i} {\boldsymbol \Sigma}^\intercal_{i \backslash i}. 
\end{equation}
Denoting  ${\boldsymbol  m}_i= (\mu_i^1,\dots,\mu_i^n)^\intercal$,  we
get
\begin{equation*}
 \log \tilde{\mathcal{L}}({\mathbf X};{\boldsymbol K}) = -\frac{n}{2} \sum_{i=1}^p \log \sigma_i
  - \sum_{i=1}^p \frac{1}{2\sigma_i}({\mathbf X}_i- {\boldsymbol m}_i)^\intercal ({\mathbf X}_i- {\boldsymbol m}_i) +c.
\end{equation*}
It  is easy to  see that  ${\boldsymbol m}_i^\intercal  = {\boldsymbol
  \Sigma}_{i  \backslash   i}{\boldsymbol  \Sigma}^{-1}_{\backslash  i
  \backslash i} {\mathbf X}_{\backslash i}^\intercal$. Then,
\begin{multline*}
 \log  \tilde{\mathcal{L}}({\mathbf X};{\boldsymbol K}) = -\frac{n}{2} \sum_{i=1}^p \log \sigma_i \\
  -   \sum_{i=1}^p   \frac{1}{2\sigma_i}({\mathbf X}_i^\intercal   {\mathbf X}_i   -   2
  {\mathbf X}_i^\intercal  {\mathbf X}_{\backslash  i}  {\boldsymbol \Sigma}^{-1}_{\backslash  i \backslash  i}  {\boldsymbol \Sigma}_{i  \backslash
    i}^\intercal  +  {\boldsymbol \Sigma}_{i \backslash  i}{\boldsymbol \Sigma}^{-1}_{\backslash  i  \backslash i}  {\mathbf X}_{\backslash
    i}^\intercal  {\mathbf X}_{\backslash  i}  {\boldsymbol \Sigma}^{-1}_{\backslash  i \backslash  i}  {\boldsymbol \Sigma}_{i  \backslash
    i}^\intercal) +c.
\end{multline*}
Note that we have $n^{-1} {\mathbf X}_i^\intercal  {\mathbf X}_i = S_{ii}$, as well as 
$n^{-1}  {\mathbf X}_i^\intercal  {\mathbf X}_{\backslash i}  =  \mathbf{S}_{i\backslash  i}$  and  $n^{-1}  {\mathbf X}_{\backslash   i}^\intercal {\mathbf X}_{\backslash i} = \mathbf{S}_{\backslash i\backslash i}$. Thus, 
\begin{multline}
  \label{eq:logL_vec3}
 \log \tilde{\mathcal{L}}({\mathbf X};{\boldsymbol K}) = - \frac{n}{2} \sum_{i=1}^p \log  \sigma_i \\
  -  n   \sum_{i=1}^p  \frac{1}{2\sigma_i}(S_{ii}  -  2   \mathbf{S}_{i  \backslash  i}
  {\boldsymbol \Sigma}^{-1}_{\backslash  i \backslash i}  {\boldsymbol \Sigma}_{i \backslash  i}^\intercal +  {\boldsymbol \Sigma}_{i \backslash
    i}{\boldsymbol \Sigma}^{-1}_{\backslash i  \backslash i} \mathbf{S}_{\backslash i  \backslash i} {\boldsymbol \Sigma}^{-1}_{\backslash  i \backslash i}
  {\boldsymbol \Sigma}_{i \backslash i}^\intercal) +c.
\end{multline}
Recalling  that  $\mathbf{K}=\mathbf{{\boldsymbol  \Sigma}}^{-1}$, and  by
reordering the  rows and columns of  the matrices, as well  as using a
block-wise notation, this becomes
\begin{equation*}
    \begin{bmatrix}
    { \Sigma}_{ii} & {\boldsymbol \Sigma}_{i \backslash i}\\
    {\boldsymbol \Sigma}_{i \backslash i}^\intercal & {\boldsymbol \Sigma}_{\backslash i\backslash i}\\
  \end{bmatrix}
  \times
  \begin{bmatrix}
   K_{ii} & \mathbf{K}_{i \backslash i}\\
    \mathbf{K}_{i \backslash i}^\intercal & \mathbf{K}_{\backslash i\backslash i}\\
  \end{bmatrix}
  =
  \begin{bmatrix}
    1 & 0\\
    0 & I_{p-1}\\
  \end{bmatrix},
\end{equation*}
where $I_{p-1}$ is the identity matrix with size $p-1$. In particular,
this leads to  the identity ${ \Sigma}_{ii} K_{ii}  = 1 - {\boldsymbol
  \Sigma}_{i      \backslash      i}     \mathbf{K}_{i      \backslash
  i}^\intercal$. Thus
\begin{equation}
  \label{eq:identity_1}
  { \Sigma}_{ii}   =   (1-   {\boldsymbol \Sigma}_{i   \backslash   i}   \mathbf{K}_{i   \backslash
    i}^\intercal)/K_{ii}. 
\end{equation}
In   the   same   way,   we   can  easily   get   that   ${\boldsymbol
  \Sigma}_{i\backslash   i}^\intercal    K_{ii}   =   -   {\boldsymbol
  \Sigma}_{\backslash   i    \backslash   i}\mathbf{K}_{i   \backslash
  i}^\intercal$ and
\begin{equation}
  \label{eq:identity_2}
  {\boldsymbol \Sigma}_{\backslash i \backslash  i}^{-1}{\boldsymbol \Sigma}_{i\backslash i}^\intercal = -\mathbf{K}_{i
  \backslash i}^\intercal/K_{ii}. 
\end{equation}
Using  identities   \eqref{eq:identity_1},  \eqref{eq:identity_2}  and
\eqref{eq:mui_sigmai}, we obtain
\begin{equation}
  \label{eq:sigmai}
  \sigma_i   =    (1-   {\boldsymbol \Sigma}_{i   \backslash    i}   \mathbf{K}_{i   \backslash    i}^\intercal)/K_{ii} + {\boldsymbol \Sigma}_{i\backslash
    i}   \mathbf{K}_{i  \backslash  i}^\intercal   /K_{ii}  =
  1/K_{ii} .
\end{equation}
Now,  coming back  to  \eqref{eq:logL_vec3} and  using the  identities
\eqref{eq:identity_1}, \eqref{eq:identity_2} and \eqref{eq:sigmai}, we
finally obtain the desired result
\begin{equation*}
 \log \tilde{\mathcal{L}}(\mathbf X;\mathbf{K}) =
  \frac{n}{2} \sum_{i=1}^p \log K_{ii} - n \sum_{i=1}^p \left(
  \frac{K_{ii}}{2} S_{ii} + \mathbf{S}_{i \backslash i} \mathbf{K}_{i \backslash i} +
  \frac{1}{2K_{ii}}\mathbf{K}_{i  \backslash  i}\mathbf{S}_{\backslash  i  \backslash  i}
  \mathbf{K}_{i \backslash i}^\intercal\right) +c.
\end{equation*}

\subsection{Penalization upper bound \label{sec:bound}}

The  following  lemma  states  that  if  the  penalization  parameters
$\lambda_{q\ell}^{-1}$  and $\lambda_0^{-1}$  are chosen  large enough
(according to the observations), then the penalized estimator obtained
from the \textsc{Lasso}-like iteration step has null entries.
 
\begin{lemma}
  If for any 
  $i,j \in \mathcal{P}$ we have
  \begin{equation}\label{eq:lambda_large}
    \sum_{q,\ell} \frac{Z_{iq}Z_{j\ell}}{\lambda_{q\ell}} \geq \frac n
    2 |S_{ij}| , \text{ when }   i\neq j \quad \text{ and } \quad  \frac 1
    {\lambda_0} \geq  \frac n  2 |S_{ii}| ,
  \end{equation}
  then the solution $\widehat {\boldsymbol \Sigma}= \widehat {\mathbf
  K}^{-1}$ of problem \eqref{eq:M_step} satisfies
$  \widehat {\mathbf K}^{-1}= 0$ .
\end{lemma}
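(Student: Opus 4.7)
The plan is to exhibit an explicit candidate for $\widehat{\mathbf{K}}$ and verify it satisfies the KKT (sub-gradient) conditions for the maximization problem \eqref{eq:M_step}. Since the objective in \eqref{eq:M_step} is strictly concave on $\{\mathbf{K}\succ 0\}$ (the log-det is strictly concave, the trace is linear, and the $\ell_1$ penalty is convex), the sub-gradient conditions are both necessary and sufficient for the unique maximizer. So it will be enough to show that a diagonal candidate satisfies them. (I interpret the conclusion of the lemma as saying $\widehat{K}_{ij}=0$ for $i\ne j$, since the literal $\widehat{\mathbf{K}}^{-1}=0$ cannot be meant.)

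First I would propose the candidate: $\widehat{\mathbf{K}}$ diagonal with $\widehat{K}_{ii}^{-1}=S_{ii}+2/(n\lambda_0)$, as suggested by Remark~\ref{rem:diagonal_solutions}. Then $\widehat{\boldsymbol{\Sigma}}=\widehat{\mathbf{K}}^{-1}$ is also diagonal, so $\widehat{\Sigma}_{ij}=0$ whenever $i\ne j$. Positive-definiteness holds because $S_{ii}\ge 0$ and $\lambda_0>0$, so each diagonal entry is strictly positive.

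Next I would write out the component-wise sub-gradient equations obtained by differentiating
$\frac{n}{2}(\log\det\mathbf{K}-\mathrm{Tr}(\mathbf{S}\mathbf{K}))-\|\boldsymbol\rho_{\mathbf Z}(\mathbf{K})\|_{\ell_1}$
with respect to $K_{ij}$. Using $\partial \log\det\mathbf{K}/\partial K_{ij}=(\mathbf{K}^{-1})_{ij}=\Sigma_{ij}$ (after symmetrization), the KKT condition for $i\ne j$ reads
\begin{equation*}
\frac{n}{2}\bigl(\widehat{\Sigma}_{ij}-S_{ij}\bigr)=\left(\sum_{q,\ell}\frac{Z_{iq}Z_{j\ell}}{\lambda_{q\ell}}\right)\nu_{ij},\quad \nu_{ij}\in\partial|K_{ij}|,
\end{equation*}
and the diagonal condition is
$\tfrac{n}{2}(\widehat{\Sigma}_{ii}-S_{ii})=\mathrm{sgn}(\widehat{K}_{ii})/\lambda_0$, already matched by construction (and used in Remark~\ref{rem:diagonal_solutions}).

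For the off-diagonal check, $\widehat{\Sigma}_{ij}=0$, so I need some $\nu_{ij}\in[-1,1]$ with $-\frac{n}{2}S_{ij}=(\sum_{q,\ell}Z_{iq}Z_{j\ell}/\lambda_{q\ell})\nu_{ij}$. Such a $\nu_{ij}$ exists precisely when $\frac{n}{2}|S_{ij}|\le \sum_{q,\ell}Z_{iq}Z_{j\ell}/\lambda_{q\ell}$, which is exactly hypothesis \eqref{eq:lambda_large}. Concavity then gives uniqueness, so the diagonal candidate is $\widehat{\mathbf{K}}$ and all off-diagonal entries vanish. The only subtlety (the main obstacle, such as it is) is being careful with the symmetrization when differentiating, with the fact that $S_{ii}\ge 0$ forces $\widehat{K}_{ii}>0$ so that $\mathrm{sgn}(\widehat K_{ii})=1$ is legitimate in the diagonal KKT equation, and with quoting sufficiency of KKT for non-smooth concave maximization under the open constraint $\mathbf{K}\succ 0$ (satisfied strictly at the candidate).
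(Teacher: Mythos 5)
Your proof is correct and follows essentially the same route as the paper's: both verify the sub-gradient (KKT) optimality conditions for problem \eqref{eq:M_step} at a candidate with vanishing off-diagonal entries and then invoke uniqueness of the maximizer of the strictly concave objective. If anything, yours is the more careful version: the paper's stated certificate ``$\widehat{\mathbf K}^{-1}=0$'' is not a legitimate (invertible, positive definite) matrix, whereas your diagonal candidate with $\widehat{\Sigma}_{ii}=S_{ii}+2/(n\lambda_0)$ is strictly positive definite, satisfies the diagonal stationarity equation exactly with $\mathrm{sgn}(\widehat K_{ii})=1$, and reduces the off-diagonal stationarity conditions to precisely hypothesis \eqref{eq:lambda_large}.
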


\begin{proof}
  The sub-gradient equation arising from \eqref{eq:M_step} gives
\begin{multline*}
  \forall    i\neq    j,    \quad    \frac    n    2    \left(\widehat
    K_{ij}^{-1}-S_{ij}\right) -\left(\sum_{q,\ell}
    \frac{Z_{iq}Z_{j\ell}}{\lambda_{q\ell}} \right)\nu_{ij}=0 \\
  \text{and  }   \forall  i  \in   \mathcal{P}  ,  \quad  \frac   n  2
  \left(\widehat              K_{ii}^{-1}-S_{ii}\right)              -
  \frac{1}{\lambda_{0}}\nu_{ii}=0,
\end{multline*}
where    $\nu_{ij}\in   \textrm{sgn}(\widehat   K_{ij})$    and   thus
$\nu_{ij}\in [-1,1]$. In particular, we have
\begin{equation*}
  \forall i\neq j, \quad  \frac n 2 \left|\widehat
    K_{ij}^{-1}-S_{ij}\right| \le \left(\sum_{q,\ell}
    \frac{Z_{iq}Z_{j\ell}}{\lambda_{q\ell}} \right) 
  \ \text{and} \ \forall i \in \mathcal{P}, \ \frac n 2 \left|\widehat
    K_{ii}^{-1}-S_{ij}\right| \le  \frac{1}{\lambda_{0}}  .
\end{equation*}
Now, if the set of penalty parameters satisfies the constraint
\eqref{eq:lambda_large}, then the matrix $\mathbf K^{-1}=0$  satisfies the
sub-gradient equation. Thus, the conclusion comes from uniqueness of
the solution to \eqref{eq:M_step}.

\end{proof}


\end{document}